\definecolor{DarkGray}{rgb}{0.66, 0.66, 0.66}
\definecolor{DarkPowderBlue}{rgb}{0.0, 0.2, 0.6}
\definecolor{fluorescentyellow}{rgb}{0.8, 1.0, 0.0}
\definecolor{ForestGreen}{rgb}{0.1333,0.5451,0.1333}
\definecolor{DarkRed}{rgb}{0.65,0,0}
\Crefname{algocf}{Algorithm}{Algorithms}
\crefname{algocfline}{line}{lines}
\Crefname{invariant}{Invariant}{Invariants}
\Crefname{claims}{Claim}{Claims}
\newcommand{\alert}[1]{{\color{red}#1}}
\newcommand{\agnote}[1]{\refstepcounter{note}$\ll${\bf Anupam~\thenote:}
  {\sf \color{blue} #1}$\gg$\marginpar{\tiny\bf AG~\thenote}}
\newcommand{\elnote}[1]{\refstepcounter{note}$\ll${\bf E~\thenote:}
  {\sf \color{gray} #1}$\gg$\marginpar{\tiny\bf EL~\thenote}}
\newcommand{\spnote}[1]{\refstepcounter{note}$\ll${\bf S~\thenote:}
  {\sf \color{magenta} #1}$\gg$\marginpar{\tiny\bf SP~\thenote}}
\newcommand{\dpnote}[1]{\refstepcounter{note}$\ll${\bf Debmalya~\thenote:}
  {\sf \color{magenta} #1}$\gg$\marginpar{\tiny\bf DP~\thenote}}  
\newcommand{\tdnote}[1]
{\refstepcounter{note}$\ll${\bf Tommaso~\thenote:}
  {\sf \color{magenta} #1}$\gg$\marginpar{\tiny\bf TD~\thenote}} 
\newcommand{\alert}[1]{}
\newcommand{\agnote}[1]{}
\newcommand{\elnote}[1]{}
\newcommand{\dpnote}[1]{}
\newcommand{\tdnote}[1]{}
\newcommand{\spnote}[1]{}
\newcommand{\initOneLiners}{%
    \setlength{\itemsep}{0pt}
    \setlength{\parsep }{0pt}
    \setlength{\topsep }{0pt}
}
  \def\\{}%
  \def\texttt#1{<#1>}%
  \def\textsf#1{<#1>}%
  \def\mathsf#1{<#1>}%
  \def\ensuremath#1{#1}%
  \def\xspace{}%
  \def\Cref#1{<Label:#1>}%
  \def\eqref#1{<Eq.:#1>}%
\newtheorem{theorem}{Theorem}[section]
\newtheorem{lemma}[theorem]{Lemma}
\newtheorem{claims}[theorem]{Claim}
\newtheorem{corollary}[theorem]{Corollary}
\newtheorem{remark}[theorem]{Remark}
\newtheorem{definition}[theorem]{Definition}
\newcommand{\eat}[1]{}
\newcommand{\sse}{\subseteq}
\newcommand{\calC}{\mathcal{C}}
\newcommand{\calF}{\mathcal{F}}
\newcommand{\calP}{\mathcal{P}}
\newcommand{\calS}{\mathcal{S}}
\newcommand{\poly}{\operatorname{poly}}
\renewcommand{\emptyset}{\varnothing}
\newcommand{\cost}{cost}
\newcommand{\shortbar}{\text{\fontfamily{times}\selectfont-}}
\newcommand{\SymRel}{{\rm SymRel}}
\newcommand{\SymLang}{{\rm SymLang\shortbar N}}
\newcommand{\AND}{{\rm AND}}
\newcommand{\SAT}{{\rm SAT}}
\newcommand{\ALLEQ}{{\rm AE}}
\newcommand{\LE}{\le1{\rm \shortbar out \shortbar of \shortbar}r{\rm SAT}}
\newcommand{\PMC}{\text{\sc Paired\ Minimum}\ st\text{\sc \shortbar Cut}}
\newcommand{\CSP}{%
  \ifmmode
    \operatorname{\textsc{CSP}}%
  \else
    \textsc{CSP}%
  \fi
}
\newcommand{\MaxCSP}{\textsc{MaxCSP}}
\newcommand{\MinCSP}{%
  \ifmmode
    \operatorname{\textsc{MinCSP}}%
  \else
    \textsc{MinCSP}%
  \fi
}
\newcommand{\MinCSPwP}{%
  \ifmmode
    \operatorname{
    \textsc{ImproveMaxCSP}
    }%
  \else
    \textsc{ImproveMaxCSP}%
  \fi
}
\newcommand{\CUT}{%
  \ifmmode
    \operatorname{%
      \textsc{ImproveMaxCut}\allowbreak\shortbar\allowbreak\textsc{Uncut}
    }%
  \else
    \textsc{ImproveMaxCut\shortbar{}Uncut}%
  \fi\xspace
}
\newcommand{\CUTTERMINAL}{%
  \ifmmode
    \operatorname{%
      \textsc{ImproveMaxCut}\allowbreak-\allowbreak\textsc{Uncut}\allowbreak\textsc{Terminal}\allowbreak\textsc{Version}%
    }%
  \else
    \textsc{ImproveMaxCut\shortbar{}Uncut Terminal Version}%
  \fi\xspace
}
\newcommand{\MISVW}{%
  \ifmmode
    \operatorname{%
      \textsc{Maximum}\allowbreak\ \textsc{Induced}\allowbreak\ \textsc{Subhypergraph}\allowbreak\ \textsc{with}\allowbreak\ \textsc{Vertex}\allowbreak\ \textsc{Weights}%
    }%
  \else
    \textsc{Maximum Induced Subhypergraph with Vertex Weights}%
  \fi\xspace
}
\newcommand{\MIS}{%
  \ifmmode
    \operatorname{%
      \textsc{Multicolored}\allowbreak\ \textsc{Independent}\allowbreak\ \textsc{Set}%
    }%
  \else
    \textsc{Multicolored Independent Set}%
  \fi\xspace
}
\title{Complexity of Local Search for CSPs Parameterized by Constraint Difference}
\author{
  Aditya Anand\thanks{University of Michigan, Ann Arbor, USA. \texttt{adanand@umich.edu}.}
  \and Vincent Cohen-Addad\thanks{Google Research, NYC, USA. \texttt{vcohenad@gmail.com}.}
  \and Tommaso D'Orsi\thanks{Bocconi University, Italy. \texttt{tomasso.dorsi@unibocconi.it}.}
  \and Anupam Gupta\thanks{New York University, USA. \texttt{anupam.g@nyu.edu}. Supported in part by NSF awards CCF-2224718 and CCF-2422926.}
  \and Euiwoong Lee\thanks{University of Michigan, Ann Arbor, USA. \texttt{euiwoong@umich.edu}. Supported in part by NSF award CCF-2236669 and by Google, Inc.}
  \and Debmalya Panigrahi\thanks{Duke University, USA. \texttt{debmalya@cs.duke.edu}. Supported in part by NSF awards CCF-1955703 and CCF-2329230.}
  \and Sijin Peng\thanks{Massachusetts Institute of Technology, USA. \texttt{peng-sj21@mails.tsinghua.edu.cn}.}
}
\date{}  %
\begin{document}

\date{}
\maketitle

\begin{abstract}
    In this paper, we study the parameterized complexity of local search, whose goal is to find a good nearby solution from the given current solution. Formally, given an optimization problem where the goal is to find the largest {\em feasible} subset $S$ of a universe $U$, the new input consists of a {\em current solution} $P$ (not necessarily feasible) as well as an ordinary input for the problem. 
    Given the existence of a feasible solution $S^*$, the goal is to find a feasible solution as good as $S^*$ in parameterized time $f(k) \cdot n^{O(1)}$, where $k$ denotes the {\em distance} $|P\Delta S^*|$. This model generalizes numerous classical parameterized optimization problems whose parameter $k$ is the minimum number of elements removed from $U$ to make it feasible, which corresponds to the case $P = U$. 

    We apply this model to widely studied Constraint Satisfaction Problems (CSPs), where $U$ is the set of constraints, and a subset $U'$ of constraints is feasible if there is an assignment to the variables satisfying all constraints in $U'$. We give a complete characterization of the parameterized complexity of all boolean-alphabet {\em symmetric} CSPs, where the predicate's acceptance depends on the number of true literals. 
    
\end{abstract}
\newpage
\tableofcontents
\newpage

\section{Introduction}\label{sec:intro}The classical theory of NP-hardness sets limits on the polynomial-time tractability of a vast array of algorithmic problems. This raises the question: {\em what additional information about the problem instance can one use to help overcome complexity barriers?} A very successful line of research that aims to address this broad question is that of {\em fixed parameter tractable} (FPT) algorithms. The idea is to identify a parameter that has a small value in {\em typical} instances, and design an algorithm whose running time is exponential in the value of the parameter but polynomial in the overall size of the instance. 

This philosophy has been widely applied to the following general class of optimization problems: let $U$ be a universe and let $\calS \subseteq 2^U$ be an (often implicitly given) collection of {\em feasible} subsets of $U$, and the goal is to find $S \in \calS$ that maximizes $|S|$, or equivalently to find the smallest $T$ that puts $U \setminus T \in \calS$. For instance, given a graph $G = (V, E)$, if $U = E$ and $F \in \calS$ if $(V, F)$ is $2$-colorable, the problem corresponds to the famous \textsc{MaxCut}/\textsc{MinUncut} pair. Or given $G = (V, E)$, if $U = V$ and $\calS$ denotes the collection of all independent sets, then we have \textsc{Independent Set}/\textsc{Vertex Cover}. 
In this class, there are many situations where typical (or at least interesting) instances have the optimal value for the minimization version much smaller than $|U|$. (I.e., the universe $U$ is already close to feasible.) Consequently, numerous results in the FPT literature study a problem from the above class - we let the parameter $k$ be the optimal value for the minimization version, and give an algorithm of running time $f(k)n^{O(1)}$ (better than the naive running time of $n^{O(k)}$).

Can we extend the applicability of this framework? 
We can interpret the algorithms in the above paragraph as one-step {\em local search}, where we are given a current solution (not necessarily feasible) {\em close} to some feasible solution $S^*$ with the optimal value, and the algorithm finds a feasible solution as good as $S^*$.
This motivates us to 
ask: {\em if the algorithm is provided {\bf any} current solution $P \subseteq U$ (as a solution for the max version) and there is a feasible solution $S^* \subseteq U$ close to $P$, can the algorithm efficiently recover a solution as good as $S^*$?}  Formally, we define the parameter $k$ as the distance $|P \Delta S^*|$, and our goal is to design an algorithm whose runtime is arbitrary in $k$ but polynomial in the size of the problem instance.\footnote{Actually, since $|S \Delta S^*| = |(U\setminus S) \Delta (U \setminus S^*)|$, this parameter remains the same for both maximization and minimization versions.} %

We apply this model to Constraint Satisfaction Problems (CSPs), one of the most important classes of optimization problems in the theory of computation. 
Given a predicate $R \subseteq \{ 0, 1 \}^r$, \MaxCSP($R$) is the computational problem whose input consists of the set of variables $V$ and the set of constraints $\mathcal{C}$. 
Given an assignment $\alpha : V \to \{ 0, 1 \}$, each constraint $C \in \mathcal{C}$ is satisfied when the values of the $r$ specified literals (variables or their negations) belong to $R$. 
(See Section~\ref{sec:preliminaries} for formal definitions.)
The goal is to find an assignment that satisfies as many constraints as possible. 
Generalizing the \textsc{MaxCut} example above, we apply \MaxCSP($R$) to our model where $U = \calC$ and a subset of constraints $\calC' \subseteq \calC$ is feasible if there exists an assignment $\alpha$ that satisfies all of $\calC'$ (i.e., $\calC'$ is {\em satisfiable}); so the current solution $\calP$ is a set of constraints and we assume that it is close to some set of satisfiable constraints. 
It yields the following computational task.

\begin{framed}
\noindent \textbf{Problem:} \MinCSPwP$(R)$

\noindent\textbf{Input:} A $\CSP(R)$ instance $I(V,\mathcal{C})$, an integer $k \in \mathbb{N}$, and $\mathcal{P} \subseteq \mathcal{C}$.

\noindent\textbf{Parameter:} $k$

\noindent\textbf{Promise: } There exists an assignment $\alpha_0$ with the property $|\mathcal{C}_I(\alpha_0) \Delta \mathcal{P}| \le k$, where $ \mathcal{C}_I (\alpha_0)$ is the set of clauses $\alpha_0$ satisfies, and $\Delta$ represents symmetric difference.

\noindent\textbf{Output:} A {\em good} assignment $\alpha$, where the word {\em good} simply means $|\mathcal{C}_I(\alpha)| \geq 
|\mathcal{C}_I(\alpha_0)|$.
(We do not require $\alpha$ to satisfy $|\mathcal{C}_I(\alpha) \Delta \mathcal{P}| \le k$.)
\end{framed}

Note that, given an instance $(I(V, \calC)$, $k$, $\calP$), the problem is essentially equivalent to find $\alpha$ such that $|\calC_I(\alpha)|$ is at least $\max_{\beta : |\calC_I(\beta) \Delta \calP| \leq k} |\calC_I(\beta)|$; in words, find a feasible solution as good as any feasible $k$-neighbor of $\calP$. (Formally, we still will keep the fixed promised assignment $\alpha_0$ as part of the instance and define a good assignment based on $\alpha_0$.)

We also observe that if the decision version of $\CSP(R)$ (i.e., finding an assignment that satisfies every constraint) can be solved in polynomial time, an easy $|\calC|^k \cdot \poly(|V|, |\calC|)$-time algorithm for \MinCSPwP$(R)$ follows by exhaustively guessing $(\mathcal{C}_I(\alpha_0) \Delta \mathcal{P})$ and solving the decision version with constraints $\mathcal{C}_I(\alpha_0)$; otherwise it is NP-hard even when $k = 0$ (just letting $\calP = \calC$), which concludes that \MinCSPwP$(R)$ is in $\mathbf{XP}$ if and only if the decision version is in $\mathbf{P}$. Also, if \textsc{MaxCSP}$(R)$ is NP-hard, there is no $\poly(|V|, |\calC|,k)$-time algorithm for \MinCSPwP$(R)$.

In this paper, we provide a complete characterization of the parameterized complexity of \MinCSPwP$(R)$, where $R$ is restricted to a {\em symmetric} predicate, whose acceptance only depends on the number of true literals. This still includes many well-known CSPs including $r$LIN (which generalizes \textsc{MaxCut}), $r$SAT, $r$AND, $r$NAE, and $r$AE, where AE and NAE abbreviate All-Equal and Not-All-Equal respectively. 
Namely, the main contribution of this paper is the following theorem:

\begin{theorem}[Main Theorem (Informal)]\label{thm:dichotomy-informal}
    For any symmetric predicate $R$, \MinCSPwP($R$){} is either FPT or W[1]-hard.%
\end{theorem}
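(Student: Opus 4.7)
The plan is to prove the dichotomy by classifying symmetric Boolean predicates $R$ according to the algebraic shape of their accepting set $S \subseteq \{0,\ldots,r\}$, and then providing matching FPT algorithms and W[1]-hardness reductions. Since $\MinCSPwP(R)$ is in $\mathbf{XP}$ only when the decision version of $\CSP(R)$ lies in $\mathbf{P}$, the analysis first restricts attention to predicates covered by Schaefer's dichotomy (specialized to symmetric relations with literals): the affine predicates ($r$-LIN, where $S$ is the even or odd part of $\{0,\ldots,r\}$), the bijunctive (2SAT-expressible) predicates, and the Horn/dual-Horn predicates, together with a few degenerate cases. Outside these classes the decision problem is NP-hard and $\MinCSPwP(R)$ is not even in $\mathbf{XP}$.

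For the FPT side, the key tool is bounded-depth branching guided by the promise $|\mathcal{C}_I(\alpha_0) \Delta \mathcal{P}| \leq k$. The algorithm begins from an assignment attempting to satisfy as much of $\mathcal{P}$ as possible, identifies a ``violated'' constraint (one whose status under the current attempt disagrees with what $\alpha_0$ would give), and branches on the boundedly many ways in which $\alpha_0$ could locally differ from the current attempt; each branch decreases the symmetric difference to $\alpha_0$'s satisfying set by at least one, giving recursion depth $O(k)$ and branching factor depending only on the arity $r$. For affine predicates this correction step is handled by Gaussian elimination over $\mathbb{F}_2$ combined with an iterative-compression step in the spirit of the FPT algorithm for Almost-2SAT. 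For bijunctive and Horn/dual-Horn predicates we adapt implication-graph and minimum-model algorithms respectively, branching only at points of genuine choice. Together these yield $f(k)\cdot n^{O(1)}$ algorithms on the ``easy'' side of the dichotomy.

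For the W[1]-hard side, the plan is to reduce from \MIS. Given a \MIS instance with $k$ color classes, we construct an $\MinCSPwP(R)$ instance whose current solution $\mathcal{P}$ encodes a ``default'' configuration and whose $k$-close feasible sets $S^*$ are in bijection with choices of one vertex per color class respecting the independence constraints. The rigidity of symmetric predicates outside the FPT class---those with accepting sets exhibiting ``gaps'' (e.g.\ the NAE and AE families, or exactly-$i$-out-of-$r$ for $1 \leq i \leq r-1$)---is precisely what allows these choices to be enforced by short gadget chains with parameter blow-up bounded by $O(k)$, so that an FPT algorithm for $\MinCSPwP(R)$ would give an FPT algorithm for \MIS.

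The main obstacle is pinpointing the exact dichotomy boundary and verifying it is clean. Some predicates which are FPT in the classical $\MinCSP$ setting (where $\mathcal{P} = \mathcal{C}$) may become W[1]-hard under $\MinCSPwP$, because the freedom to choose $\mathcal{P}$ allows richer optimization encodings; conversely, the promise that $\mathcal{P}$ is already close to a feasible solution supplies additional structure that an FPT algorithm can exploit even for predicates where the unrestricted $\MinCSP$ is hard. Verifying that every symmetric predicate falls cleanly on one side---without intermediate cases slipping through---requires a fine-grained case analysis indexed by the shape of $S$, including all sub-arities obtained by fixing literals to constants and by closing under negation of variables; we expect this case analysis to constitute the bulk of the technical work.
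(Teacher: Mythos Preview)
Your proposal has a genuine and fundamental gap: the dichotomy boundary you propose is wrong, and consequently both your FPT algorithms and your hardness plan miss the mark.

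You assume that Schaefer-tractability of the decision problem (affine, bijunctive, Horn/dual-Horn) should translate into FPT tractability of \MinCSPwP, with the hard cases being the ``gappy'' NAE/AE-type predicates. The paper shows almost the opposite: among nontrivial symmetric predicates, the \emph{only} FPT cases are $r\AND$ (all $r$) and $2\ALLEQ$. In particular, \MinCSPwP($2\SAT$) --- the prototypical bijunctive case you propose to handle via implication-graph techniques --- is W[1]-hard (reduced from \MIS\ by the observation that in this model Vertex Cover and Independent Set become equivalent). Likewise $r$-LIN for $r \geq 3$ is W[1]-hard, inherited from the \MinCSP\ hardness of \cite{kim2023flow}, so your Gaussian-elimination plus iterative-compression plan cannot succeed there. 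Even $r\ALLEQ$, whose \MinCSP\ is FPT, becomes W[1]-hard for $r\ge 3$ via a reduction from \PMC.

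Because the FPT side is so narrow, the algorithms are also quite different from bounded-depth branching. For $r\AND$ the paper uses color coding together with an LP for a densest-subhypergraph-style problem; for $2\ALLEQ$ it uses the unbreakability/recursive-understanding framework of Kawarabayashi--Thorup and Chitnis et al. Neither of these is a branching or compression argument in the style you sketch. Your instinct that ``predicates FPT for classical \MinCSP\ may become W[1]-hard under \MinCSPwP'' is exactly right --- but that phenomenon dominates, rather than being an edge case, and it invalidates the overall plan.
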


It turns out that among nontrivial predicates, $r$AND for any $r \geq 1$ and $2$AE (generalizing \textsc{MaxCut}) are the only ones in FPT. We design parameterized algorithms for them extending the previous algorithms for \textsc{MinCSP}. While many predicates inherit their hardness result from the hardness of \textsc{MinCSP}, we also prove hardness for the problems whose \textsc{MinCSP} is tractable, namely $r$AE for $r \geq 3$ and $\leq 1$-out-of-$r$SAT for $r \geq 2$. See Section~\ref{subsec:characterization} for our formal characterization. 

In the context of local search, a natural special case of \MinCSPwP($R$) is when the input $\mathcal{P}$ is restricted to be $\mathcal{C}_I(\beta)$ for some assignment $\beta : V \to \{ 0, 1 \}$. Theorem~\ref{thm:dichotomy-informal} addresses this special case as well, in the sense that our W[1]-hard reductions indeed have 
$\mathcal{P} = \mathcal{C}_I(\beta)$ for some assignment $\beta$.

\subsection{Related Work and Discussion}

The parameterized complexity of local search, whose goal is to improve the current solution to a strictly better solution nearby, was indeed studied in the parameterized complexity literature. Fellows, Fomin, Lokshtanov, Rosamond, Saurabh, and Villanger~\cite{fellows2012} study various problems like \textsc{$r$-Center}, \textsc{Vertex Cover}, \textsc{Odd Cycle Transversal}, \textsc{Max-Cut}, and \textsc{Max-Bisection} on special classes of sparse graphs like planar, minor-free, and bounded-degree graphs, 
and Szeider~\cite{szeider2011parameterized} considers a similar framework for {\textsc{SAT}} and {\textsc{MaxSAT}}.

This (strict) local search model can be relaxed to another model called \emph{permissive} local search~\cite{marx2011s}. Here we are given a solution $S$, and the goal is to find a solution $S'$ which has lower cost (which may differ from $S$ in more than $k$ vertices), or correctly conclude that every solution which differs from $S$ in at most $k$ vertices has higher cost. This model has been studied for a few problems - including a variant of stable marriage~\cite{marx2011s}, \textsc{Vertex Cover}~\cite{gaspers2012} and also in the context of {\textsc{Min Ones CSPs}}~\cite{krokhin2012hardness}, extending Marx's complete classification of the parameterized complexity of CSPs based on if there is a solution with exactly $k$ ones~\cite{marx2005parameterized}.

One crucial difference between our model and all previous models on CSPs is how the distance between the two solutions is defined. 
The previous papers define it as the number of variables with different values in two solutions, but our definition concerns the set of constraints whose satisfaction changes between two solutions. 
In fact, recall that our definition of a solution is an arbitrary set $\calP$ of constraints not necessarily feasible (i.e., corresponding to a variable assignment), which allows $\calP = \calC$ and captures algorithms parameterized by the number of unsatisfied constraints. In that sense, our model is slightly broader than traditional local search where one maintains a feasible solution and tries to strictly improve in every step.

Fixed parameter tractability of CSPs parameterized by the number of unsatisfied constraints has been an important and active research area~\cite{bonnet2016fixed, kim2021solving, kim2022directed, kim2023flow}, where a complete classification
for Boolean relations
was obtained very recently. Their definition of CSPs is strictly more general than ours, where each constraint language contains multiple predicates, even with different arities. 
It is an exciting research direction to extend our framework for every CSP.%

Our negative results can be interpreted as demonstrating the importance of {\em near-perfect instances} for classical FPT algorithms. For instance, we show  \MinCSPwP({2SAT}) is $W[1]$-hard while \MinCSP({2SAT}) is known to admit an FPT algorithm~\cite{razgon2008almost, raman2011paths, cygan2013multiway, narayanaswamy2012lp}, so the fact that the optimal solution indeed satisfies all but $k$ constraints affects the parameterized complexity. %

There are several natural research directions based on this paper. We believe that the following directions will be interesting to study further: 
\begin{enumerate}
    \item The most immediate one is to extend our characterization for all CSPs.
    One notable set of boolean relations whose complexity is still open in our model is $\{ x=y, x \wedge \neg y \}$.

    \item A natural variant of our model is when the solution is represented as an assignment to the variables, which is guaranteed to have a Hamming distance at most $k$ to a strictly better solution. For {\textsc{Min Ones CSP}}, where the goal is to find a satisfying assignment with the minimum number of true variables, this variant has been studied previously~\cite{krokhin2012hardness}.

    \item Other than CSPs, one framework that captures numerous results in the FPT literature is covering (resp. packing) problems, especially on graphs, where we want to select the smallest (resp. largest) subset $S \subseteq U$ subject to some covering (resp. packing) constraints. Local search naturally extends to this setting (i.e., we are given $S' \subseteq U$ with $|S' \Delta S| \leq k$), and it will be interesting to see if classical FPT results on graph covering/packing problems extend to this model, including \textsc{Multicut}~\cite{bousquet2011multicut, marx2011fixed}, \textsc{Min Bisection}~\cite{cygan2014minimum},
    \textsc{$k$-Cut}~\cite{kawarabayashi2011minimum},
    \textsc{Disjoint Paths}~\cite{robertson1995graph}. %

    \item In the approximation algorithms literature, there are numerous studies about CSPs on {\em structured instances}, most notably dense and expanding instances (see~\cite{anand2025complete} and references therein). It would be interesting to see whether these structures, if suitably parameterized, help the problems become more tractable in our model. 
\end{enumerate}

\section{Our Results}\label{sec:preliminaries}
\subsection{Constraint Satisfaction Problems}
In this section we recall basic definitions of constraint satisfaction problems.

For integer $r \in \mathbb{N}^+$, a \emph{boolean relation} $R$ is a subset of $\mathbb{F}_2^r$. The integer $r$ is called the \emph{arity} of $R$. For a boolean relation $R$ of arity $r$ and $b \in \mathbb{F}_2^r$, define the boolean relation $R \oplus b := \{\alpha \oplus b \mid \alpha \in R\},$ where $\oplus$ is the addition operator over $\mathbb{F}_2^r$. When considering $R$ as a clause in a CSP instance, $\oplus$ operator can be seen as variable negations. For a relation $R$ with index $[r]$ and $S \subseteq [r],$ define the projection relation $\pi_S(R) = \{\alpha_S \mid \alpha \in R\}$, where $\alpha_S$ means extracting the value on coordinates in $S$.%

A boolean relation $R$ of arity $r$ is \emph{symmetric} if for all $(x_1,\cdots,x_r) \in \mathbb{F}_2^r$ and any permutation $p: [r] \rightarrow [r]$, $(x_1,\cdots,x_r) \in R \iff (x_{p(1)}, \cdots, x_{p(r)}) \in R$. Equivalently, $R$ is symmetric if there exists $S \subseteq \{0,1,\cdots,r\}$ such that $(x_1,\cdots,x_r) \in R \iff \sum_{i=1}^r x_i \in S$ (where the addition is performed in $\mathbb{Z}$). We define $\SymRel(r, S)$ to be such $R$. %

A \emph{boolean constraint language} is a set of boolean relations. Two constraint languages are {\em equivalent} if they contain exactly the same set of relations. A \emph{clause} $C$ over a boolean constraint language $\Gamma$ is a pair $(V_C,R_C)$, where $R_C \in \Gamma$ and $V_C=(v_{C,1},\ldots,v_{C,r})$ is an $r$-tuple of variables where $r$ is the arity of $R_C$. We refer to $V_C$ as the \emph{scope} of the clause $C$. %
An assignment $\alpha: V_C \to \{0,1\}$ \emph{satisfies} clause $(V_C,R_C)$ if $(\alpha(v_{C,1}),\ldots,\alpha(v_{C,r})) \in R_C$, and $\alpha$ \emph{violates} the clause otherwise.

In this paper, we only consider a restricted set of boolean constraint languages, where the language $\Gamma$ includes all possible negation patterns $R \oplus b$ for some symmetric predicate $R$ of arity $r$ and $b \in \mathbb{F}_2^r$. We use the notation $\SymLang(R) = \{R \oplus b \mid b \in \mathbb{F}_2^r\}$ for such $\Gamma$, where $\SymLang$ stands for ``symmetric language with negation''. We further define $\SymLang(r,S)$ to be $\SymLang(\SymRel(r,S))$.

For a boolean constraint language $\Gamma$, a $\CSP(\Gamma)$ instance $I$ is a pair $(V_I, \mathcal{C}_I)$, where $V_I$ is the set of variables, and $\mathcal{C}_I$ is the set of clauses over the boolean constraint language $\Gamma$, all of which have scope inside $V_I$. We say a variable $v$ is incident to a clause $C$ if $v$ belongs to the scope of $C$.

For an assignment $\alpha: V_I \to \{0,1\}$, define $\mathcal{C}_I(\alpha)$ to be the set of clauses $\alpha$ satisfies in $I$. We further define $\cost_I(\alpha) = |\mathcal{C} \backslash \mathcal{C}_I(\alpha)|$, the number of unsatisfied clauses, to be the \emph{cost} of the assignment.
We define the \emph{size} of an instance $I$ as  $\max(|V_I|,|\mathcal{C}_I|)$ and denote it by $n_I$. 
We ignore the subscripts when the instance is clear in the context.

Now we give formal definitions for the problems we consider. We first define \MinCSP($\Gamma$), and then define our central problem \MinCSPwP($\Gamma$). In this paper, our main goal is to characterize the parameterized complexity of \MinCSPwP($\Gamma$) when $\Gamma = \SymLang(r,S)$ for some $r \in \mathbb{N}^+$ and $S \subseteq \{0,1,\cdots,r\}$.

\begin{framed}
\noindent\textbf{Problem:} \MinCSP($\Gamma$)

\noindent\textbf{Input:} A \CSP($\Gamma$) instance $I(V,\mathcal{C})$, an integer $k \in \mathbb{N}$.

\noindent\textbf{Parameter:} $k$.

\noindent\textbf{Output:} An assignment $\alpha$ with $\cost(\alpha) \le k$ if it exists, otherwise report that such an assignment does not exist.
\end{framed}

\begin{framed}
\noindent\textbf{Problem:} \MinCSPwP($\Gamma$)

\noindent\textbf{Input:} A \CSP($\Gamma$) instance $I(V,\mathcal{C})$, an integer $k \in \mathbb{N}$, and $\mathcal{P} \subseteq \mathcal{C}$

\noindent\textbf{Parameter:} $k$

\noindent\textbf{Promise: } There exists an assignment $\alpha_0: V_I \rightarrow \{0,1\}$  such that $|\mathcal{C}_I(\alpha_0) \Delta \mathcal{P}| \le k$.%

\noindent\textbf{Output:} A {\em good} assignment $\alpha$,
where the word {\em good} simply means $|\mathcal{C}_I(\alpha)| \geq 
|\mathcal{C}_I(\alpha_0)|$.
(We do not require the output $\alpha$ to satisfy $|\mathcal{C}_I(\alpha) \Delta \mathcal{P}| \le k$.)
\end{framed}

From the definition, \MinCSP($\Gamma$) is a special case of \MinCSPwP($\Gamma$) by setting $\mathcal{P}$ to be the constraint set $\mathcal{C}$.

A boolean constraint language $\Gamma$ is \emph{trivial} if $\Gamma = \varnothing$ or $\Gamma = \{\{0,1\}^r\}$ for some $r \in \mathbb{N}^+$ and \emph{nontrivial} otherwise. When $\Gamma$ is trivial, \MinCSP($\Gamma$) and \MinCSPwP($\Gamma$) can be solved by trivial algorithms. So in the following we only consider nontrivial boolean constraint languages.

\subsection{Abbreviations for Special CSPs}

For simplicity, we abbreviate boolean constraint languages mentioned in the paper. Here we introduce the following simple but useful claim in identifying equivalence between boolean constraint languages.
\begin{claims}
    $\SymLang(r,S)$ and $\SymLang(r,\{r-x \mid x \in S\})$ are equivalent boolean constraint languages.
\end{claims}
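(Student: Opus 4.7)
The plan is to prove the stronger statement that $\SymLang(r,S) = \SymLang(r,S')$ as sets of relations (where $S' = \{r-x \mid x \in S\}$), which trivially implies equivalence. The whole argument hinges on a single identity: complementing every coordinate of a tuple in $\mathbb{F}_2^r$ negates each bit, and for boolean values this converts the integer Hamming weight $\sum_i x_i$ into $r - \sum_i x_i$. Hence flipping all coordinates turns the symmetric relation defined by $S$ into the symmetric relation defined by $S'$.

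Concretely, I would first verify the key identity
\[
\SymRel(r,S) \oplus \mathbf{1}_r \;=\; \SymRel(r,S'),
\]
where $\mathbf{1}_r=(1,\ldots,1)\in\mathbb{F}_2^r$. By definition, $(x_1,\ldots,x_r)\in \SymRel(r,S)\oplus \mathbf{1}_r$ iff $(x_1\oplus 1,\ldots,x_r\oplus 1)\in \SymRel(r,S)$, iff $\sum_{i=1}^r(x_i\oplus 1)\in S$. Since $x_i\in\{0,1\}$, we have $x_i\oplus 1 = 1-x_i$ as integers, so the sum equals $r-\sum_i x_i$. Thus the condition becomes $r-\sum_i x_i \in S$, i.e.\ $\sum_i x_i \in \{r-s : s\in S\}=S'$, which is exactly membership in $\SymRel(r,S')$.

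Next I would use this identity together with the fact that the map $b\mapsto b\oplus\mathbf{1}_r$ is a bijection on $\mathbb{F}_2^r$ to match up the two languages. For any $b\in\mathbb{F}_2^r$,
\[
\SymRel(r,S)\oplus b \;=\; \bigl(\SymRel(r,S')\oplus \mathbf{1}_r\bigr)\oplus b \;=\; \SymRel(r,S')\oplus (b\oplus \mathbf{1}_r),
\]
using associativity and the fact that $\mathbf{1}_r\oplus\mathbf{1}_r=0$ (to invert the identity above). This shows $\SymLang(r,S)\subseteq \SymLang(r,S')$. The reverse inclusion follows from the symmetric observation that $(S')'=\{r-(r-s):s\in S\}=S$, so the same argument with the roles swapped gives $\SymLang(r,S')\subseteq\SymLang(r,S)$.

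There is no real obstacle here; the only thing to be careful about is distinguishing the two notions of addition in play, the bitwise $\mathbb{F}_2$ operation inside $\oplus$ and the integer sum defining the symmetric predicate. Once one writes out $x_i \oplus 1 = 1 - x_i$ for boolean $x_i$, everything reduces to a direct substitution and the claim follows.
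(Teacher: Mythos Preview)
Your proof is correct and takes essentially the same approach as the paper: both arguments rest on the identity $\SymRel(r,S)\oplus\mathbf{1}_r=\SymRel(r,S')$ and then observe that shifting by $\mathbf{1}_r$ bijects the negation patterns $b$ onto themselves. Your write-up is more explicit (you spell out why $x_i\oplus 1=1-x_i$ gives the Hamming-weight complement and argue both inclusions), whereas the paper simply states the identity and the resulting equality of shifted relations in one line.
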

\begin{proof}
    From $\SymRel(r,S) + (1,1,\cdots,1) = \SymRel(r,\{r-x \mid x \in S\})$, where $(1,1,\cdots,1)$ is the element in $\mathbb{F}_2^r$ that has value 1 in each coordinate,  we have $\SymRel(r,S)+b = \SymRel(r,\{r-x \mid x \in S\}) + (b + (1,1,\cdots,1))$ for any $b \in \mathbb{F}_2^r.$
\end{proof}

We will use the following abbreviations in this paper.

\begin{itemize}
    \item $r\AND = \SymLang(r,\{0\}) = \SymLang(r,\{r\})$.
    \item $r\SAT = \SymLang(r,\{1,2,\cdots,r\}) = \SymLang(r,\{0,1,\cdots,r-1\})$.
    \item $r\ALLEQ$ (All Equal) = $\SymLang(r, \{0,r\})$.
    \item $\LE$ = $\SymLang(r, \{0,1\}) = \SymLang(r, \{r-1,r\})$.
\end{itemize}

We also use abbreviations for the corresponding \textsc{MinCSP} and \textsc{ImproveMaxCSP} problems. For example, \MinCSPwP($r$\AND) corresponds to the problem \MinCSPwP$(\SymLang(r,\{0\}))$.

\subsection{Our Characterization}
\label{subsec:characterization}
Having defined our framework, we first provide the formal version of \Cref{thm:dichotomy-informal}:

\begin{theorem}[Main Theorem (Formal)] \label{thm:dichotomy-formal}
    For any $r \in \mathbb{N}^+$ and $S \subseteq \{0,1,\cdots,r\}$, the problem \MinCSPwP$(\SymLang(r,S))$ is either FPT or W[1]-hard.
\end{theorem}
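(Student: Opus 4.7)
The plan is to establish the dichotomy by exhaustive case analysis over nontrivial pairs $(r,S)$, following the characterization hinted at in the introduction. I expect the FPT side to consist of exactly two families: (i) $|S|=1$, the $r$AND family for every $r\ge 1$, and (ii) $r=2$ with $S=\{0,2\}$, the $2$AE case; every other nontrivial $(r,S)$ should be W[1]-hard. To make the case analysis manageable I would use the equivalence $\SymLang(r,S)\equiv \SymLang(r,\{r-x:x\in S\})$ already observed in the preliminaries, and further reduction steps that project a coordinate (via $\pi_T$) or pad by a negation pattern to collapse families of $(r,S)$ to a few explicit templates.

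For the FPT direction I would extend the parameterized algorithms known for the corresponding \MinCSP problems. For $r$AND, every clause is equivalent to a conjunction of fixed literal values, so any two clauses in $\mathcal{P}$ that impose contradictory values to a common variable cannot both lie in $\mathcal{C}_I(\alpha_0)$: at least one contributes to the promised symmetric difference of size $\le k$. I would therefore branch on each such conflicting pair, removing one of the two clauses from $\mathcal{P}$ at unit cost to the budget; after $O(k)$ branches the surviving subset of $\mathcal{P}$ is jointly satisfiable and directly yields a candidate assignment $\alpha$. For $2$AE the problem is essentially an improvement version of edge bipartization, and I plan to adapt the iterative compression and LP-branching / almost-$2$-SAT framework of \cite{razgon2008almost,narayanaswamy2012lp}, treating $\mathcal{P}$ as a reference ``cut-like'' object and recovering its nearest feasible $2$-coloring via a parameterized min-cut subroutine.

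For the W[1]-hard direction I would proceed in two stages. Whenever \MinCSP$(\SymLang(r,S))$ is already W[1]-hard, the reduction transfers verbatim by setting $\mathcal{P}=\mathcal{C}$. The genuinely new work is for the families whose \MinCSP version is in FPT but whose \MinCSPwP version is W[1]-hard, namely $r$AE for $r\ge 3$ and $\le 1$-out-of-$r$SAT for $r\ge 2$. For these I would design parameterized reductions from \MIS (or \textsc{Multicolored Clique}) in which a target solution of size $k'$ is encoded using the freedom to choose $\mathcal{P}\ne \mathcal{C}$: vertex-selection gadgets flip a controlled number of clauses into or out of $\mathcal{P}$, while consistency gadgets use $r$AE or $\le 1$-out-of-$r$SAT clauses to forbid inconsistent selections. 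After these two core reductions are in place, the projection/negation lemmas above let me reduce every remaining nontrivial $(r,S)$ outside the two FPT families to one of the explicit hard templates.

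The main obstacle I anticipate is exactly this second stage: the W[1]-hardness of \MinCSPwP for predicates whose \MinCSP is tractable. There the intractability must come genuinely from the richer structure of the reference set $\mathcal{P}$ rather than from counting unsatisfied constraints, so the reductions cannot recycle existing \MinCSP-hardness proofs and must simultaneously account for the symmetric-difference parameter, the promise that some $\alpha_0$ is $k$-close to $\mathcal{P}$, and the absence of any lower-bound requirement on how close the output $\alpha$ is to $\mathcal{P}$. A secondary, more tedious obstacle is verifying exhaustiveness of the classification: I expect that many borderline $S$ (e.g.\ thresholds $\{t,t+1,\dots,r\}$ with $1\le t\le r-1$ and $r\ge 3$) will each require a small dedicated projection argument to reach one of the two hardness templates.
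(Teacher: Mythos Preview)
Your overall classification (FPT exactly for $r$AND and $2$AE, W[1]-hard otherwise) matches the paper, but your $r$AND plan has a genuine gap. Branching on conflicting literal pairs in $\mathcal{P}$ does reduce to the case where $\mathcal{P}$ is satisfiable (this is the paper's \Cref{thm:variableprediction}), but the satisfying assignment $\alpha$ you obtain for $\mathcal{P}$ is \emph{not} in general a good assignment: the promised $\alpha_0$ may satisfy up to $k$ clauses outside $\mathcal{P}$ that $\alpha$ misses, so $|\mathcal{C}_I(\alpha)|$ can fall short of $|\mathcal{C}_I(\alpha_0)|$. The paper's real algorithmic work begins only after your plan stops: it shows (\Cref{lem:smallhamdist}) that some good $\beta$ has Hamming distance $\le rk$ from $\alpha$, then uses color coding on variables together with an exact LP-based solver for a weighted densest-subhypergraph problem (\Cref{lemma:densestsubgraph}) to find the optimal set of variable flips. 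Without this second phase you do not have an FPT algorithm for \MinCSPwP$(r\AND)$.

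Your exhaustiveness strategy is also far more laborious than necessary, and this is where the paper's route differs most from yours. Rather than projecting each borderline $S$ down to a hard template, the paper invokes the \MinCSP dichotomy of \cite{kim2023flow} and checks in two short lemmas (\Cref{thm:ihsb,thm:bijunctive}) that the only nontrivial $\SymLang(r,S)$ that are bijunctive or IHS-B are exactly $r\AND$, $r\ALLEQ$, and $\LE$; every other $(r,S)$ inherits W[1]-hardness immediately via $\mathcal{P}=\mathcal{C}$. This collapses your ``secondary obstacle'' to three remaining families. Two further differences in those families: for $r\ALLEQ$ with $r\ge 3$ the paper reduces from \textsc{Paired Minimum $st$-Cut} rather than \MIS, which meshes naturally with the cut structure of AE constraints; and for $2\ALLEQ$ the paper uses the Kawarabayashi--Thorup / Chitnis et al.\ unbreakability framework with $(k,q)$-balanced cuts rather than iterative compression or LP-branching.
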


As $\MinCSP(\Gamma)$ is a special case of \MinCSPwP$(\Gamma)$, the W[1]-hardness of the former implies that of the latter. The recent characterization of parameterized tractability of $\MinCSP(\Gamma)$~\cite{kim2023flow} implies the following theorem, which is proved in~\Cref{sec:hardness-from-fpt}. 

\begin{theorem}\label{characterization:main-hardfrommincsp}
    For nontrivial $\Gamma = \SymLang(r,S)$ for some $r \in \mathbb{N}^+$ and $S \subseteq \{0,1,\cdots,r\}$ that is not equivalent to $r\AND, r\ALLEQ$ or $\LE$, \MinCSPwP$(\Gamma)$ is W[1]-hard.
\end{theorem}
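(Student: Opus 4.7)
The plan is to derive the theorem as a direct consequence of the complete parameterized dichotomy for Boolean $\MinCSP(\Gamma)$ established by Kim et al.~\cite{kim2023flow}. The first ingredient is a parameter-preserving reduction from $\MinCSP(\Gamma)$ to $\MinCSPwP(\Gamma)$: given an instance $(I, k)$ of $\MinCSP(\Gamma)$, set $\mathcal{P} := \mathcal{C}$ and keep the same $k$. For any assignment $\alpha_0$,
\[
    |\mathcal{C}_I(\alpha_0) \Delta \mathcal{P}| \;=\; |\mathcal{C} \setminus \mathcal{C}_I(\alpha_0)| \;=\; \cost_I(\alpha_0),
\]
so the $\MinCSPwP$ promise $|\mathcal{C}_I(\alpha_0) \Delta \mathcal{P}| \le k$ is equivalent to the existence of $\alpha_0$ with $\cost_I(\alpha_0) \le k$, and any good output $\alpha$ satisfies $\cost_I(\alpha) \le \cost_I(\alpha_0) \le k$. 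Hence W[1]-hardness of $\MinCSP(\Gamma)$ transfers immediately to $\MinCSPwP(\Gamma)$.

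With this transfer in hand, the theorem reduces to verifying that, for every nontrivial symmetric $\Gamma = \SymLang(r,S)$ not equivalent to $r\AND$, $r\ALLEQ$, or $\LE$, the problem $\MinCSP(\Gamma)$ is W[1]-hard. I would obtain this by invoking the Kim et al.\ dichotomy (which classifies every Boolean $\MinCSP(\Gamma)$ as FPT or W[1]-hard according to an explicit polymorphism-type criterion on $\Gamma$) and performing a case analysis on $S \subseteq \{0,1,\dots,r\}$. Using the equivalence $\SymLang(r,S) \equiv \SymLang(r,\{r - s : s \in S\})$ established in the preceding claim, the analysis may be carried out up to the involution $s \mapsto r - s$. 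The goal is to show that the only $S$ whose associated language lies on the FPT side of~\cite{kim2023flow} are $\{0\}$ and $\{r\}$ (yielding $r\AND$), $\{0,r\}$ (yielding $r\ALLEQ$), and $\{0,1\}$ and $\{r-1,r\}$ (yielding $\LE$); for every other $S$, the corresponding symmetric language contains a hard obstruction --- such as a parity-like pattern, a not-all-equal relation on $r \ge 3$ variables, or an exact-cardinality predicate --- that is placed on the W[1]-hard side of the dichotomy.

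The main obstacle is exactly this final case analysis: one must systematically cover every residual family of symmetric patterns --- for example, $r$LIN-type parities where $S$ consists of indices of a single parity class, $r$NAE where $S = \{1,\dots,r-1\}$, ``exactly-$t$-out-of-$r$'' predicates for $1 \le t \le r - 1$, and mixed patterns such as $\{0,t\}$ or $\{0,1,t\}$ with $t \notin \{r-1,r\}$ --- and for each one either directly invoke a W[1]-hardness from~\cite{kim2023flow} for a canonical sub-language it contains, or give a short parameterized reduction exhibiting such a sub-language via pp-definability from a standard hard $\MinCSP$. Once every remaining symmetric pattern has been checked in this way, combining the verifications with the $\MinCSP \to \MinCSPwP$ reduction above yields the theorem.
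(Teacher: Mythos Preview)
Your high-level plan matches the paper exactly: reduce $\MinCSP(\Gamma)$ to $\MinCSPwP(\Gamma)$ via $\mathcal P=\mathcal C$ (this is the paper's \Cref{clm:prediction2fpt}), then appeal to the Kim et al.\ dichotomy (\Cref{thm:fptdichotomy}) to show that $\MinCSP(\SymLang(r,S))$ is W[1]-hard for every nontrivial $S$ outside the three exceptions.

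Where you diverge is in how you organize the final verification. You propose to go pattern-by-pattern through the ``hard'' side (parities, NAE, exact-$t$, mixed sets, \ldots) and exhibit a hard obstruction in each, possibly via pp-definability. The paper instead works from the FPT side of the dichotomy, whose criterion is explicit: $\MinCSP(\Gamma)$ is FPT iff every relation is bijunctive with $2K_2$-free Gaifman graph, or every relation is IHS-B with $2K_2$-free arrow graph. For symmetric relations the $2K_2$-free conditions are automatic (the relevant graphs are complete or empty, \Cref{lem:2k2free}), so the question collapses to: which $\SymLang(r,S)$ are bijunctive, and which are IHS-B? A short argument via the projection $\pi_{12}(R)$ pins down the bijunctive ones as exactly $r\AND$, $r\ALLEQ$, $\LE$ (\Cref{thm:bijunctive}), and an even shorter argument shows the only IHS-B case is $r\AND$ (\Cref{thm:ihsb}). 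Everything else is therefore W[1]-hard in one stroke. Your enumeration-of-hard-patterns route would work in principle, but it is open-ended (you have to be sure you have covered every $S$) and the ``pp-definability / short reduction'' steps are where the actual content would live; the paper's forward characterization of the FPT side avoids all of that.
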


It remains to study the remaining problems. 
\begin{remark}\label{remarkalgo-boundary}
We say that an algorithm $A$ {\em solves} \MinCSPwP$(\Gamma)$, if for any instance that satisfies the promise, $A$ outputs a good assignment. However, the run time is measured on all possible instances. In other words, $A$ runs in $f(k)n^{O(1)}$ time if it terminates and outputs some assignment in $f(k)n^{O(1)}$ time no matter whether the input instance $(I,k,\mathcal{P})$ satisfies the promise. Clearly, only the run time on instances satisfying the promise is critical, as we can always set a run time limit to the algorithm to rule out instances that require too much time to indicate their violation of promise.
\end{remark}
For \MinCSPwP$(r \AND)$, a color-coding combined with the LP rounding for {\sc Densest Subhypergraph} yields the following result proved in Section~\ref{sec:and}. 

\begin{theorem}\label{characterization:and}
    There exists an algorithm that solves \MinCSPwP($r$\AND) in $2^{O(k \log k)} n^{O(1)}$ time for all $r \ge 1$.
\end{theorem}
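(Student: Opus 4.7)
The plan is to exploit the rigid structure of $r\AND$ clauses together with branching, color-coding, and LP rounding for \textsc{Densest Subhypergraph}.

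\textbf{Structural reduction.} Each $r\AND$ clause, if satisfied, uniquely fixes the values of its $r$ variables. Thus a set $\mathcal S$ of clauses is simultaneously satisfiable iff it is \emph{consistent}, meaning no two clauses in $\mathcal S$ force different values on a shared variable. The promised $\mathcal C(\alpha_0)$ is therefore a consistent subset of $\mathcal C$. Writing $A=\mathcal P\setminus\mathcal C(\alpha_0)$ and $B=\mathcal C(\alpha_0)\setminus\mathcal P$, we have $|A|+|B|\le k$. Since $\mathcal P\setminus A$ is consistent, $A$ forms a vertex cover of the \emph{conflict graph} $G_\mathrm{conf}$ on $\mathcal P$, whose edges connect pairs of clauses sharing a variable with conflicting forced values. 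The search problem reduces to finding any consistent $\mathcal S\subseteq\mathcal C$ with $|\mathcal S|\ge|\mathcal C(\alpha_0)|$.

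\textbf{Main steps.} First I would run the standard $O(2^k n^{O(1)})$ branching FPT algorithm for \textsc{Vertex Cover} on $G_\mathrm{conf}$, which produces $2^{O(k)}$ candidate covers $A'\subseteq\mathcal P$; at least one of them agrees with the true $A$ in the sense that the induced partial assignment $\pi_{A'}$ from $\mathcal P\setminus A'$ extends to $\alpha_0$. (If no cover of size $\le k$ exists, the promise fails and any output suffices.) Second, for each $\pi_{A'}$ the task becomes extending $\pi_{A'}$ on the free variables to maximize satisfied clauses, with the side promise that $\alpha_0$ captures at most $k$ additional clauses (the set $B$), touching at most $rk$ variables. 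Color-code the clauses of $\mathcal C\setminus\mathcal P$ with $k$ colors so that with probability $k!/k^k=2^{-\Theta(k)}$ the $\le k$ clauses of $B$ become rainbow, derandomizing via $(|\mathcal C|,k)$-perfect hash families. For each good coloring, write the natural LP for \textsc{Densest Subhypergraph} on the free variables (variables $x_{v,b}\in[0,1]$ with $x_{v,0}+x_{v,1}=1$ per variable and $y_C\in[0,1]$ per clause, with $y_C\le x_{v,b}$ for every literal $(v,b)\in C$), restricted so that at most one clause per color class of $\mathcal C\setminus\mathcal P$ contributes. Round the LP and return the best assignment over all branches and colorings.

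\textbf{Main obstacle.} The bottleneck is that the \textsc{Densest Subhypergraph} LP has polynomial integrality gap in general, so the crux of the argument is showing that under the rainbow promise the LP admits an integral optimum matching $|\mathcal C(\alpha_0)|$, or is only off by an additive $O(k)$ that can be closed by further branching. I would address this by enumerating, per active color class, which concrete clause represents the critical deviation in $B$ and which $\le r$ variable values it forces; the resulting $k^{O(k)}=2^{O(k\log k)}$ subinstances each have an LP whose extreme-point structure is tight because of the rainbow coloring, allowing integral recovery by a direct LP-rounding routine. Combined with $2^{O(k)}$ vertex-cover branching and polynomial LP solving, the total running time is $2^{O(k\log k)}n^{O(1)}$ as required.
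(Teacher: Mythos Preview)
Your high-level architecture (branch until $\mathcal{P}$ is consistent, then color-code, then solve an LP) matches the paper's. But the LP step, which you yourself flag as the crux, does not go through as written, and the paper closes this gap with a specific construction you are missing.

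\textbf{The LP gap is real and your fix does not work.} Your LP has variables $x_{v,b}$ per literal and $y_C$ per clause; this is essentially the fractional Densest Subhypergraph LP, and as you note it has polynomial integrality gap. Your proposed repair is to ``enumerate, per active color class, which concrete clause represents the critical deviation in $B$'', claiming this yields $k^{O(k)}$ subinstances. But each color class may contain $\Theta(n/k)$ clauses, so this enumeration costs $n^{\Theta(k)}$, not $k^{O(k)}$; and even after guessing $B$ exactly, you have not explained why the residual LP becomes integral. The assertion that ``the extreme-point structure is tight because of the rainbow coloring'' is not substantiated.

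\textbf{What the paper does instead.} After reducing to satisfiable $\mathcal{P}$ (by branching on \emph{variables} with conflicting polarity, which cleanly decreases $k$; your vertex-cover branching on clauses does not obviously preserve the promise, since the surviving $\mathcal{P}\setminus A'$ can force values that disagree with $\alpha_0$), the paper shows there is a good $\beta$ within Hamming distance $rk$ of the base assignment $\alpha$. It then color-codes the \emph{variables}, not the clauses. The decisive idea is to group the colored-$1$ variables into equivalence classes under ``co-occur in some clause of $\mathcal{P}$'': flipping an entire class loses a fixed, easily counted number of $\mathcal{P}$-clauses. This turns the problem into \textsc{Maximum Induced Subhypergraph with Vertex Weights}, whose natural LP the paper proves is \emph{exactly} integral via a one-line threshold-rounding argument. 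No integrality gap, no further enumeration. This equivalence-class construction is the missing ingredient in your plan; without it (or an equivalent device) the LP step remains a genuine obstacle.
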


For $r\ALLEQ$, its tractability depends on $r$. When $r = 2$, we use the celebrated {\em unbreakability} framework~\cite{kawarabayashi2011minimum, chitnis2016designing, cygan2014minimum, lokshtanov2022parameterized} to design a parameterized algorithm in Section~\ref{sec:2ae}.

\begin{theorem}\label{characterization:2ae}
There exists an algorithm that solves \MinCSPwP($2$\ALLEQ) in time $2^{O(k^2)}n^{O(1)}$. 
\end{theorem}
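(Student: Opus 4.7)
The plan is to view the instance as a graph $G=(V,\mathcal{C})$ with each edge $e$ carrying an ``equality'' or ``inequality'' label coming from the $2\ALLEQ$ constraint at $e$, and with a distinguished subset $\mathcal{P}\subseteq\mathcal{C}$. First I will reduce the $\Delta$-promise to a MinUncut instance: define the auxiliary graph $G'$ on the same vertex set whose edges are those of $G$, but with each label flipped on edges in $\mathcal{C}\setminus\mathcal{P}$. A direct check shows that $|\mathcal{C}(\alpha)\,\Delta\,\mathcal{P}|$ equals the number of $G'$-edges violated by $\alpha$, so running the known FPT algorithm for $\MinCSP(2\ALLEQ)$ (cf.~\cite{lokshtanov2022parameterized}) on $G'$ with parameter $k$ either certifies that the promise fails or returns an assignment $\alpha_{\text{base}}$ with $|\mathcal{C}(\alpha_{\text{base}})\,\Delta\,\mathcal{P}|\le k$. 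This gives a concrete ``anchor'' near $\mathcal{P}$.

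Next I will use the following pivotal identity: for any two assignments $\alpha_1,\alpha_2$, setting $S:=\{v:\alpha_1(v)\neq\alpha_2(v)\}$, flipping a vertex toggles the satisfaction status of every incident $2\ALLEQ$ constraint, so $\mathcal{C}(\alpha_1)\,\Delta\,\mathcal{C}(\alpha_2)$ equals exactly the set of $G$-edges in the cut $\delta_G(S)$. Combined with the triangle inequality this shows that any competitor assignment $\alpha$ satisfying the promise has $|\mathcal{C}(\alpha)\,\Delta\,\mathcal{C}(\alpha_{\text{base}})|\le 2k$ and so is of the form $\alpha_{\text{base}}\oplus\mathbf{1}_{S}$ for some $S$ with $|\delta_G(S)|\le 2k$. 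For each such $S$ one computes $|\mathcal{C}(\alpha_{\text{base}}\oplus\mathbf{1}_S)|=|\mathcal{C}(\alpha_{\text{base}})|+|\delta_G(S)|-2|\mathcal{C}(\alpha_{\text{base}})\cap\delta_G(S)|$, so to match or beat $\alpha_0$ it suffices to enumerate a family of candidate flip-sets $S$ that is guaranteed to contain the true $S^\star=\{v:\alpha_0(v)\neq\alpha_{\text{base}}(v)\}$, evaluate the formula above, and return the best.

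To enumerate such $S$ in parameterized time I will invoke the recursive understanding / unbreakability framework~\cite{chitnis2016designing,cygan2014minimum,lokshtanov2022parameterized,kawarabayashi2011minimum}. At each recursive step, either $G$ admits a $\le 2k$-size separator splitting $V$ into two balanced parts, in which case I branch over the $2^{O(k)}$ possible assignments of $\alpha$ on the separator and recurse on each side carrying the partial objective across cut edges, or $G$ is $(q,2k)$-unbreakable for some $q=f(k)$, in which case every valid $S$ has one side of size $\le q$ and I enumerate it by color-coding $V$ with $q+O(k)$ colors. The recursion has depth $O(k)$ with $2^{O(k)}$ branches per level, yielding total running time $2^{O(k^2)}\,n^{O(1)}$, as required.

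The main obstacle I expect is correctly threading the objective $|\mathcal{C}(\alpha)|$ through the recursive decomposition. Unlike a pure MinUncut objective on $G'$, the quantity we must maximize is a signed count that depends simultaneously on the $\mathcal{P}$-status of a cut edge, its $\pm$-label, and its satisfaction status under $\alpha_{\text{base}}$; hence each subproblem must carry boundary bookkeeping of these three attributes for every separator vertex. Proving that a bounded number of branches in the unbreakability decomposition suffices to cover the true $S^\star$, and that the per-subproblem optima glue together to the global optimum, is where the bulk of the technical work will lie and why the final exponent is quadratic rather than linear in $k$.
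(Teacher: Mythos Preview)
Your approach matches the paper's: first reduce to a ``vertex solution'' via $\MinCSP(2\ALLEQ)$ (your label-flipping trick on $G'$ is a clean variant of the paper's Lemma~\ref{lem:2ae-edge2partition}, which instead runs the FPT algorithm only on the constraints in $\mathcal{P}$ and obtains $3k$-closeness), then apply the Kawarabayashi--Thorup/Chitnis et al.\ unbreakability recursion with a terminal version and color-coding in the unbreakable base case. One caveat on your running-time accounting: the recursion does \emph{not} have depth $O(k)$---in the paper the progress measure is the number of \emph{unmarked edges} (with a recurrence $T(m)\le T(\ell)+T(m-\ell+q/2)+2^{O(k\log q)}n^{O(1)}$), and the $2^{O(k^2)}$ bound arises because the color-coding base case costs $2^{O(k\log q)}$ with $q=k^22^{2k+2}$, not from multiplying $2^{O(k)}$ branches over $O(k)$ levels.
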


However, for $r\ALLEQ$ with $r \geq 3$, we show nontrivial reductions from Paired Minimum $st$-Cut to show the following hardness in~\Cref{sec:ae}.

\begin{theorem}\label{characterization:3ae}
    \MinCSPwP$(r\ALLEQ)$ is W[1]-hard for all $r \ge 3$.
\end{theorem}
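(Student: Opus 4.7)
The plan is a parameterized reduction from Paired Minimum $st$-Cut (abbreviated PMC below), which is W[1]-hard when parameterized by the number of cut pairs, to \MinCSPwP$(r\ALLEQ)$. Given a PMC instance $(G=(V,E), s, t, \{P_1,\ldots,P_p\}, k_0)$, I would build a CSP instance $(I, k, \mathcal{P})$ with $k = O(k_0)$. The primary variables are $x_v$ for each $v \in V$; the terminals are pinned via a padding block of $N = \Theta(k_0+1)$ identical $r\ALLEQ$ constraints that force $x_s \neq x_t$, using the negation feature of $\SymLang(r,\{0,r\})$ (for instance, a clause of the form $r\ALLEQ(x_s, \neg x_t, d_1, \ldots, d_{r-2})$ with private dummies $d_j$ is satisfiable iff $x_s \neq x_t$). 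Placing all these padding clauses into $\mathcal{P}$ ensures that any assignment $\alpha$ satisfying $|\mathcal{C}_I(\alpha)| \geq |\mathcal{C}_I(\alpha_0)|$ must agree with $x_s \neq x_t$, so the level sets of $\alpha$ constitute a genuine $s$-$t$ partition of $G$.

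For each pair $P_i = (e_1^i, e_2^i)$ with $e_j^i = (u_j^i, v_j^i)$, I would attach a pair gadget built from $r\ALLEQ$ constraints. For $r \geq 4$ the cleanest gadget is a single $r\ALLEQ$ on $(x_{u_1^i}, x_{v_1^i}, x_{u_2^i}, x_{v_2^i})$ padded with $r-4$ private dummies: it is satisfied iff all four endpoints agree, giving a cost of exactly $1$ per cut pair. For $r = 3$ a single $3\ALLEQ$ cannot reference all four endpoints, so the gadget introduces a per-pair auxiliary variable $y_i$ and the two constraints $3\ALLEQ(x_{u_1^i}, x_{v_1^i}, y_i)$ and $3\ALLEQ(x_{u_2^i}, x_{v_2^i}, y_i)$, which together force all four endpoints and $y_i$ to coincide. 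All pair-gadget constraints are added to $\mathcal{P}$, so that an $\alpha_0$ derived from a PMC cut of size $\leq k_0$ achieves $|\mathcal{C}_I(\alpha_0) \Delta \mathcal{P}| \leq O(k_0)$. In the reverse direction, a good $\alpha$ (forced to satisfy $x_s \neq x_t$ by the padding) induces an $s$-$t$ partition whose PMC cut has cost $O(k_0)$, because each cut pair forces at least one gadget constraint to be violated while the uncut pairs contribute nothing.

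The main technical obstacle lies in the $r = 3$ gadget, because the natural auxiliary-variable construction over-penalizes: a pair whose two edges are individually uncut but happen to lie on opposite sides of the $s$-$t$ partition still incurs a gadget violation through $y_i$, and a pair with both edges cut costs $2$ instead of $1$. Resolving this requires either augmenting the gadget with additional coupling $3\ALLEQ$ constraints that merge the two edges of a pair onto a common cut-side, thereby aligning gadget cost with PMC cost up to constant factors, or else reducing from a modified PMC variant whose cost naturally matches this slack. Either way the blow-up in the parameter is $O(k_0)$, which is admissible for an FPT reduction and so suffices for W[1]-hardness; the extension to $r \geq 4$ then proceeds by padding each $r\ALLEQ$ constraint with private dummy variables and reusing the same $\mathcal{P}$.
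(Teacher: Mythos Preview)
Your reduction has a genuine gap, and it is not confined to $r=3$. For your $r\ge 4$ gadget $r\ALLEQ(x_{u_1^i},x_{v_1^i},x_{u_2^i},x_{v_2^i},\ldots)$, the clause is satisfied iff \emph{all four endpoints lie on the same side}, not merely iff the pair is untouched. An untouched pair whose two edges sit on opposite sides of the $s$--$t$ partition still violates the gadget. The number of such pairs in a PMC instance is not bounded by any function of $k_0$, so the intended witness $\alpha_0$ can violate $\gg k_0$ pair gadgets; since you place those gadgets in $\mathcal{P}$, you get $|\mathcal{C}_I(\alpha_0)\Delta\mathcal{P}|\gg k_0$ and the promise fails. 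Worse, with $N=\Theta(k_0)$ the constant assignment (all variables equal) satisfies every pair gadget and may already beat $\alpha_0$, so the padding does \emph{not} force $x_s\neq x_t$ in a good assignment. Your ``coupling $3\ALLEQ$'' fix cannot work either: forcing two arbitrary vertices of $G$ onto the same side would interfere with the $st$-partition itself.

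The paper's reduction from Paired Minimum $st$-Cut is structurally different and avoids this pitfall. It introduces \emph{edge clauses} $x_u=x_v$ (in $\mathcal{P}$) for every edge, and puts the pair clauses and the $x_s\neq x_t$ clauses \emph{outside} $\mathcal{P}$. Crucially, the pair clause uses negations, $x_{u_1}=\overline{x_{v_1}}=x_{u_2}=\overline{x_{v_2}}$, so it is a \emph{reward} that fires only when both directed edges of the pair are cut tail-to-head. The accounting is then a tight trade: cutting one edge costs one edge clause, cutting a full pair gains one pair clause, and separating $s,t$ gains $l{+}1$ terminal clauses; together with the $2l$ disjoint $st$-paths this forces any improving assignment to correspond to a PMC solution with exactly $l$ touched pairs. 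For $r=3$ the $4\ALLEQ$ clause is simulated by (duplicated copies of) its four $3\ALLEQ$ projections, with the edge-clause multiplicity scaled to keep the inequalities tight. None of these ingredients---edge clauses in $\mathcal{P}$, negated pair clauses outside $\mathcal{P}$, the exact numerical balance exploiting the $2l$-path lower bound---appear in your plan.
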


Finally, for $\LE$, we prove that even the first nontrivial case $r = 2$, which is equivalent to $2\SAT$, is already W[1]-hard (\Cref{sec:le}). Even though $\MinCSP(2\SAT)$ is in FPT, this result follows from the (slightly informal) fact that {\em Vertex Cover and Independent Set are equivalent in our model}. 

\begin{theorem}\label{characterization:le}
    \MinCSPwP$(\LE)$ is W[1]-hard for any $r \ge 2$.
\end{theorem}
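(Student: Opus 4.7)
The plan is to give a parameter-preserving reduction from \textsc{Independent Set} (parameterized by solution size $k$, known W[1]-hard) to \MinCSPwP$(\LE)$. I will first handle the base case $r=2$, where $\LE=\SymLang(2,\{0,1\})$ is literally 2SAT (the four negation patterns $b\in\mathbb F_2^2$ applied to the relation ``sum $\le 1$'' realize the four standard $2$-literal clause types), and then lift to $r\ge 3$ by padding with one auxiliary variable pinned to zero. The key conceptual point is that parameterizing by the symmetric difference of satisfied-clause sets naturally measures the quantity $|T|+e(T)$ when we flip a set $T$ of variables away from the all-zero baseline, so the promise-ball optimum is controlled by $|T|-e(T)$, which is exactly the quantity that detects an independent set of size $k$.

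Construction for $r=2$. Given $(G=(V,E),k)$, introduce a variable $x_v$ for each $v\in V$ and build two families of clauses: an \emph{edge clause} $\neg x_u\vee\neg x_v$ for each $(u,v)\in E$, and a \emph{vertex clause} ``$x_v=1$'' for each $v\in V$, realized as the binary clause on the tuple $(x_v,x_v)$ with the negation pattern that gives $x\vee y$. Let $\mathcal P$ be exactly the set of edge clauses. The promise is trivially met by $\alpha_0=\mathbf 0$, since $\mathcal C(\mathbf 0)$ equals $\mathcal P$.

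Correctness. For any assignment $\alpha$, write $T=\{v:\alpha(v)=1\}$ and let $e(T)$ count edges of $G$ with both endpoints in $T$. A direct count gives
\[
|\mathcal C(\alpha)|=(|E|-e(T))+|T|,\qquad |\mathcal C(\alpha)\,\Delta\,\mathcal P|=e(T)+|T|.
\]
By the equivalent formulation noted after the problem definition, a good output $\alpha$ must achieve $|\mathcal C(\alpha)|\ge|E|+M$ with $M:=\max\{|T|-e(T):|T|+e(T)\le k\}$, and a short check (using $|T|-e(T)=(|T|+e(T))-2e(T)$) shows $M=k$ precisely when $G$ has an independent set of size $\ge k$, and $M\le k-1$ otherwise. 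Thus some $\alpha$ attains $|\mathcal C(\alpha)|\ge|E|+k$ iff $G$ has an IS of size $\ge k$: the forward direction picks $T$ to be such an IS, and the reverse direction greedily deletes one endpoint of each edge inside $T$ to extract an IS of size at least $|T|-e(T)\ge k$. An FPT algorithm for \MinCSPwP$(\LE)$ would then decide $k$-\textsc{Independent Set} in FPT time, giving W[1]-hardness for $r=2$.

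Extension to $r\ge 3$. Add one fresh variable $z$ and pin it to $0$ by the single arity-$r$ clause on $(z,\dots,z)$ with relation $\SymRel(r,\{0,1\})$, which is satisfied iff $z=0$. Each edge clause becomes the arity-$r$ clause on $(x_u,x_v,z,\dots,z)$ with the same ``sum $\le 1$'' relation, collapsing to $x_u+x_v\le 1$ whenever $z=0$, and each vertex clause becomes the arity-$r$ clause on $(x_v,\dots,x_v)$ using the equivalent relation $\SymRel(r,\{r-1,r\})$, which forces $x_v=1$. Including the pin-$z$ clause together with the edge clauses in $\mathcal P$, the accounting of the $r=2$ case carries over verbatim on assignments with $z=0$. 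The main obstacle to verify is that assignments with $z=1$ cannot beat the ball optimum: such assignments already violate the pin-$z$ clause, and for $r\ge 4$ they also violate every edge clause (since $x_u+x_v+(r-2)\ge r-2\ge 2>1$), while for $r=3$ they force every edge clause with an endpoint outside $\{0\}$-valued coordinates to fail. A short case analysis then shows that any good output still satisfies $|T(\alpha)|-e(T(\alpha))\ge k$, so the IS extraction step works exactly as in the base case and completes the reduction.
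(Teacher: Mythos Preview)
Your reduction is correct and takes a somewhat different route from the paper. For $r=2$, the paper reduces from \textsc{Multicolored Independent Set} and \emph{duplicates} each binary clause so that any globally optimal assignment must satisfy all of $\mathcal P$, placing the optimum inside the $k$-ball and making ``good'' coincide with ``optimal''; you instead reduce from plain \textsc{Independent Set} and work directly with the ball optimum, trading the duplication trick for the greedy extraction step ($|T|-e(T)\ge k$ implies $G$ has an independent set of size $\ge k$, which you need in both directions of the decision). For $r\ge3$ the paper pads each clause with $r-2$ fresh variables $u_i$ appearing only as positive literals in the ``at least $r-1$ true'' form of the $\LE$ relation, so a one-line monotonicity argument forces $u_i=1$ in any good output and collapses to the $r=2$ case with no further work. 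Your pin-$z$ gadget is also correct, but the ``short case analysis'' you defer is doing real work and is underspecified as written (you must check both that $z=1$ assignments cannot raise the ball optimum and that a good output with $z=1$ still yields $|T|-e(T)\ge k$). The cleanest fix is to note that $z$ occurs only in $\SymRel(r,\{0,1\})$ clauses, where flipping $z$ from $1$ to $0$ can never unsatisfy a clause, so any output can be post-processed to $z=0$ without decreasing $|\mathcal C(\alpha)|$; that is exactly the paper's monotonicity shortcut, and it removes the need for separate $r=3$ and $r\ge4$ cases.
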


It is easy to see that the other theorems imply \Cref{thm:dichotomy-formal} in a straightforward way.

\section{FPT Algorithm for $r$AND}\label{sec:and}

The main goal of this section is to prove \Cref{characterization:and}. During the course of the
algorithm, we create instances involving \AND\ constraints of different arities. Hence, we will show a slightly stronger result:
\begin{theorem}\label{thm:main-and}
    There is an algorithm that solves \MinCSPwP($\cup_{r' \le r} r'\AND$) in $2^{O(k \log k)} n^{O(1)}$ time for all $r \ge 1$.
\end{theorem}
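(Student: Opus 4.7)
The plan is to combine color-coding with LP rounding inspired by the LP for Densest Subhypergraph. First I reduce the problem to a max consistent set problem: since each $r'$-ary \AND clause has a unique satisfying partial assignment on its scope, a set $T \subseteq \mathcal{C}$ is simultaneously satisfiable (call such $T$ \emph{consistent}) if and only if the induced partial assignments $\sigma_C$ for $C \in T$ are pairwise compatible. Any extension of $\sigma_T$ to a full assignment $\alpha$ satisfies at least $|T|$ clauses, so it suffices to find a consistent $T$ with $|T| \geq |\mathcal{C}_I(\alpha_0)|$. Writing $A^* := \mathcal{C}_I(\alpha_0) \setminus \mathcal{P}$ and $B^* := \mathcal{P} \setminus \mathcal{C}_I(\alpha_0)$, the promise gives $|A^*| + |B^*| \leq k$, and the target value of $|T|$ is $|\mathcal{P}| - |B^*| + |A^*|$.

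Since $|A^* \cup B^*| \leq k$ and each clause has at most $r$ variables, the critical variable set $V^* := \bigcup_{C \in A^* \cup B^*} V_C$ has size at most $rk = O(k)$. I would color the variables of $V$ with $rk$ colors using a standard $(|V|, rk)$-perfect hash family of size $2^{O(k \log k)} \log |V|$, ensuring that for at least one coloring in the family, the variables in $V^*$ are all distinctly colored. For each coloring $\chi$ in the family, I then solve the natural LP relaxation for max-$r$\AND: $x_v \in [0, 1]$ for each variable, $y_C \in [0, 1]$ for each clause, with $y_C \leq x_v$ (resp.\ $y_C \leq 1 - x_v$) if $v$ appears positively (resp.\ negatively) in $C$, maximizing $\sum_C y_C$. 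This is exactly the Densest Subhypergraph LP on the literal hypergraph augmented with the partition-matroid constraints $x_v + (1 - x_v) = 1$.

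To obtain an integer solution whose value matches $|\mathcal{C}_I(\alpha_0)|$, I would apply a persistence-type rounding (in the style of Nemhauser--Trotter for \textsc{Vertex Cover}) combined with the colorful structure to fix variables with integer LP-values, leaving only $O(k)$ variables concentrated around the color-coded support to enumerate by brute force at cost $2^{O(k)}$. The main obstacle I anticipate is establishing such LP persistence for general arity $r \geq 3$: while the $r = 2$ case admits a half-integral LP similar to the one for $2\SAT$/\textsc{Vertex Cover}, higher arities have integrality gaps, so an additional step --- for instance, branching on conflicts within $\mathcal{P}$ (each conflict is a pair of $\mathcal{P}$-clauses disagreeing on a variable, and at most $k$ such clauses can be removed, giving a $2^{O(k)}$ branching tree) before invoking the LP --- may be needed to confine the fractional behavior to the $O(k)$ color-coded support. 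Multiplying the $2^{O(k \log k)}$ family size by the per-coloring cost $2^{O(k)} \cdot n^{O(1)}$ yields the claimed $2^{O(k \log k)} n^{O(1)}$ running time; the extension to mixed arities $\bigcup_{r' \leq r} r'\AND$ is uniform since the LP accepts clauses of any arity $r' \leq r$ transparently.
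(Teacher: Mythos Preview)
Your proposal has the right high-level ingredients (branching on conflicts inside $\mathcal{P}$, color coding on variables, an LP of Densest-Subhypergraph flavor), but the central step is a genuine gap. The LP you write down for max-$r$\AND, with constraints $y_C \le x_v$ and $y_C \le 1-x_v$, is \emph{not} integral for $r\ge 3$, and there is no Nemhauser--Trotter-style persistence for it: the half-integrality that makes the \textsc{Vertex Cover}/$2\SAT$ LP work disappears once clauses have three or more literals. You notice this yourself (``higher arities have integrality gaps''), but the proposed patch---brute-forcing $O(k)$ variables ``concentrated around the color-coded support''---is not substantiated: your $rk$-coloring only makes the variables of $V^*$ rainbow, it does not force the LP's fractional mass to live on those colors, nor does it give you a mechanism to identify which variables are fractional. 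As written, the coloring and the LP are disconnected.

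What the paper does differently is to transform the problem so that the LP one eventually solves \emph{is} integral. After branching to make $\mathcal{P}$ satisfiable (as you suggest), one fixes a concrete assignment $\alpha$ with $\mathcal{C}_I(\alpha)=\mathcal{P}$, and then the task becomes ``which variables to flip from $\alpha$.'' A binary color coding (not $rk$ colors) is used to guess the set of flipped variables among the $\le rk$ variables touching $\mathcal{C}_I(\alpha_0)\Delta\mathcal{P}$. The key structural step you are missing is that flipped variables are grouped into equivalence classes by $\mathcal{P}$-connectivity, and one then builds an auxiliary hypergraph whose vertices are these classes (with a weight equal to the number of $\mathcal{P}$-clauses lost if flipped) and whose hyperedges are the non-$\mathcal{P}$ clauses that become satisfied. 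The resulting optimization, maximize $|E(H(V_0))|-\sum_{v\in V_0}w(v)$, has only monotone constraints $x_e\le y_v$ with \emph{no} negations, and a simple threshold rounding shows this LP is exact. That reformulation---not a persistence argument on the raw max-$r$\AND LP---is what makes the $2^{O(k\log k)}n^{O(1)}$ bound go through.
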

Since $\cup_{r' \le r} r'\AND \supseteq r\AND$, \Cref{thm:main-and} immediately shows that for any $r \ge 1$, the problem $\MinCSPwP(r\AND)$ is FPT.

\subsection{Preliminaries}

In this section we recall basic definitions about hypergraphs. A \emph{hypergraph} $H(V,E)$ is a generalization of graph where an edge can contain multiple vertices. For $e \in E$ and $v \in V$, say $v$ is incident to $e$ and $e$ is incident to $v$ if $e$ contains $v$. For a hypergraph $H$ and $V_0 \subseteq V$, we denote $H(V_0)$ as the \emph{induced subhypergraph} 
of $V_0$ in $H$, and denote $E(H(V_0))$ to be the edge set of the induced subhypergraph.

In the main algorithm in this section (provided in \Cref{thm:main-andalgo}), we need to solve the following problem: 
\begin{framed}
  \noindent\textbf{Problem:} \MISVW.

    \noindent
    \textbf{Input:} A hypergraph $H(V,E)$ and a weight function $w: V \to \mathbb{Z}$.
    
    \noindent
    \textbf{Output:} A vertex subset $V_0 \subseteq V$ that maximizes $|E(H(V_0))| - \sum_{v \in V_0} w(v)$.
\end{framed}
The problem is closely related to {\sc Densest Subgraph} (See \cite{lanciano2024survey} for a survey) and has a polynomial-time algorithm based on similar algorithms for {\sc Densest Subgraph}.
One algorithm is to observe that $|E(H(V_0))| - \sum_{v \in V_0} w(v)$ is supermodular and run a submodular minimization algorithm.\footnote{We thank an anonymous reviewer for this observation.} The below is an LP-based algorithm specific to our setting. 

\begin{lemma}\label{lemma:densestsubgraph}
    There exists a polynomial-time exact algorithm that solves \MISVW.
\end{lemma}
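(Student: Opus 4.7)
The plan is to adapt Charikar's LP-based algorithm for \textsc{Densest Subgraph} to accommodate the vertex weight function $w$. I would introduce variables $x_v \in [0,1]$ for each $v \in V$ and $y_e \in [0,1]$ for each $e \in E$, and consider the LP
\[
\mathrm{LP}^\star \;:=\; \max \sum_{e \in E} y_e - \sum_{v \in V} w(v)\, x_v \qquad \text{subject to} \qquad y_e \leq x_v \;\; \text{for all } e \in E,\; v \in e.
\]
For any $V_0 \subseteq V$, the integral assignment $x_v = \mathbf{1}[v \in V_0]$, $y_e = \mathbf{1}[e \subseteq V_0]$ is feasible and attains objective value $|E(H(V_0))| - \sum_{v \in V_0} w(v)$, so this LP is a valid relaxation of \MISVW. (The constraint $x_v \leq 1$ keeps the LP bounded when some $w(v)$ are negative.)

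For the rounding, note that at LP-optimality we may take $y_e = \min_{v \in e} x_v$ since the only constraints on $y_e$ are upper bounds. For each threshold $t \in (0, 1]$, set $V_t := \{v : x_v \geq t\}$ and $F(t) := |E(H(V_t))| - \sum_{v \in V_t} w(v)$. The layer-cake identity $\int_0^1 \mathbf{1}[x_v \geq t]\,dt = x_v$, which is insensitive to the sign of any multiplicative coefficient, then yields
\[
\int_0^1 F(t)\, dt \;=\; \sum_{e \in E} \min_{v \in e} x_v \;-\; \sum_{v \in V} w(v)\, x_v \;=\; \mathrm{LP}^\star.
\]
Hence some $t^\star \in (0,1]$ achieves $F(t^\star) \geq \mathrm{LP}^\star$, and since the integer optimum is trivially at most $\mathrm{LP}^\star$, the set $V_{t^\star}$ is exactly optimal.

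Algorithmically, I would solve the LP in polynomial time via any standard solver, then enumerate the at most $|V|$ breakpoints $\{x_v : v \in V\}$ (the function $F$ is piecewise constant between consecutive breakpoints) and output the $V_t$ maximizing $F$. There is no substantial obstacle; the two points requiring verification, namely the tightness $y_e = \min_{v \in e} x_v$ at optimum and the sign-independence of the layer-cake identity, are both immediate, so the argument goes through unchanged for integer weights of arbitrary sign. An alternative route is to observe directly that $V_0 \mapsto |E(H(V_0))|$ is supermodular (inspect each hyperedge's contribution to $f(A \cup B) + f(A \cap B)$ versus $f(A) + f(B)$) and invoke a polynomial-time submodular minimization algorithm on the complementary function, but the LP/threshold approach above is cleaner and more self-contained.
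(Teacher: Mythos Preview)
Your proposal is correct and follows essentially the same approach as the paper: the same LP (with the roles of the symbols $x$ and $y$ swapped), the same threshold rounding, and the same layer-cake/integration argument to show that some threshold recovers the LP optimum exactly. The paper additionally observes that \emph{every} threshold attains the LP optimum (since $F$ is a step function and its average equals the LP value while each value is at most the LP value), but your weaker ``some threshold'' statement together with breakpoint enumeration is already enough; the supermodularity alternative you mention is also noted by the paper.
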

\begin{proof}
    Consider the following LP: \begin{equation*}
\begin{array}{ll@{}ll}
\text{maximize}  & \sum_{e \in E} x_e - \sum_{v \in V} w(v)y_v &\\
\text{subject to}& x_e \le y_v & \forall e \in E, v \in V\ \mathrm{s.t.\ }e\mathrm{\ is\ incident\ to\ }v\\
                 & x_e \in [0,1]   & \forall e \in E \\
                 & y_v \in [0,1]   & \forall v \in V
\end{array}
\end{equation*}
Since we use $x_e \le y_v$ to mimic the constraint that ``one should select a vertex before selecting incident edges'', if the LP is integral (i.e. $x_e, y_v \in \{0,1\}$) then the set of vertices $v$ with $y_v=1$ in the optimal solution of the LP corresponds to the correct output to the \MISVW problem.

The algorithm solves the LP, providing optimal fractional solution $(x^\star_e, y^\star_v)$, and rounds the solution to an integer solution in the following way: Arbitrarily select $p \in (0,1]$ and set $x_e^p = [x^\star_e \ge p], y_v^p = [y^\star_v \ge p]$, where $[P]$ denotes the Iverson bracket and equals to $1$ if $P$ is true otherwise $0$ for some statement $P$.
The solution is clearly an integral solution to the LP.

Now we show that any such $p$ produces an integral solution that has the same value as that of $(x^\star_e,y^\star_v)$, denoted as $f^\star$. Define $f(p)$ to be the objective function value of the solution $(x_e^p, y_v^p)$. Clearly we have $f(p) \le f^\star$. We further have \begin{equation}
    \begin{aligned}
        \int_0^1 f(p) \mathrm{d} p & = \int_0^1 \left(\sum_{e \in E} [x_e^\star \ge p] - \sum_{v \in V} w(v) [y_v^\star \ge p] \right) \mathrm{d} p \\
        & = \sum_{e \in E} x_e^\star - \sum_{v \in V} w(v) y_v^\star = f^\star
    \end{aligned}
\end{equation}
So the set $\{x \in (0,1] \mid f(x) < f^\star\}$ has zero measure. Further from the fact that $f(p)$ is a step function of finite number of ``steps'' and each ``step'' has nonzero length, such set must be empty, and $f(p) = f^\star$.
\end{proof}

Our FPT algorithms in \Cref{sec:and} and \Cref{sec:2ae} will use the technique of {\em color coding} and its derandomization given below.

\begin{theorem} [\cite{naor1995splitters}] Given a set $U$ of size $n$ and integers $0 \leq a, b \leq n$, one can in time $2^{O(\min(a, b)\log(a+b))} \cdot  n \log n$ construct a family $\calF$ of at most $2^{O(\min(a, b) \log(a + b))} \log n$ 
colorings $U \to \{0, 1\}$ such that the following holds: for any sets $A, B \subseteq U$, $A \cap B = \emptyset$, $|A| \leq a$, $|B| \leq b$, there exists a coloring $\chi \in\mathcal{F}$ with $\chi(a) = 1$ for all $a \in A$ and $\chi(b) = 0$ for all $b \in B$.
\label{thm:splitter}
\end{theorem}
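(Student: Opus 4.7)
The plan is to build the family in two stages: first establish the randomized existence of a family with the claimed cardinality, and then derandomize using explicit combinatorial constructions, broadly in the spirit of Naor--Schulman--Srinivasan. For the existence step I would analyze the biased product distribution: pick $\chi : U \to \{0,1\}$ by independently setting $\chi(u) = 1$ with probability $p = a/(a+b)$. For fixed disjoint $A, B$ with $|A| \leq a$ and $|B| \leq b$, the probability that $\chi$ is identically $1$ on $A$ and identically $0$ on $B$ is $p^{|A|}(1-p)^{|B|} \geq p^a (1-p)^b = a^a b^b /(a+b)^{a+b}$. Using the standard Stirling-based estimate $\binom{a+b}{a} \leq \bigl(e(a+b)/\min(a,b)\bigr)^{\min(a,b)}$, this probability is at least $2^{-O(\min(a,b)\log(a+b))}$. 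Drawing $2^{O(\min(a,b)\log(a+b))}\log n$ independent biased colorings and union-bounding over the at most $n^{a+b}$ admissible pairs $(A,B)$ then shows that a suitable family exists.

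For derandomization, the cleanest route I would try is to replace the biased product distribution by an $\varepsilon$-almost $(a+b)$-wise independent biased distribution of support size $\bigl((a+b)\log n/\varepsilon\bigr)^{O(1)}$, for instance via Naor--Naor or the Alon--Goldreich--H{\aa}stad--Peralta construction of $\varepsilon$-biased spaces. Choosing $\varepsilon$ smaller than half the randomized success probability guarantees that every admissible $(A,B)$ is handled by some coloring in the support, yielding a deterministic family of the stated size together with the claimed construction time, since the support can be enumerated directly. An alternative is to compose an explicit $(n, a+b)$-perfect hash family with an enumeration over the $\binom{a+b}{a}$ ways of splitting $[a+b]$ into an $a$-side and a $b$-side, reducing the problem to a universe of size $a+b$.

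The main obstacle will be obtaining the $\min(a,b)$ dependence rather than the $a+b$ dependence that a black-box perfect-hashing reduction produces, since off-the-shelf perfect hash families incur an unavoidable $e^{a+b}$ factor in their size. Resolving this requires a construction tailored to splitters: one option is a recursive scheme that first applies a cheap partitioning hash to reduce to a smaller instance where the split ratio is more balanced, and then recursively builds a splitter on each part; tracking the recursion carefully yields a family of size $2^{O(\min(a,b)\log(a+b))}\log n$. A second option is to work directly with the almost-$k$-wise independent generator above and argue that the biased analysis is tight up to the $2^{O(\min(a,b)\log(a+b))}$ factor, so no polynomial blow-up in $\varepsilon$ creeps in.
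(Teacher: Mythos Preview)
The paper does not prove this statement at all: Theorem~\ref{thm:splitter} is quoted verbatim from \cite{naor1995splitters} and used as a black box in the color-coding arguments of Sections~\ref{sec:and} and~\ref{sec:2ae}. There is therefore nothing in the paper to compare your proposal against; the authors simply import the result.

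That said, your sketch is a reasonable outline of how the Naor--Schulman--Srinivasan construction actually proceeds, and you have correctly isolated the nontrivial point: a naive reduction through $(n,a+b)$-perfect hash families only gives a $2^{O(a+b)}$ factor, whereas the theorem claims $2^{O(\min(a,b)\log(a+b))}$. Your two suggested fixes (recursive splitting to balance the ratio, or working directly with biased almost-$(a+b)$-wise independence with $\varepsilon$ set to match the biased success probability) are both in the right spirit, though neither is fully worked out here. If you intend to actually supply a self-contained proof rather than cite the reference, the almost-$k$-wise route is the cleaner one to make rigorous, but you would need to verify carefully that the support-size bound of the $\varepsilon$-biased construction, after substituting $\varepsilon = 2^{-\Theta(\min(a,b)\log(a+b))}$, stays within the claimed budget; the polynomial dependence on $1/\varepsilon$ in standard constructions does give this, but the $\log n$ factor and the construction time need to be tracked through.
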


\subsection{Algorithm with Variable Solution}

Given an instance $(I,k,\mathcal{P})$, we say that $\mathcal{P}$ is {\em satisfiable} if some assignment satisfies all clauses in $\mathcal{P}$. We will give an algorithm that runs in $2^{O(k \log k)} n^{O(1)}$ time and solves $\MinCSPwP(\cup_{r' \le r} r'\AND)$ when the instance satisfies an additional promise that $\mathcal{P}$ is satisfiable.

When this additional promise is satisfied, it is easy to find an assignment satisfying all the clauses of $\mathcal{P}$ in polynomial time, so in the following we assume that the instance $(I,k,\mathcal{P})$ is equipped with such an assignment $\alpha$. In the next section, we show how to transform any instance into instances where this additional promise is true.

Equipped with this additional assignment, we then restructure the instance in the following way: Define $k' := k + |\mathcal{P} \Delta C_I(\alpha)|$ and $\mathcal{P}':= C_I(\alpha)$, and replace $(I,k,\mathcal{P},\alpha)$ with $(I,k',\mathcal{P}',\alpha)$, to keep $\mathcal{P}$ to be maximal in accordance with $\alpha$. In other words, we reset $\mathcal{P}$ to be the set of clauses satisfied by $\alpha$.
If the instance satisfies the promise (that some good assignment is close to the $\mathcal{P}$), the good assignment satisfies at most $|\mathcal{P}| + k$ clauses, which implies that $|\mathcal{P} \Delta C_I(\alpha)| \le k$, and $k' \le 2k$. If $k' > 2k$, we return an arbitrary assignment (to handle cases where the promise is not met). %

Now we show that, there exists an good assignment close to $\mathcal{P}$ whose Hamming distance to $\alpha$ is small.

\begin{lemma}\label{lem:smallhamdist}
    For a $\MinCSPwP(\cup_{r' \le r} r'\AND)$ instance $(I,k,\mathcal{P},\alpha)$, there exists a good assignment $\beta: V \to \{0,1\}$ with $|C_I(\beta) \Delta \mathcal{P}| \le k$ and $|\{v \in V \mid \alpha(v) \ne \beta(v) \}| \le rk$.
\end{lemma}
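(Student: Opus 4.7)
The plan is to produce $\beta$ by hybridizing the promised good assignment $\alpha_0$ with $\alpha$ on the complement of a small set of ``relevant'' variables. By the promise we may take $\alpha_0$ good with $\mathcal{D}:=C_I(\alpha_0)\Delta \mathcal{P}$ satisfying $|\mathcal{D}|\le k$. Set $D:=\bigcup_{C\in \mathcal{D}} \mathrm{scope}(C)$; since every constraint in the language has arity at most $r$, we have $|D|\le rk$. The candidate hybrid is
\[
\beta(v):=\begin{cases}\alpha_0(v) & v\in D,\\ \alpha(v) & v\in V\setminus D.\end{cases}
\]
The Hamming-distance bound is immediate: $\{v:\alpha(v)\neq \beta(v)\}\subseteq D$ has size at most $rk$.

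I would then establish goodness by showing $C_I(\beta)\supseteq C_I(\alpha_0)$ via a case analysis that uses the defining feature of AND: each clause $C$ has a unique satisfying tuple $\sigma_C$ on its scope. For $C\in \mathcal{P}\cap C_I(\alpha_0)$, both $\alpha$ and $\alpha_0$ must agree with $\sigma_C$ on $\mathrm{scope}(C)$, hence so does $\beta$; for $C\in C_I(\alpha_0)\setminus \mathcal{P}\subseteq \mathcal{D}$ the whole scope sits in $D$ where $\beta=\alpha_0$, so $\beta$ again satisfies $C$. A symmetric analysis gives $\mathcal{P}\setminus C_I(\beta)=\mathcal{P}\setminus C_I(\alpha_0)$, which yields the easy half of the symmetric-difference bound.

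The hard step will be $|C_I(\beta)\Delta \mathcal{P}|\le k$. The only obstruction is potential ``spurious'' clauses $C\notin \mathcal{P}\cup C_I(\alpha_0)$ whose scope straddles $D$ and $V\setminus D$ and that $\beta$ happens to satisfy; such clauses can appear for the naive hybrid. To handle them I plan to strengthen the construction as follows: observing that for every such spurious $C$ we must have $\alpha_0=\sigma_C$ on $D\cap\mathrm{scope}(C)$ and $\alpha=\sigma_C$ on $(V\setminus D)\cap \mathrm{scope}(C)$, and that since neither $\alpha$ nor $\alpha_0$ satisfies $C$ both intersections contain a variable in the disagreement set $F:=\{v:\alpha(v)\neq \alpha_0(v)\}$. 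For each spurious $C$ we can therefore pick a witness $w(C)\in (V\setminus D)\cap\mathrm{scope}(C)\cap F$ and redefine $\beta(w(C)):=\alpha_0(w(C))$, which destroys the spurious satisfaction without changing the Hamming distance to $\alpha$ beyond what is accounted for by variables of $F$.

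The key technical bookkeeping is to show that this fix-up terminates with a flipped set of size at most $rk$. I expect the cleanest way to finish is to bypass the explicit construction and instead choose $\beta$ to be the good assignment with $|C_I(\beta)\Delta \mathcal{P}|\le k$ that minimizes $d_H(\alpha,\beta)$, and to rule out $d_H>rk$ by a swap argument: if $|F^*|>|D^*|\le rk$, pick $v\in F^*\setminus D^*$; all clauses incident to $v$ are outside $\mathcal{P}\cup C_I(\beta^*)$ (clauses in $\mathcal{P}\cap C_I(\beta^*)$ would force $\alpha(v)=\beta^*(v)$ by the AND structure). Flipping $v$ alone either already yields a strictly better feasible $\beta'$ (contradicting minimality), or the obstruction is a single extra satisfied clause $C$ whose scope must contain a second variable $u\in F^*$, and a two-variable swap at $\{v,u\}$ then destroys $C$ and still improves the minimum. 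Ruling out the swap's side effects on other clauses is the main delicacy, but the AND constraint—which rigidly ties each clause to a single assignment on its scope—makes it tractable.
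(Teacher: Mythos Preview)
Your final plan---pick the good $\beta$ with $|C_I(\beta)\Delta\mathcal{P}|\le k$ that minimizes Hamming distance to $\alpha$, and then show every disagreement variable $v$ is incident to some clause in $C_I(\beta)\Delta\mathcal{P}$---is exactly the paper's proof. The paper simply asserts that flipping such a $v$ to $\beta'$ must lose a clause from $C_I(\beta)$ (``both cases require that some clause $C$ is satisfied in $\beta$ but violated in $\beta'$''), and then observes that clause lies in $C_I(\beta)\setminus\mathcal{P}$ since it is violated at $\alpha(v)=\beta'(v)$. You are right to be suspicious of that step: it is possible that flipping $v$ loses nothing but instead \emph{gains} a spurious clause $C\notin\mathcal{P}\cup C_I(\beta)$, which pushes $|C_I(\beta')\Delta\mathcal{P}|$ above $k$ and so excludes $\beta'$ from the feasible set without exhibiting any clause of $C_I(\beta)\Delta\mathcal{P}$ incident to $v$.

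However, your two-variable swap does not close the gap either. The witness $u\in F^*$ in the scope of the spurious $C$ may lie inside $D^*$, and then flipping $u$ can destroy a clause of $C_I(\beta)\setminus\mathcal{P}$, so the swapped assignment is no longer good. In fact the lemma as literally stated appears to be false. Take $r=2$, $k=1$, variables $v_1,v_2,u$, $\alpha=(0,0,0)$ so $\mathcal{P}=C_I(\alpha)=\varnothing$, clauses $C_0=(u)$, $C_1=(\neg v_1\wedge u)$, $C_2=(\neg v_2\wedge u)$, and promised $\alpha_0=(1,1,1)$ with $|C_I(\alpha_0)\Delta\mathcal{P}|=1$. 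A good $\beta$ needs $|C_I(\beta)|\ge 1$, hence $u=1$; but $|C_I(\beta)\Delta\mathcal{P}|=|C_I(\beta)|\le 1$ then forces $v_1=v_2=1$, so the unique feasible $\beta$ is $(1,1,1)$ at Hamming distance $3>rk=2$. Both the paper's one-flip argument and your swap break on this instance (the swap at $\{v_1,u\}$ yields $(0,1,0)$, which is not good). The downstream algorithm still succeeds here---it finds $(0,0,1)$, which is good though with $|C_I(\cdot)\Delta\mathcal{P}|=3$---so the defect is in requiring $\beta$ to be simultaneously $k$-close to $\mathcal{P}$, not in the overall result.
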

\begin{proof}
    Choose a good $\beta$ with $|C_I(\beta) \Delta \mathcal{P}| \le k$, and if there are multiple ones, choose the one with the smallest hamming distance to $\alpha$.
    
    For any $v$ with $\alpha(v) \ne \beta(v)$, consider assignment $\beta'$ that differs from $\beta$ only on $v$. $\beta'$ has smaller Hamming distance to $\alpha$ than $\beta$, so either $\beta'$ is not good, or $\beta'$ is good but does not satisfy the promise. Both cases require that some clause $C \in \mathcal{C}$ is satisfied in $\beta$ but violated in $\beta'$. Since all clauses are AND clauses and $\beta'(v) = \alpha(v)$, such clause $C$ is violated in $\alpha$.
    Therefore, all variables contributing to the hamming distance are incident to some clauses in $C_I(\beta) \Delta \mathcal{P}$, and the statement follows from the fact that $|C_I(\beta) \Delta \mathcal{P}| \le k$ and that the arity of all relations in $\cup_{r' \le r} r'\AND$ is no more than $r$.
\end{proof}

Now we use the (derandomized) color coding technique along with the algorithm of \Cref{lemma:densestsubgraph} %
for~\MISVW to solve our problem 
\MinCSPwP$(\cup_{r' \le r} r'\AND)$.

\begin{theorem}\label{thm:main-andalgo}
   There is an algorithm solving \MinCSPwP($\cup_{r' \le r} r'\AND$) in $2^{O(k \log k)} n^{O(1)}$ time when the input instance satisfies an additional promise that $\mathcal{P}$ is satisfiable.
\end{theorem}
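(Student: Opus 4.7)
The plan is to combine color coding on variables (to localize the flip set $F$ guaranteed by \Cref{lem:smallhamdist}) with LP rounding (in the spirit of \Cref{lemma:densestsubgraph}) on the resulting subproblem. First, perform the preprocessing stated just before \Cref{lem:smallhamdist}: find $\alpha$ satisfying all of $\mathcal{P}$, reset $\mathcal{P} := \mathcal{C}_I(\alpha)$ and $k' := k + |\mathcal{P}_{\mathrm{old}} \Delta \mathcal{C}_I(\alpha)|$ (returning an arbitrary assignment when $k' > 2k$), and invoke \Cref{lem:smallhamdist} to obtain a good $\beta$ whose flip set $F := \{v : \alpha(v) \ne \beta(v)\}$ has size at most $rk'$. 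Let $D := \mathcal{C}_I(\alpha) \Delta \mathcal{C}_I(\beta)$, so $|D| \le k'$, and let $R$ be the union of scopes of clauses in $D$, so $|R| \le rk'$. For each AND-clause $C$, let $\mathrm{wrong}_C \subseteq \mathrm{scope}(C)$ denote the scope variables on which $\alpha$ disagrees with the unique scope-assignment satisfying $C$; note $\mathrm{wrong}_C = \emptyset$ iff $\alpha$ satisfies $C$.

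Next, apply \Cref{thm:splitter} with $a = |F|$ and $b = |R \setminus F|$, both $O(k)$, to build a family of $2^{O(k \log k)} \log n$ colorings $\chi : V \to \{0, 1\}$ such that for some $\chi$ in the family one has $F \subseteq V^1 := \chi^{-1}(1)$ and $R \setminus F \subseteq V^0 := \chi^{-1}(0)$. For each $\chi$, I would search for $V_0 \subseteq V^1$ maximizing $|\mathcal{C}_I(\alpha_{V_0})|$ (where $\alpha_{V_0}$ flips $\alpha$ on $V_0$) via an LP with variables $y_v \in [0, 1]$ for $v \in V^1$ and $x_C \in [0, 1]$ for each clause $C$, objective $\max \sum_C x_C$, and constraints: for every Type 2 clause $C$ (i.e., $\mathrm{wrong}_C \ne \emptyset$) that is a \emph{candidate}, meaning $\mathrm{wrong}_C \subseteq V^1$ and $\mathrm{scope}(C) \setminus \mathrm{wrong}_C \subseteq V^0$, add $x_C \le y_v$ for every $v \in \mathrm{wrong}_C$; for every Type 1 clause $C$ (i.e., $\mathrm{wrong}_C = \emptyset$) with $S_C := \mathrm{scope}(C) \cap V^1$ non-empty, add $x_C \le 1 - y_v$ for every $v \in S_C$. (Type 1 clauses with $S_C = \emptyset$ contribute a constant $1$ to the objective; other clauses set $x_C = 0$.)

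Third, I would prove this LP is integral via threshold rounding, mirroring \Cref{lemma:densestsubgraph}: for uniform $t \in (0, 1]$, setting $y^t_v := \mathbf{1}[y^*_v \ge t]$ (with $y^*$ a fractional optimum) gives a feasible integer solution, and the identities $\int_0^1 \mathbf{1}[t \le \min_{v \in W} y^*_v]\,dt = \min_{v \in W} y^*_v$ and $\int_0^1 \mathbf{1}[t > \max_{v \in S} y^*_v]\,dt = 1 - \max_{v \in S} y^*_v$ imply that the expected LP value at $y^t$ equals the fractional optimum. Hence almost every threshold yields an integer $V_0 \subseteq V^1$ attaining the LP optimum in polynomial time. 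For correctness, under the correct coloring $F$ is itself a feasible integer point with LP value equal to $|\mathcal{C}_I(\beta)|$, since every Type 2 clause in $D$ is a satisfied candidate and the only broken Type 1 clauses are those in $D$; thus the integer optimum is at least $|\mathcal{C}_I(\beta)| \ge |\mathcal{C}_I(\alpha_0)|$. Moreover the LP undercounts actual satisfactions (a non-candidate Type 2 clause may be accidentally satisfied and contribute to the true $|\mathcal{C}_I(\alpha_{V_0})|$ without being counted by the LP), so the actual $|\mathcal{C}_I(\alpha_{V_0})|$ dominates the LP value, and the returned assignment is good. Iterating over all $2^{O(k \log k)} \log n$ colorings gives total time $2^{O(k \log k)} n^{O(1)}$.

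The main obstacle is establishing LP integrality despite the coexistence of "AND-type" constraints $x_C \le y_v$ (for Type 2 candidates) and "NAND-type" constraints $x_C \le 1 - y_v$ (for Type 1 clauses). The key observation making the threshold rounding argument go through is that each $x_C$ is restricted by only one such type, so under integration the rounded objective recovers both $\min_{v \in W} y^*_v$ and $1 - \max_{v \in S} y^*_v$ exactly; this decoupling is what reduces the subproblem to an integral LP in the style of \MISVW.
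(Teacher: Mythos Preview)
Your proposal is correct but takes a genuinely different route from the paper. The paper first collapses the color-$1$ variables into equivalence classes (via the transitive closure of ``share a clause of $\mathcal{P}$''), so that every $\mathcal{P}$-clause touching $L_1$ is incident to exactly one class; this lets the loss of $\mathcal{P}$-clauses be encoded as a vertex weight and reduces the per-coloring subproblem to the \MISVW{} LP of \Cref{lemma:densestsubgraph}, which has only constraints of the form $x_e \le y_v$. You instead keep individual variables and write a single LP with two flavors of constraints, $x_C \le y_v$ for candidate Type~2 clauses and $x_C \le 1-y_v$ for Type~1 clauses, and observe that threshold rounding still integrates exactly because each $x_C$ is governed by only one flavor. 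Your argument that the LP value at $V_0=F$ under the correct coloring equals $|\mathcal{C}_I(\beta)|$ (because every Type~2 clause in $D$ is necessarily a candidate, and every Type~1 clause is counted exactly) is sound, and the LP is indeed a lower bound on $|\mathcal{C}_I(\alpha_{V_0})|$ for all $V_0\subseteq V^1$ since non-candidate Type~2 clauses are pessimistically set to $0$. The upshot: the paper's equivalence-class trick lets it invoke \Cref{lemma:densestsubgraph} as a black box, whereas your approach sidesteps that construction entirely at the price of re-proving integrality for a slightly more general LP; your route is arguably more direct.
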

\begin{proof}
    First use \Cref{thm:splitter} with $a = b = rk$ to obtain a family of at most $2^{O(k \log k)} \log n$ colorings with binary labels for variables. The algorithm then produces an assignment for each coloring and reports the best among them. In the following we fix a particular coloring.
    
    Let $L_1$ be the set of variables with label $1$. Construct a binary relation $\sim$ on $L_1$, where $u \sim v$ if $u=v$ or some clause $C \in \mathcal{P}$ is incident to both $u$ and $v$. Since $\sim$ is reflexive and symmetric, its transitive closure is an equivalence relation. Denote $L_1 /\sim$ as the quotient set of the equivalence relation which includes all equivalence classes of the transitive closure. The intuition is that, the algorithm will produce the answer from $\alpha$ by flipping several variables with label 1, and variables in one equivalence class are considered a group and would be flipped or kept at the same time.

    We then construct a hypergraph $H(V_H,E_H)$ with weight function $w: V_H \to \mathbb{Z}$ as follows: \begin{itemize}
        \item For each equivalence class in $L_1 / \sim$, introduce a vertex $v$ in $V_H$, whose weight $w(v)$ equals to the number of clauses in $\mathcal{P}$ incident to any variable in the equivalence class. Intuitively, the weight measures that how many clauses we will lose if we flip this equivalence class. %
        \item For each clause $C \in \mathcal{C}_I \backslash \mathcal{P}$, if $C$ can be satisfied by flipping $L_1$ from $\alpha$, introduce a hyperedge incident to all equivalence classes containing incident vertices of $C$. So the clause will be satisfied if we flip all equivalence classes incident to this hyperedge.
    \end{itemize}

    Finally, the algorithm uses the algorithm for \MISVW proposed in \Cref{lemma:densestsubgraph} to find $V_0 \subseteq V_H$ that maximizes $|E(H(V_0))| - \sum_{v \in V_0} w(v)$, and returns the assignment produced by flipping in $\alpha$ all variables belonging to equivalence classes in $V_0$.

    The algorithm runs in $2^{O(k \log k)} n^{O(1)}$ time since the whole process can be done in polynomial time except trying all $2^{O(k \log k)} n^{O(1)}$ colorings produced by \Cref{thm:splitter}. Now we show that the algorithm produces a good assignment for instances satisfying the promise that some good assignment is close to the $\mathcal{P}$. Since we finally choose the best assignment among all colorings, it is sufficient to show that there exists some coloring that produces a good assignment.

    Fix any good assignment $\beta$ with $|C_I(\beta) \Delta \mathcal{P}| \le k$ and $|\{v \in V \mid \alpha(v) \ne \beta(v) \}| \le rk$, where $\alpha$ is an assighnment such that $\calP = \calC_I(\alpha_0)$.
    According to \Cref{lem:smallhamdist} such $\beta$ exists. Define $N(C_I(\beta) \Delta \mathcal{P})$ to be all variables incident to any clause in $C_I(\beta) \Delta \mathcal{P}$. Define $V_0 = \{v \in N(C_I(\beta) \Delta \mathcal{P}) \mid  \alpha(v) = \beta(v)\}$ and $V_1 = N(C_I(\beta) \Delta \mathcal{P}) \backslash V_0$. Note that both $|V_0|,|V_1| \leq rk$.%

    According to \Cref{thm:splitter}, there exists a coloring assigning $V_0$ to $0$ and $V_1$ to $1$. We consider the behavior of the algorithm when trying this coloring.
    
    We first claim that, there exists a set of equivalence classes in $L_1 / \sim$ whose union is exactly $V_1$. This is equivalent to the statement that there is no $u \in V_1$ and $v \in L_1 \backslash V_1$ such that $u \sim v$. Suppose $u \sim v$ and $u \in V_1$, then some $C \in \mathcal{P}$ is incident to both of them. Since $C$ is incident to $u \in V_1$ which satisfies $\alpha(u) \ne \beta(u)$, $C$ is violated by $\beta$, so $C \in \mathcal{C}_I(\beta) \Delta \mathcal{P}$. Then $v \in N(C_I(\beta) \Delta \mathcal{P})$, which finishes the proof since it implies $v \not\in L_1 \backslash V_1$.

    The previous statement shows that, there exists some $V^\star \subseteq V_H$, such that by flipping from $\alpha$ all variables in equivalence classes corresponding to $V^\star$, one can produce $\beta$. We further show the following two claims: \begin{itemize}
        \item For any $V \subseteq V_H$, suppose $\alpha'$ is produced by flipping from $\alpha$ all variables in every equivalence class in $V$, then $\cost_I(\alpha) - \cost_I(\alpha') \le |E(H(V))| - \sum_{v \in V} w(v)$. Notice that the latter is the objective of the \MISVW problem. In other words, the cost of $\alpha'$ would not be underestimated. To see this, consider clauses in $C_I(\alpha) \Delta C_I(\alpha')$. Clauses satisfied in $\alpha$ but unsatisfied in $\alpha'$ are those belonging to $\mathcal{P}$ and incident to $V$, counted exactly by $\sum_{v \in V} w(v)$. Clauses satisfied in $\alpha'$ but unsatisfied in $\alpha$ are partially counted by $|E(H(V))|$, since it does not include satisfied clauses in $\alpha'$ violated by the assignment produced from flipping $L_1$ in $\alpha$.
        However, all hyperedges in $E(H(V))$ correspond to clauses satisfied in $\alpha'$ but unsatisfied in $\alpha$.
        \item $\cost_I(\alpha) - \cost_I(\beta) = |E(H(V^\star))| - \sum_{v \in V^\star} w(v)$, i.e. the cost of $\beta$ is correctly estimated by the \MISVW objective. We only need to show that $E(H(V^\star))$ contains all clauses satisfied in $\beta$ but unsatisfied in $\alpha$. Since the coloring colors $V_0$ with $0$ and $V_1$ with $1$, all clauses satisfied in $\beta$ but unsatisfied in $\alpha$ have label-1 incident variables inside $V_1 \subseteq L_1$, so the clauses will introduce hyperedges in $H$ whose scope is inside $V^\star$, so they are all counted by $|E(H(V^\star))|$.
    \end{itemize}
    
    The previous two claims imply that, $V^\star$ is one of the optimal solutions to the \MISVW problem for the constructed hypergraph $H$, and any optimal solution of the \MISVW problem corresponds to a good assignment to the CSP instance, which finishes the proof.
\end{proof}

\subsection{From Clause Solution to Variable Solution for $r$AND}
\label{sec:clausetovarand}

Now we show how to solve \MinCSPwP$(\cup_{r' \le r} r'\AND)$ using \Cref{thm:main-andalgo}. We first define an auxiliary operation, which essentially fixes a given variable to a fixed boolean value $a \in \{0,1\}$.%

\begin{definition}[Value Assignment]\label{def:assignvalue}
    For a \MinCSPwP$(\cup_{r' \le r} r'\AND)$ instance $(I(V,\mathcal{C}),k,\mathcal{P})$, a variable $v \in V$ and a boolean value $a \in \{0,1\}$, to ``assign $v$ to $a$ in $(I,k,\mathcal{P})$'', we mean producing a \MinCSPwP$(\cup_{r' \le r} r'\AND)$ instance $(I'(V',\mathcal{C}'),k',\mathcal{P}')$ in the following way:%
\begin{itemize}
    \item $V' = V \backslash v$. Initially, $k' = k$.
    \item For all clauses $C(V_C,R_C) \in \mathcal{C}$, define $C' = (V'_C = V_C \backslash \{v\}, R'_C = \{r_{V_C \backslash \{v\}} \mid r \in R, r(v) = a\})$ if $v \in V_C$ and otherwise $C' = C$. If $R' = \varnothing$ and $C \in \mathcal{P}$ or $R' = \{0,1\}^{r-1}$ and $C \not\in \mathcal{P}$, decrease $k'$ by $1$. If $R'$ is not trivial, add $C'$ to $\mathcal{C}'$, and add $C'$ to $\mathcal{P}'$ if $C \in \mathcal{P}$.
\end{itemize}
Notice that the instance $(I',k',\mathcal{P}')$ may not satisfy the promise that there exists a good assignment $\alpha_0$ with $|\mathcal{C}_{I'}(\alpha_0) \Delta \mathcal{P}'| \le k'$. %
\end{definition}

The following claims are straightforward from the definition.
\begin{claims}\label{claim:keeppromise}
    Suppose instance $(I', k', \mathcal{P}')$ is produced from assigning $v$ to $a$ in $(I,k,\mathcal{P})$, and there exists an assignment $\alpha$ with the least cost for $I$ which satisfies $\alpha(v) = a$ and $|C_I(\alpha) \Delta \mathcal{P}| \le k$, then $\alpha_{V'}$ has the least cost for $I'$ and $|C_{I'}(\alpha_{V'}) \Delta \mathcal{P}'| \le k'.$
\end{claims}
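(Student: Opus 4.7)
My plan is to do a careful bookkeeping based on the four possible fates of an original clause $C = (V_C, R_C) \in \mathcal{C}$ under the operation of \Cref{def:assignvalue}:
(i) $v \notin V_C$, so $C' = C$ joins $\mathcal{C}'$;
(ii) $v \in V_C$ and $R'_C = \emptyset$, so $C$ is dropped, and any $\beta : V \to \{0,1\}$ extending some $\beta' : V' \to \{0,1\}$ with $\beta(v) = a$ fails to satisfy $C$;
(iii) $v \in V_C$ and $R'_C = \{0,1\}^{r-1}$, so $C$ is dropped, and every such $\beta$ satisfies $C$;
(iv) $v \in V_C$ with $R'_C$ nontrivial, so $C'$ joins $\mathcal{C}'$ and, by the definition of $R'_C$, $C'$ is satisfied by $\beta'$ iff $C$ is satisfied by $\beta$. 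In cases (i) and (iv), $\mathcal{P}$-membership of $C$ is transported faithfully to $\mathcal{P}'$-membership of $C'$.

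For the least-cost statement, I would fix any $\beta' : V' \to \{0,1\}$ together with its extension $\beta$ by $\beta(v) = a$, and then show $\cost_I(\beta) - \cost_{I'}(\beta') = N_1$, where $N_1$ is the number of type-(ii) clauses and is independent of $\beta'$. Indeed, types (i) and (iv) contribute identically to the two cost terms, type (iii) contributes zero to both, and each type-(ii) clause contributes $1$ to $\cost_I(\beta)$ but nothing to $\cost_{I'}(\beta')$. Since $\alpha$ minimizes $\cost_I$ among assignments with $\beta(v) = a$, and the map $\beta' \leftrightarrow \beta$ is a bijection between these assignments and arbitrary assignments on $V'$, the constant shift $N_1$ forces $\alpha_{V'}$ to minimize $\cost_{I'}$.

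The same case split handles the symmetric-difference bound. Types (i) and (iv) contribute identically to $|\mathcal{C}_I(\alpha) \Delta \mathcal{P}|$ and to $|\mathcal{C}_{I'}(\alpha_{V'}) \Delta \mathcal{P}'|$; a type-(ii) clause is always unsatisfied by $\alpha$, so it contributes to the old symmetric difference precisely when $C \in \mathcal{P}$, and a type-(iii) clause is always satisfied by $\alpha$, so it contributes precisely when $C \notin \mathcal{P}$. Summing gives $|\mathcal{C}_I(\alpha) \Delta \mathcal{P}| - |\mathcal{C}_{I'}(\alpha_{V'}) \Delta \mathcal{P}'| = N_{B1,\mathcal{P}} + N_{B2,\neg\mathcal{P}}$, which is exactly $k - k'$ by the definition of $k'$; hence $|\mathcal{C}_{I'}(\alpha_{V'}) \Delta \mathcal{P}'| \leq k - (k - k') = k'$. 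I do not expect any real obstacle; the main care needed is to verify that the four cases are exhaustive and that the semantics of $R'_C$ really encodes the restriction of $R_C$ to $v = a$, which is what makes the satisfaction equivalence in case (iv) valid.
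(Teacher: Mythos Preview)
Your proposal is correct and carries out precisely the routine verification that the paper omits: in the paper, \Cref{claim:keeppromise} and \Cref{claim:keepvalue} are simply declared ``straightforward from the definition'' with no proof given. Your four-case bookkeeping (clause untouched, clause forced false, clause forced true, clause shrunk nontrivially) together with the observations that $\cost_I(\beta)-\cost_{I'}(\beta')$ equals the number of type-(ii) clauses and that $k-k'$ equals exactly the number of type-(ii) clauses in $\mathcal{P}$ plus type-(iii) clauses outside $\mathcal{P}$, is exactly the intended argument and needs no further ideas.
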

\begin{claims}\label{claim:keepvalue}
    Suppose instance $(I', k', \mathcal{P}')$ is produced from assigning $v$ to $a$ in $(I,k,\mathcal{P})$, then for all assignments $\alpha, \beta: V_I \to \{0,1\}$ with $\alpha(v) = \beta(v) = a$, $\cost_I(\alpha) - \cost_{I'}(\alpha_{V'}) = \cost_I(\beta) - \cost_{I'}(\beta_{V'})$.
\end{claims}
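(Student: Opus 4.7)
} The plan is to write $\cost_I(\alpha) - \cost_{I'}(\alpha_{V'})$ as a sum of per-clause contributions and verify that each contribution depends only on the clause $C$ and the value $a$, not on the particular choice of $\alpha$ among assignments with $\alpha(v) = a$. Comparing with the same expression for $\beta$ then immediately yields the claim.

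First I would partition the clauses of $I$ into three groups according to \Cref{def:assignvalue}: (i) clauses $C$ with $v \notin V_C$, which appear unchanged in $I'$; (ii) clauses $C$ with $v \in V_C$ whose restriction $R'_C$ is non-trivial, which appear in $I'$ with scope $V_C \setminus \{v\}$ and relation $R'_C$; and (iii) clauses $C$ with $v \in V_C$ whose restriction $R'_C$ is trivial, i.e.\ either $\varnothing$ or $\{0,1\}^{|V_C|-1}$, which are deleted from $I'$. For group~(i) the satisfaction status is identical under $\alpha$ in $I$ and under $\alpha_{V'}$ in $I'$, contributing $0$ to the cost difference. For group~(ii), the definition of $R'_C$ as $\{r_{V_C\setminus\{v\}} \mid r \in R_C,\ r(v)=a\}$ guarantees that for any $\gamma$ with $\gamma(v)=a$, the clause $C$ is satisfied by $\gamma$ in $I$ iff its restricted form is satisfied by $\gamma_{V'}$ in $I'$; so the contribution is again $0$.

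For group~(iii), fix any such clause $C$. If $R'_C = \varnothing$, then under every assignment $\gamma$ with $\gamma(v)=a$ the clause $C$ is violated in $I$, and since $C$ is removed from $I'$ its contribution to $\cost_{I'}(\gamma_{V'})$ is $0$; hence the net contribution to $\cost_I(\gamma) - \cost_{I'}(\gamma_{V'})$ is exactly $+1$. If instead $R'_C = \{0,1\}^{|V_C|-1}$, then every $\gamma$ with $\gamma(v)=a$ satisfies $C$ in $I$, contributing $0$ to both costs. In either subcase the per-clause contribution depends only on which subcase of (iii) the clause falls into, so
\[
\cost_I(\alpha) - \cost_{I'}(\alpha_{V'}) = \bigl|\{C \in \mathcal{C} : v \in V_C \text{ and } R'_C = \varnothing\}\bigr|,
\]
which is a quantity determined entirely by $I$, $v$, and $a$. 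The same computation applied to $\beta$ (with $\beta(v)=a$) produces the identical value, yielding $\cost_I(\alpha) - \cost_{I'}(\alpha_{V'}) = \cost_I(\beta) - \cost_{I'}(\beta_{V'})$.

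The argument is essentially bookkeeping, so I do not expect any real obstacle. The only subtle point is handling group~(iii) carefully: one must check both the $R'_C = \varnothing$ case (where a constant extra $+1$ is absorbed in the accounting) and the $R'_C = \{0,1\}^{|V_C|-1}$ case (which for the languages $\cup_{r' \le r} r'\AND$ arises only when the original arity $|V_C|$ is $1$), and verify in both that the contribution is $\alpha$-independent once $\alpha(v)=a$ is fixed.
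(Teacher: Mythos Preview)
Your proposal is correct. The paper does not actually spell out a proof of \Cref{claim:keepvalue}; it simply remarks that the claim is ``straightforward from the definition'' and leaves it at that. Your per-clause bookkeeping argument is precisely the natural way to make this explicit, and the formula
\[
\cost_I(\alpha) - \cost_{I'}(\alpha_{V'}) = \bigl|\{C \in \mathcal{C} : v \in V_C \text{ and } R'_C = \varnothing\}\bigr|
\]
is the clean statement that makes the $\alpha$-independence transparent. There is nothing to add.
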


Now we show the algorithm, which branches on variables involved in conflicts in the solution to produce instances in which some assignment satisfies the solution. Notice that since we use \Cref{def:assignvalue} to do branching, some  \MinCSPwP$(\cup_{r' \le r} r'\AND)$ instances which do not satisfy the promise that some good solution is close to the current solution may be created.

\begin{theorem}\label{thm:variableprediction}
    Suppose there exists an algorithm $A$ that runs in $2^{O(k \log k)} n^{O(1)}$ time, solves \MinCSPwP$(\cup_{r' \le r} r'\AND)$ when the input instance to $A$ satisfies an additional promise that $\mathcal{P}$ is satisfiable. %
    Then there is an algorithm that solves \MinCSPwP$(\cup_{r' \le r} r'\AND)$ in $2^{O(k \log k)} n^{O(1)}$ time (with calls to algorithm $A$). %
\end{theorem}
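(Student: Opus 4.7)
The plan is a depth-bounded branching procedure that eliminates conflicts inside $\mathcal{P}$ until it becomes satisfiable, at which point algorithm $A$ can be invoked directly. Since each clause in $r'$\AND\ has a unique satisfying local assignment, $\mathcal{P}$ is unsatisfiable if and only if there are two clauses $C_1,C_2\in\mathcal{P}$ sharing a variable $v$ and demanding opposite values on $v$. A conflict (or a certificate of satisfiability together with an explicit satisfying assignment) can therefore be found in polynomial time. If $\mathcal{P}$ is already satisfiable, I invoke $A$ and return its output.

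Otherwise I pick any conflicting pair $C_1,C_2\in\mathcal{P}$ on variable $v$ (say $C_1$ forces $v=0$ and $C_2$ forces $v=1$) and branch on the two possible values of $\alpha_0(v)$, creating the two reduced instances obtained from \Cref{def:assignvalue} by assigning $v\leftarrow 0$ and $v\leftarrow 1$ respectively. In the branch $v\leftarrow 0$ the clause $C_2$ becomes infeasible ($R'_{C_2}=\varnothing$); since $C_2\in\mathcal{P}$, the definition forces the new parameter to be $k'\leq k-1$. The $v\leftarrow 1$ branch decreases $k$ symmetrically through $C_1$. Consequently, along any root-to-leaf path $k$ strictly decreases, so the recursion tree has depth at most $k$ and at most $2^k$ leaves. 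Whenever $k$ drops below $0$ in a branch, the promise on that branch is violated and I terminate it by returning any assignment.

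Correctness hinges on the fact that the branch matching $\alpha_0(v)$ preserves the promise with the reduced parameter: this is exactly \Cref{claim:keeppromise}. Therefore, along the (unknown) sequence of correct branches, the recursion terminates at an instance where $\mathcal{P}$ is satisfiable and the promise still holds, so $A$ returns a good partial assignment on the remaining variables. Combined with the variable fixings accumulated along the branch, this produces a full assignment $\beta$ on $V$, and \Cref{claim:keepvalue} ensures that the cost on the original $I$ differs from the cost on the reduced instance by a fixed offset depending only on the fixings. Thus the full assignment from the correct branch satisfies $|\mathcal{C}_I(\beta)|\geq|\mathcal{C}_I(\alpha_0)|$. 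Since I do not know which branch is correct, I compute $|\mathcal{C}_I(\cdot)|$ for every branch's output and return the best one, which is at least as good as $\alpha_0$.

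The running time is $2^k$ branches, each performing polynomial work to detect a conflict and apply \Cref{def:assignvalue}, plus a single terminal call to $A$ costing $2^{O(k\log k)}n^{O(1)}$, for a total of $2^{O(k\log k)}n^{O(1)}$. The only subtle point is making sure that the reduced parameter in \Cref{def:assignvalue} correctly accounts for clauses of $\mathcal{P}$ killed by the assignment; this is exactly what the $-1$ adjustment in the definition gives, and \Cref{claim:keeppromise,claim:keepvalue} do the rest of the bookkeeping. I do not anticipate any further obstacle.
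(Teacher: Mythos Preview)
Your proposal is correct and follows essentially the same approach as the paper: branch on a variable $v$ appearing with both polarities in $\mathcal{P}$, use \Cref{def:assignvalue} to produce two subinstances with strictly smaller parameter, invoke \Cref{claim:keeppromise} and \Cref{claim:keepvalue} for correctness, and call $A$ at the leaves; the paper's proof is organized as an induction on $k$ but the argument is the same. One small wording issue: there is not ``a single terminal call to $A$'' but up to $2^k$ of them (one per leaf), though this still yields $2^k\cdot 2^{O(k\log k)}n^{O(1)}=2^{O(k\log k)}n^{O(1)}$ as you claim.
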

\begin{proof}
     Given a \MinCSPwP$(\cup_{r' \le r} r'\AND)$ instance $(I, k, \mathcal{P})$, consider the following algorithm:
     \begin{itemize}
        \item If $k \le 0$, return any assignment (to handle instances violating the promise).
        \item Otherwise, if there exists $v \in V$ and $C_1,C_2 \in \mathcal{P}$ such that $C_1$ contains literal $v$ while $C_2$ contains literal $\lnot v$, branch on variable $v$: Produce two instances $(I_0,k_0,\mathcal{P}_0)$ and $(I_1,k_1,\mathcal{P}_1)$ by assigning $v$ to $0$ respectively $1$, and solve two instances, producing assignments $\alpha_0$ and $\alpha_1$. Extend two assignments to $V$ with $\alpha_0(v) = 0$ and $\alpha_1(v) = 1$ and return the assignment with smaller cost in $I$.
        \item Otherwise, run algorithm $A$ to solve the instance $(I,k,\mathcal{P})$.
    \end{itemize}
    
    We do induction on $k$ to prove the correctness of the algorithm. In the base case, when $k < 0$, the instance cannot satisfy the promise. Now suppose the statement holds for all integer less than $k$.
    \begin{itemize}
        \item If there exists $v \in V$ and $C_1,C_2 \in \mathcal{P}$ such that $C_1$ contains literal $v$ while $C_2$ contains literal $\lnot v$, then any assignment for $I$ would falsify one of $C_1$ and $C_2$. So the two produced instances $(I_0,k_0,\mathcal{P}_0)$ and $(I_1,k_1,\mathcal{P}_1)$ have $k_0 < k, k_1 < k$, and the algorithm would finally output some assignments for them from the induction hypothesis. Further, if $(I,k,\mathcal{P})$ satisfies the promise, there exists a good $\alpha$ such that $|C_I(\alpha) \Delta \mathcal{P}| \le k$. Without loss of generality suppose $\alpha(v) = 0$, then from \Cref{claim:keeppromise} $(I_0, k_0, \mathcal{P}_0)$ satisfies the promise, and the algorithm would produce the good assignment for $(I_0,k_0,\mathcal{P}_0)$. Further according to \Cref{claim:keepvalue}, extending the assignment with $\alpha(v) = 0$ results in a good assignment for $(I,k,\mathcal{P}).$
        \item Otherwise, for all $v \in V$, at most one of the literals $v$ and $\lnot v$ appears in $\mathcal{P}$. Assigning variables to satisfy the appeared literal, we end up with an assignment satisfying all clauses in $\mathcal{P}$. So the correctness of the algorithm follows from $A$.
    \end{itemize}

    The runtime of the algorithm follows the recursion \begin{equation}
        T(k) = \max\left(2\max_{0 \le i \le k-1}T(i) + n^{O(1)}, 2^{O(k\log k)} n^{O(1)}\right)
    \end{equation}
    since the instance size never increases within the recursion. $T(k) \le 2^{O(k \log k)} n^{O(1)}$ satisfies the recursion.
\end{proof}

\begin{proof} [Proof of~\Cref{thm:main-and}]
    Combining the algorithms in \Cref{thm:main-andalgo} and \Cref{thm:variableprediction}, we get the algorithm that solves \MinCSPwP$(\cup_{r' \le r} r'\AND)$ in $2^{O(k \log k)} n^{O(1)}$ time.
\end{proof}

\section{FPT Algorithms for $2\ALLEQ$}\label{sec:2ae}

In this section we prove \Cref{characterization:2ae}. 
\subsection{Preliminaries}

We recall useful definitions about graphs. Let $G=(V,E)$ be a multigraph, where parallel edges and loops are allowed. For two disjoint vertex sets $A,B \subseteq V$, we denote by $E(A,B)$ the set of edges between $A$ and $B$. When $(A,B)$ is a partition, then $E(A,B)$ is the set of edges in the corresponding cut. For a vertex subset $V' \subseteq V$, we denote by $G(V')$ the subgraph of $G$ induced by $V'\,.$ We write $E(G(V'))$ for the set of edges in $G(V')\,.$  For a set $E'\subseteq E\,,$ we denote by $G\setminus E'$ the multigraph $(V, E\setminus E')\,.$ For $S \subseteq V$, we let $N(S) := \{ v \in V \setminus S : \exists u \in S \mbox{ such that } (u, v) \in E \}$ be the open neighborhood of $S$. We say two cuts $(A, B)$ and $(X, Y)$ are $k$-close if $|E(A, B) \Delta E(X, Y)| \leq k$. When the graph is clear in the context, we define $n = \max(|V|,|E|)$ to be the size of the graph.

Notice that we can see clauses as edges connecting variables, assigning boolean values to variables as partitioning the variable set into two sets, and the equality and non-equality constraints correspond to uncut and cut constraints respectively. This motivates us to introduce the following graph problem.
Given a graph $G = (V, E)$, an edge type function $t : E \to \{0, 1 \}$ and a partition $(A,B)$ of $V$, define $C(A,B) = (t^{-1}(1) \cap E(A,B)) \cup (t^{-1}(0) \cap (E \backslash E(A,B))$ to be the set of edges satisfied by $(A,B)$, where edges of type $1$ need to be separated, while edges of type $0$ should be inside some vertex set.

\begin{framed}
    \noindent\textbf{Problem:} \CUT
    
    \noindent\textbf{Input:} A connected multigraph $G(V,E)$, an integer $k$, $\mathcal{P} \subseteq E$, and an edge type function $t: E \to \{0,1\}$.

    \noindent\textbf{Parameter:} $k$

    \noindent\textbf{Promise:}  There exists some $(A,B)$ that maximizes $|C(A,B)|$ and satisfies $|C(A,B) \Delta \mathcal{P}| \le k.$

    \noindent\textbf{Output:} Any partition $(A,B)$ of vertices that maximizes $|C(A,B)|$. (Notice that we do not require $|C(A,B) \Delta \mathcal{P}| \le k$ for the output.)
\end{framed}
Clearly, if the given graph for \CUT is not connected, we can solve each connected component separately, so connectivity of $G$ does not lose any generality for the problem. The following claim is straightforward.
\begin{claims}\label{claim:2ae=cut}
    $\MinCSPwP(2\ALLEQ)$ and \CUT are equivalent.
\end{claims}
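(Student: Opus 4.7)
The plan is to exhibit polynomial-time, parameter-preserving reductions in both directions. The correspondence is the natural one: variables become vertices, an assignment $\alpha:V\to\{0,1\}$ becomes the partition $(A,B)=(\alpha^{-1}(0),\alpha^{-1}(1))$, and each clause $C \in \mathcal{C}_I$ with relation $R_C \in \SymLang(2,\{0,2\})$ on variables $\{u,v\}$ becomes an edge $\{u,v\}$ whose type is $0$ if $R_C$ demands $x_u=x_v$ (no negation or double negation, i.e.\ $R_C=\SymRel(2,\{0,2\})$) and $1$ if it demands $x_u\neq x_v$ (single negation, i.e.\ $R_C=\SymRel(2,\{1\})$). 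Under this bijection, a clause is satisfied by $\alpha$ iff the corresponding edge lies in $C(A,B)$, so $\mathcal{C}_I(\alpha)\Delta\mathcal{P}$ equals $C(A,B)\Delta\mathcal{P}$ under the identification of clauses with edges, and the parameter $k$ is preserved on the nose.

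For the forward reduction \MinCSPwP$(2\ALLEQ)\to$ \CUT, given $(I,k,\mathcal{P})$ I would first build the multigraph $G$ as above. Since \CUT requires $G$ to be connected, I would split $G$ into its connected components $G_1,\dots,G_t$, let $\mathcal{P}_i:=\mathcal{P}\cap E(G_i)$, and solve each component independently as a \CUT instance with parameter $k$ (isolated vertices, if any, are placed arbitrarily). The sets $C(A,B)\Delta\mathcal{P}$ decompose additively across components, so if some global $\alpha_0$ witnesses the promise with $|\mathcal{C}_I(\alpha_0)\Delta\mathcal{P}|\le k$, then each component independently admits a partition $k$-close to $\mathcal{P}_i$, and conversely concatenating the per-component outputs gives an assignment that maximizes $|\mathcal{C}_I(\alpha)|$ for the original instance (the maximization is component-wise). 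The reverse reduction \CUT $\to$ \MinCSPwP$(2\ALLEQ)$ is immediate from the same bijection: create one variable per vertex, one clause per edge with negation pattern dictated by the edge type, and inherit $k$ and $\mathcal{P}$ unchanged.

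The main (minor) subtleties to record are: (i) an assignment $\alpha$ and its bitwise complement induce the same partition-as-a-cut, which is harmless since $2\ALLEQ$ clauses are invariant under global negation; (ii) the connectivity requirement in \CUT is handled by the component decomposition above; and (iii) both reductions are linear-time, so any $f(k)n^{O(1)}$ algorithm on one side yields an $f(k)n^{O(1)}$ algorithm on the other. There is no real obstacle — the claim is essentially bookkeeping that licenses the graph-theoretic language used in the remainder of Section~\ref{sec:2ae}.
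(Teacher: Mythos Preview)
Your bijection (variables $\leftrightarrow$ vertices, clauses $\leftrightarrow$ typed edges, $\alpha \leftrightarrow (\alpha^{-1}(0),\alpha^{-1}(1))$) and the component-wise handling of connectivity are exactly what the paper has in mind; the paper gives no proof at all and simply calls the claim straightforward.

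There is, however, a genuine mismatch you have not addressed in the forward reduction. The \MinCSPwP promise says only that \emph{some} assignment $\alpha_0$ has $|\mathcal{C}_I(\alpha_0)\Delta\mathcal{P}|\le k$, and a good output need only match $|\mathcal{C}_I(\alpha_0)|$. The \CUT promise, as defined just above the claim, says that some \emph{maximizing} partition is $k$-close, and the required output must itself be a maximizer. When you write ``each component independently admits a partition $k$-close to $\mathcal{P}_i$'' you have only exhibited the restriction of $\alpha_0$, which need not be optimal on that component; hence the \CUT promise may fail there, a \CUT oracle is then free to return an arbitrary partition, and your conclusion that the concatenation ``maximizes $|\mathcal{C}_I(\alpha)|$'' --- or is even as good as $\alpha_0$ --- does not follow. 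The paper elides this same point, but it is not vacuous: to make the equivalence honest you should either (i) restate \CUT with the weaker promise/output pair mirroring \MinCSPwP and observe that the algorithm of Section~\ref{sec:2ae} still works (it does, with a one-line tweak in the proof of Lemma~\ref{lem:2ae-edge2partition}: if the constructed $(A,B)$ already beats $(A^\star,B^\star)$, just return it), or (ii) give a separate argument reducing the general \MinCSPwP promise to the ``optimal-is-close'' case.
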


In the following, we present the algorithm for \CUT.

\subsection{From Edge Solution to Vertex Solution}

We first apply a pre-processing algorithm to the given \CUT\ instance, intending to make it in accordance with some partition of the vertex set.

\begin{lemma}\label{lem:2ae-edge2partition}
    There exists an algorithm that given a \CUT\ instance, runs in $4^k n^{O(1)}$ time and finds a bipartition $(A,B)$ of the vertex set $V$ such that there exists a \emph{good} partition $(A^\star,B^\star)$ which satisfies $|C(A^\star,B^\star) \Delta \mathcal{P}| \le k$ and $|C(A^\star,B^\star) \Delta C(A,B)| \le 3k$.
\end{lemma}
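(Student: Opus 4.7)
My plan is to reduce the task to a signed Edge Bipartization problem, using the promise to ensure that only a factor-$2$ approximation of a related minimization problem is needed.

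First, the triangle inequality for symmetric differences gives, for any partition $(A,B)$,
\[
|C(A^\star, B^\star) \Delta C(A, B)| \;\le\; |C(A^\star, B^\star) \Delta \mathcal{P}| + |\mathcal{P} \Delta C(A, B)| \;\le\; k + |\mathcal{P} \Delta C(A, B)|,
\]
so it suffices to produce $(A, B)$ with $|\mathcal{P} \Delta C(A, B)| \le 2k$. Since $(A^\star, B^\star)$ itself witnesses that $\min_{(A, B)} |\mathcal{P} \Delta C(A, B)| \le k$, we only need a $2$-approximation of this minimum.

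Next, I would reformulate the minimization of $|\mathcal{P} \Delta C(A, B)|$ as signed $2$-coloring. For each edge $e = (u, v) \in E$, define a preference $p(e) \in \{0, 1\}$ by $p(e) = t(e)$ if $e \in \mathcal{P}$ and $p(e) = 1 - t(e)$ otherwise; here $p(e) = 0$ prescribes the same side for $u,v$ and $p(e) = 1$ prescribes opposite sides. A short case analysis on whether $e \in \mathcal{P}$ and whether $(A,B)$ satisfies $e$ shows that the number of edges whose preference is violated by $(A, B)$ equals exactly $|\mathcal{P} \Delta C(A, B)|$. The task becomes: $2$-color the signed graph $(V, E, p)$ with at most $2k$ violated edges, given the guarantee that the optimum is $\le k$. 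This is signed Edge Bipartization, which reduces to ordinary Edge Bipartization (e.g.\ by subdividing every ``same-side'' edge with a fresh vertex, so that a proper $2$-coloring of the unsigned graph corresponds bijectively to a valid signed coloring, with edge-removals costing the same in both problems); the classical iterative-compression plus max-flow algorithm then solves it exactly in $2^k n^{O(1)}$ time, well within the $4^k$ budget. An alternative, more elementary route is to branch directly on a short odd walk of $(V, E, p)$: at each node of the search tree, locate such a walk and branch on which of its (at most a constant number of) edges is declared ``violated'' in the final partition, giving $O(4^k n^{O(1)})$ directly.

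The main subtlety lies in the reformulation step. Naively one might try to only satisfy the edges of $\mathcal{P}$; but the edges in $E \setminus \mathcal{P}$ that happen to be satisfied by $(A, B)$ also contribute to $|\mathcal{P} \Delta C(A, B)|$ and can in principle inflate this quantity arbitrarily, so simply satisfying $\mathcal{P}$ does not suffice. The preference function $p$ above places both kinds of disagreement (missed $\mathcal{P}$-edges and superfluously satisfied non-$\mathcal{P}$ edges) on equal footing inside a single signed bipartization instance, which is precisely what allows the off-the-shelf FPT machinery to deliver the required bound.
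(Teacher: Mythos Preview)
Your main line of argument is correct and is a genuinely different route from the paper's. Both approaches bottom out in a call to a $\MinCSP(2\ALLEQ)$/Edge Bipartization subroutine, but they set up that call differently.

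The paper restricts attention to the edges of $\mathcal{P}$ only: it finds a maximum simultaneously satisfiable subset $\mathcal{P}_1\subseteq\mathcal{P}$ (this is $\MinCSP(2\ALLEQ)$ on $\mathcal{P}$), takes any $(A,B)$ satisfying $\mathcal{P}_1$, and then bounds the ``extra'' satisfied edges $C(A,B)\setminus\mathcal{P}_1$ using the \emph{optimality} of $(A^\star,B^\star)$ (since $|C(A,B)|\le|C(A^\star,B^\star)|$). Your proposal instead encodes a preference on \emph{every} edge (anti-preferences on $E\setminus\mathcal{P}$) so that the number of violations is literally $|\mathcal{P}\Delta C(A,B)|$, and then solves that signed bipartization instance. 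This is cleaner in two respects: it never invokes optimality of $(A^\star,B^\star)$, only the promise $|C(A^\star,B^\star)\Delta\mathcal{P}|\le k$; and since you in fact solve the subproblem exactly, you obtain $|\mathcal{P}\Delta C(A,B)|\le k$ and hence $|C(A^\star,B^\star)\Delta C(A,B)|\le 2k$, one $k$ better than the stated $3k$. Amusingly, your closing paragraph warns that ``simply satisfying $\mathcal{P}$ does not suffice''---which is exactly what the paper does, but gets away with by leaning on the optimality hypothesis that your approach does not need.

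One caveat: your ``alternative, more elementary route'' of branching on a short odd walk is not well-founded as stated. There is no reason an unbalanced cycle in the signed graph should have constant length, so branching on its edges does not give a $c^k$ recursion. Stick with the iterative-compression route (or just cite the $\MinCSP(2\ALLEQ)$ black box, as the paper does); that part of your argument is solid.
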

\begin{proof}
    Fix any $(A^\star,B^\star)$ satisfying $|C(A^\star,B^\star) \Delta \mathcal{P}| \le k$. We first find the largest subset $\mathcal{P}_1 \sse \mathcal{P}$ that can be simutaneously satisfied by some bipartition using a $4^kn^{O(1)}$-time algorithm for $\MinCSP(2\ALLEQ)$~\cite{dabrowski2023almost}: Since $C(A^\star,B^\star) \cap \mathcal{P}$ is one possible solution, this means that $|\mathcal{P}_1| \geq |C(A^\star,B^\star) \cap \mathcal{P}| \geq \max(\mathcal{P}, |C(A^\star,B^\star)|) - k$. Let $(A,B)$ be any bipartition satisfying $\mathcal{P}_1$. Consider $C(A,B)$: It contains all edges in $\mathcal{P}_1$ and perhaps more. However, since the size cannot be larger than $|C(A^\star,B^\star)|$, there are at most $k$ edges in $C(A,B) \setminus \mathcal{P}_1$. Now we have \[ |C(A,B) \Delta C(A^\star,B^\star)| \leq |\mathcal{P} \Delta C(A^\star,B^\star)| + |\mathcal{P} \setminus \mathcal{P}_1| + |C(A,B) \setminus \mathcal{P}_1| \leq 3k. \] This completes the proof. \end{proof}
    
From \Cref{lem:2ae-edge2partition}, we can set $\mathcal{P}$ to be such $C(A,B)$ and triple $k$ to change the original \CUT\ instance to an equivalent \CUT\ instance, whose $\mathcal{P}$ is given by some $C(A,B)$ for a vertex partition $(A,B)$.

\subsection{Algorithms for Instances with Vertex Solution}
In the rest of this section, we prove the following theorem:

\begin{theorem}\label{thm:main-cut}
There exists an algorithm that, given a \CUT\ instance $(G(V,E),k,\mathcal{P},t)$ where $\mathcal{P} = C(A,B)$ for some partition $(A,B)$ of $V$, finds a partition $(X,Y)$ that maximizes $|C(X,Y)|$ in time $2^{O(k^2)} n^{O(1)}$. 
\end{theorem}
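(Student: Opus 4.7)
Let $(X^\star, Y^\star)$ be the promised near-optimal partition and set $D^\star := A \Delta X^\star$, the set of vertices whose side changes. A direct check shows that $C(X^\star, Y^\star) \Delta C(A,B) = \partial D^\star$, where $\partial D := \{(u,v)\in E : |\{u,v\}\cap D|=1\}$; hence $|\partial D^\star| \le k$. If we weight each edge by $w_e = +1$ for $e\notin\mathcal{P}$ and $w_e = -1$ for $e\in\mathcal{P}$, then flipping any set $D\subseteq V$ changes $|C|$ by exactly $w(\partial D)$. Thus the task is equivalent to: find $D\subseteq V$ with $|\partial D|\le k$ that maximizes $w(\partial D)$, under the promise that such a $D^\star$ exists.

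\textbf{Key structural observation.} Inside every connected component of $G\setminus\partial D^\star$, all vertices lie on the same side of $(D^\star, V\setminus D^\star)$, because any internal edge of such a component joins two vertices on the same side by the very definition of $\partial D^\star$. Consequently, once the edge set $F=\partial D^\star$ is known, $D^\star$ can be reconstructed by forming the ``cluster graph'' whose nodes are the connected components of $G\setminus F$ and whose edges are the edges of $F$: for the correct $F$, this cluster graph is bipartite, and on each bipartite component we independently choose which side to flip so as to maximize the signed weight.

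\textbf{Locating $F=\partial D^\star$.} The real algorithmic task is to enumerate candidate $k$-edge cuts $F$ efficiently. I would invoke the edge-unbreakability framework used for parameterized cut problems~\cite{kawarabayashi2011minimum, chitnis2016designing, cygan2014minimum, lokshtanov2022parameterized}: in $2^{O(k)}n^{O(1)}$ time construct a hierarchical decomposition of $G$ where every bag is $(O(k),k)$-edge-unbreakable, i.e., any edge cut of size $\le k$ inside a bag separates off at most $O(k)$ vertices on one side. A dynamic program over this decomposition carries, at each bag, (i) the partition of the $O(k)$-sized adhesion into $\{\text{flipped, unflipped}\}$ and (ii) a budget counter for how many edges of $\partial D$ have been consumed so far. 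Within each bag, edge-unbreakability forces $D^\star$ restricted to the bag to have one side of size $O(k)$, so local candidates can be enumerated in $2^{O(k)}$ time per DP state. Multiplying $2^{O(k)}\cdot O(k)$ DP states against $2^{O(k)}$ local enumerations over $n^{O(1)}$ bags gives the claimed $2^{O(k^2)}n^{O(1)}$ bound.

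\textbf{Main obstacle.} The principal difficulty is the bookkeeping in the DP: correctly summing signed weights $w(\partial D)$ across bags without double-counting cut edges that straddle adhesions, respecting the global budget $|\partial D|\le k$, and composing the locally enumerated ``small-side'' candidates into a globally consistent $D$. These are the standard (but delicate) technicalities of unbreakability-based FPT algorithms, and once handled the theorem follows.
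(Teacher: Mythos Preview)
Your high-level strategy (unbreakability plus a DP over a decomposition) is the same framework the paper uses, and your reformulation in terms of $w(\partial D)$ is correct. However, the proposal has a genuine gap precisely at the step you gloss over: ``local candidates can be enumerated in $2^{O(k)}$ time per DP state.'' Knowing that one side of $D^\star$ inside a bag has at most $q$ vertices does \emph{not} let you list candidates in $2^{O(k)}$ time; the bag may have $\Theta(n)$ vertices, so naively you face $n^{O(q)}$ choices. This is exactly the place where the real work happens, and the paper spends its main technical claim (color coding over the small side and its neighbourhood) on it. Moreover, because the DP must produce a solution for \emph{every} state of the adhesion (equivalently, the paper's terminal version must output a cut for every $f:T\to\{X,Y\}$ and every budget), the unbreakability threshold $q$ must absorb $2^{O(k)}$ many subproblems, forcing $q=2^{\Theta(k)}$; the color coding then costs $2^{O(k\log q)}=2^{O(k^2)}$. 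Your arithmetic ``$2^{O(k)}\cdot O(k)$ states times $2^{O(k)}$ local work $=2^{O(k^2)}$'' is simply wrong (that product is $2^{O(k)}$), which is a symptom of not having located where the $k^2$ actually originates.

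A secondary point: you assert an off-the-shelf decomposition into $(O(k),k)$-edge-unbreakable bags with $O(k)$ adhesions built in $2^{O(k)}n^{O(1)}$ time. The paper does not rely on such a black box; it runs the Kawarabayashi--Thorup style recursion directly, finding a $(k,q)$-balanced cut (or certifying none exists), recursing on one side for all terminal states, and then contracting/marking edges to shrink the instance. If you want to use a tree decomposition instead, you need to specify the unbreakability parameter and verify the DP composes correctly with the signed objective $w(\partial D)$ and the global budget; that is not merely ``standard bookkeeping'' here, because the objective is not monotone and the budget constraint interacts with the choice of $q$.
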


We first show how this theorem implies the main theorem. 

\begin{proof}[Proof of~\Cref{characterization:2ae}]
    According to \Cref{claim:2ae=cut}, we only need to find such algorithm for \CUT. For a \CUT\ instance, we first apply the algorithm stated in \Cref{lem:2ae-edge2partition} to produce an instance equivalent to the input whose $\mathcal{P}$ is given by some $C(A,B)$ for a vertex partition. Then we apply the algorithm stated in \Cref{thm:main-cut} to solve the new instance. The runtime is $4^kn^{O(1)} + 2^{O(k^2)} n^{O(1)} = 2^{O(k^2)} n^{O(1)}$.
\end{proof}

Recall that we are given a vertex partition $(A,B)$ as the solution, while there exists a good partition $(X,Y)$ with $|C(A, B) \Delta C(X, Y)| \leq k$. Let $AX := A \cap X$ and define $AY, BX, BY$ similarly. 

Our high-level approach follows the framework first introduced by Kawarabayashi and Thorup~\cite{kawarabayashi2011minimum} for Minimum $k$-cut and its refinement by Chitnis et al.~\cite{chitnis2016designing}. 
Let us define the following {\em terminal version} of the problem. 
We will fix $k$ to be the parameter of the initial instance, which \emph{remains invariant throughout the recursive calls to the terminal version}.

\begin{mdframed}[nobreak=true]
    \noindent\textbf{Problem:} \CUTTERMINAL

    \noindent\textbf{Input:} A connected multigraph $G(V,E)$, an integer $k' \le k$, a $\mathcal{P} \subseteq E$, an edge type function $t: E \to \{0,1\}$, a set of terminals $T \subseteq V$ with $|T| \le 2k$ (Notice that it is $k$ rather than $k'$) and a \emph{marked edge set} $M \subseteq E$.

    \noindent\textbf{Parameter:} $k'$

    \noindent\textbf{Promise:} \begin{itemize}
        \item $\mathcal{P} = C(A,B)$ for some partition $(A,B)$;
        \item $M$ is a matching allowing parallel edges: Any two edges in $M$ are either completely disjoint or parallel copies of each other.
        \item There exists a good cut $(X,Y)$ with $|C(A,B) \Delta C(X,Y)| \le k'$ and $M \subseteq E(X, Y) \cap E(A, B)$.
    \end{itemize}

    \noindent\textbf{Requirement:} For \textit{each} function $f : T \to \{X, Y\}$ and $0 \le k'' \le k$, output a cut $(X_{f,k''}, Y_{f,k''})$ satisfying all the following or output nothing:
    \begin{enumerate}
        \item {consistent with $f$}: $f(v) = X$ iff $v \in X_{f, k''}$ for all $v \in T$,
        \item {consistent with $M$}: $M \subseteq E(A, B) \cap E(X_{f,k''}, Y_{f,k''})$,
        \item {$k''$-close to $(A, B)$}: $|C(A, B) \Delta C(X_{f,k''}, Y_{f,k''})| \leq k''$,
    \end{enumerate}
    For any $(X,Y)$ which is a good cut that satisfies $|C(A,B) \Delta C(X,Y)| \le k''$ and $M \subseteq E(X, Y) \cap E(A, B)$ for some $k'' \le k'$, the output $(X_{f^*,k''}, Y_{f^*,k''})$, where $f^\star$ is consistent to $(X,Y)$, should additionally be a good cut (not necessarily equal to $(X, Y)$).
\end{mdframed}

Clearly, \CUT is a special case with $M = T = \varnothing$ after doing the pre-processing in \Cref{lem:2ae-edge2partition}. In the following we introduce the algorithm solving \CUTTERMINAL.

Let $q := k^2  2^{2k+2}$. Before introducing the algorithm, we require one additional definition.
\begin{definition}[\textnormal{$(k,q)$-cut}]\textnormal{
    Let $G=(V,E)$ be a multigraph and let $M\subseteq E$ be the set of marked edges which is a matching allowing parallel edges. A cut $L \subseteq V$ is called $(k, q)$-balanced (more simply a $(k,q)$-cut) if the following conditions are met. 
\begin{itemize}
    \item $|E(L, V \setminus L)| \leq k$.
    \item Both $G(L)$ and $G(V \setminus L)$ are connected (using both edges inside and outside $M$).
    \item Both $G(L)$ and $G(V \setminus L)$ contains at least $q$ non-marked edges (edges outside $M$).
\end{itemize} }
\end{definition}

Our algorithm follows a simple win-win strategy. If there is no $(k, q)$-cut, a simple color-coding algorithm will solve the terminal problem. If there is a $(k, q)$-cut, by recursively solving a smaller subproblem, one can reduce the number of unmarked edges. Combining this with the algorithm to compute a $(k, q)$-cut, one can solve the terminal version in FPT time. 

Now we present each of the three parts. 
We start by directly solving the case where there is no $(k, q)$-cut, using the (derandomized) coloring coding technique.

\begin{claims}
One can solve \CUTTERMINAL\ in $2^{O(k \log q)} n^{O(1)}$ time when there is no $(k,q)$-cut in the instance. 
\label{claim:color}
\end{claims}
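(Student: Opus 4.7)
The plan is to use the no-$(k,q)$-cut hypothesis as a structural lever: any good cut $(X,Y)$ close to $(A,B)$ must have a small side on its flip set $D := X \Delta A$, and I then enumerate this small side by color coding. Fix a target pair $(f, k'')$ from the requirement; the goal is to find a good cut $(X,Y)$ consistent with $f$ and $M$ and $k''$-close to $(A,B)$, whenever such a cut exists.

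First I would prove that $|D| = O(q)$ or $|V\setminus D| = O(q)$. Starting from $|E(D, V\setminus D)| = |C(A,B)\Delta C(X,Y)| \le k'' \le k$, the $(k,q)$-unbreakability hypothesis applied to the partition $(D, V\setminus D)$ forces one side to be disconnected or to contain fewer than $q$ non-marked edges. Each marked edge lies in both $E(A,B)$ and $E(X,Y)$, so neither endpoint can flip; hence marked edges never cross the flip and always sit entirely inside $G(D)$ or $G(V\setminus D)$. Combining (i) the $<q$ non-marked edges on the small side, (ii) the matching property of $M$ (which bounds the distinct endpoints of marked edges on the small side), and (iii) the bound of $k$ on vertices incident only to crossing edges (which bounds the ``isolated in the induced subgraph'' count), a direct counting argument yields $O(q)$ vertices on the small side. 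The disconnected case is reduced to the connected one by decomposing $G(D)$ into connected components $D_1, \dots, D_p$ with $\sum_i |E(D_i, V\setminus D_i)| \le k$ and applying the same reasoning componentwise.

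Next, assume WLOG that $D$ is the small side with $|D| \le O(q)$. I would apply \Cref{thm:splitter} with $a = O(k)$ and $b = O(q)$ to obtain a family of $2^{O(k \log q)} \log n$ colorings $\chi : V \to \{0,1\}$ such that, for at least one $\chi$ in the family, the ``interface'' vertices $\partial D \cup T$ (total size at most $4k$, since $|T| \le 2k$ and $|\partial D| \le 2k$) receive the distinguishing color while the remaining $O(q)$ small-side vertices are colored differently. This requires first identifying, in polynomial time, an $O(q)$-sized candidate region $U \subseteq V$ that is guaranteed by the structural step to contain $D$, so that the splitter's universe has size $O(q)$ rather than $|V|$. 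For each coloring I would reconstruct the induced cut in polynomial time: the guess of the interface determines the candidate cross-edges, the connected components of $G$ minus this interface are assigned to $X$ or $Y$ consistently with $f$ and $M$, and the resulting cut is checked for $k''$-closeness to $(A,B)$. The best valid cut across all colorings and all target pairs $(f, k'')$ is returned.

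The main obstacle I foresee is the polynomial-time extraction of the $O(q)$-sized candidate region $U$ containing $D$ from the $(k,q)$-unbreakable graph; without this reduction the splitter would produce $|V|^{O(k)}$ colorings rather than the target $2^{O(k \log q)}$. Additional care is needed for the disconnected case of $G(D)$ (with up to $k$ components to track) and for the parallel copies of marked edges in $M$, both of which slightly complicate the counting in the structural step and the reconstruction procedure after color coding.
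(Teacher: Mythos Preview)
Your high-level plan---show that the flip set $D = X \Delta A$ (which equals the paper's $R = AY \cup BX$) has bounded size, then recover it by color coding---is the right one, and your observation that marked edges never cross $D$ is correct. But the execution has two genuine gaps, and the obstacle you flag at the end is in fact a misreading of \Cref{thm:splitter}.

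\textbf{The splitter does not need a small universe.} In \Cref{thm:splitter} the family has size $2^{O(\min(a,b)\log(a+b))}\log n$; only the construction time scales with $n$. Hence you should apply it to all of $V$, taking the two sets to be $R$ itself (size $O(qk)$: up to $k{+}1$ connected pieces of $G\setminus S$ lie in $R$, each with at most $q$ non-marked edges and hence at most $2q{+}2$ vertices by the matching property of $M$) and its open neighborhood $N(R)$ (size at most $k$). Then $\min(a,b)=k$ and $\log(a{+}b)=O(\log q)$, giving $2^{O(k\log q)}$ colorings directly---no precomputed candidate region $U$ is needed. This dissolves your stated obstacle, but note your size bound is off by a factor of $k$: the no-$(k,q)$-cut hypothesis only bounds each \emph{connected} piece of $R$, not $R$ as a whole.

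\textbf{Your coloring scheme does not isolate $D$.} You propose to distinguish the interface $\partial D \cup T$ from the \emph{interior} of $D$; but the splitter guarantees nothing about vertices in $V\setminus D$, so after coloring you still cannot tell which components belong to $D$. The paper instead colors all of $R$ with $1$ and $N(R)$ with $0$: for the successful coloring, deleting color-$0$ vertices separates $R$ from the rest, and each connected component $R_i$ of $G(R)$ appears verbatim as a component of the remaining graph. Even then you cannot read off $R$ directly, because other color-$1$ components may sit inside $V\setminus R$. The paper handles this by additionally guessing, for each of the $\ell \le k{+}1$ components $R_i$, the pair $(a_i^+,a_i^-)$ recording edges gained and lost when flipping $R_i$ (at most $k^{O(k)}$ guesses). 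Any disjoint collection of components matching these profiles---after filtering those forced in or out by consistency with $f$ and $M$---yields a cut with the same value and the same closeness to $(A,B)$ as $(X,Y)$, which is what the requirement asks for. Your reconstruction sketch (``assign components consistently with $f$ and $M$'') omits this profile-matching step and does not explain how components touching neither a terminal nor a marked edge are handled.
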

\begin{proof}Given an \CUTTERMINAL\ instance $I = $ $(G, k', \mathcal{P}= C(A,B), t, T, M)$, the algorithm considers every $f : T \to \{ X, Y \}$ and $0 \le k'' \le k'$ and does the following. Assume that $f$ and $k''$ is consistent with some good cut $(X, Y)$, because apart from running time, the requirement only concerns for the correct $f$ and $k''$. 

Note that the set of edges $S = C(A, B) \Delta C(X, Y) = E(A,B) \Delta E(X,Y) = E(AX, BX) \cup E(AY, BY) \cup E(AX, AY) \cup E(BX, BY)$ forms a cut between $L := AX \cup BY$ and $R := AY \cup BX$, no matter how $t$ assigns type to edges. Since $G$ is connected and $|S| \leq k''$, $G \setminus S$ has at most $k'' + 1$ connected components.

Assume towards a contradiction that there are two connected components $L', R'$ of $G \setminus S$ such that $L' \subseteq L$ and $R' \subseteq R$ and both $G(L')$ and $G(R')$ have at least $q$ non-marked edges. Then the minimum cut separating $L'$ and $R'$ in $G$ will cut at most $k'' \le k$ edges where the two resulting parts are connected and contain $L'$ and $R'$ respectively, which contradicts the fact that there is no $(k, q)$-cut. \footnote{We use the fact that any minimum $st$-cut in a connected graph have both sides connected.}

Therefore, at least one of $L$ and $R$ (say $R$) has the property that every connected component of $G \setminus S$ contained in $R$ has at most $q$ non-marked edges. Since marked edges form a matching, the number of vertices in such a component can be at most $2q + 2$, so we have $|R| = O(qk)$. 

Let $\{R_i\}_{i=1}^\ell$ be the connected components of $G(R)$. Define\\ $a_i^+ = |C(A \Delta R_i, B \Delta R_i) \backslash C(A,B)|$ and $a_i^- = |C(A,B) \backslash C(A \Delta R_i, B \Delta R_i)|$. Notice that $|C(X,Y)| = |C(A,B)| + \sum_{i=1}^\ell (a_i^+ - a_i^-)$ and $|C(X,Y)\Delta C(A,B)| = \sum_{i=1}^\ell (a_i^+)+(a_i^-) \le k'$. Guessing the correct values of $\ell$ and $a^+_i, a^-_i$ takes at most $k^{O(k)}$ time.

Let $N = N(R)$ be the open neighbor of $R$ in $G$. Note that $|N| \leq |E(R, V \setminus R)| = |C(A, B) \Delta C(X, Y)| \leq k''$. Use \Cref{thm:splitter} with $a = |R|, b = k$ to have a family of at most $2^{O(k \log q)} \log n$ colorings that color each vertex using $\{ 0, 1 \}$. There exists a coloring $\chi$ that colors $R$ with $1$ and colors $N$ with $0$. For each $i \in [\ell]$, find all candidates $R'_i \subseteq V$ that is {\em identical} to $R_i$ with respect to the guessed information about $R_i$. Formally, $R'_i$ is a connected component of $G$ after deleting all vertices of color $0$, and in line with the guessing $a_i^+$ and $a_i^-$.

For the correct coloring $\chi$, such $R'_i$ exists since $R_i$ satisfies the condition. Also, finding all such candidates $R'_i$ is easy in polynomial time since one only needs to scan connected components after deleting color-$0$ vertices.

For any $R'_1, \cdots, R'_{\ell}$ that are disjoint, since $R'_i$ and $N(R'_j)$ are disjoint for any $i, j \in [\ell]$, the set of edges newly cut/uncut by updating $A \leftarrow A \Delta R'_i$ and $B \leftarrow B \Delta R'_i$ are disjoint for all $i$. 
Therefore, letting $R' := \cup_i R'_i$ and letting $A' \leftarrow A \Delta R'$ and $B' \leftarrow B \Delta R'$ the resulting partition has $|C(A', B')| = |C(A, B)| + \sum_i (a^+_i - a^-_i) = |C(X, Y)|$ and $|C(A', B') \Delta C(A, B)| = \sum_i (a^+_i + a^-_i) \le k''$, so the solution is good and $k''-$close to $(A,B)$.

We finally need the solution $(A',B')$ to be consistent with both $f$ and $M$. For any vertex $v \in T$, if $v$ has color $0$, then $v \in A$ if and only if $v \in X$ (recall $R = AY \cup BX$), otherwise by including $v$ in $R$, one can change the belonging of $v$. So the consistency of $f$ requires certain components to be included into or excluded from $R'_i$. For $(u,v) \in M$, if $u$ and $v$ has the same color then $M \in E(A,B) \cap E(X,Y)$ if and only if $M \in E(A,B)$ which is always satisfied, otherwise $M \in E(A,B) \cap E(X,Y)$ if and only if the label-1 vertex is not in $R$, so consistency of $M$ requires certain components to be excluded from $R'_i$.

The consistency constraints are all in the form of ``removing a component from the candidates'' or ``impose certain components to be selected by $R'_i$'', which is easy to find out and handle in polynomial time. 

We finally calculate the running time of the algorithm. The algorithm first tries all possible $(f,k'')$ in $2^{O(k)}$ time, tries all colorings in $2^{O(k \log q)} n^{O(1)}$ time, and finally guesses the correct values of $\ell$ and $a_i^+,a_i^-$ in $2^{O(k \log k)}$ time. All other processes are in polynomial time. So the total running time is $2^{O(k \log q)} n^{O(1)}.$
\end{proof}

Next we show how to compute a $(k, q)$-cut if it exists. 

\begin{claims}
There exists an algorithm that runs in time $2^{O(k \log q)} n^{O(1)}$ and returns a $(k ,q)$-balanced cut for a connected graph if it exists. 
\label{claim:finding-balanced-cut}
\end{claims}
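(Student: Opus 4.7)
The plan is to find a $(k,q)$-balanced cut by combining enumeration of representative edges with minimum-cut computations and color coding, following the unbreakability framework of~\cite{kawarabayashi2011minimum, chitnis2016designing, cygan2014minimum, lokshtanov2022parameterized}. At a high level, we guess two edges lying deep inside the two sides of the sought cut, then use max-flow to recover the partition, boosted by a color-coding step that handles cases where the naive min-cut is not balanced enough.

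First I would iterate over all pairs of non-marked edges $(e_L, e_R) \in (E \setminus M)^2$, interpreting each as a guess that $e_L$ lies inside $G(L)$ and $e_R$ lies inside $G(R)$ for the hypothetical $(k,q)$-cut $(L, R)$. For each pair, compute the minimum edge cut between the endpoints of $e_L$ and those of $e_R$ via a standard max-flow computation. If it exceeds $k$, discard the pair; otherwise, extract both the source-side-closest and sink-side-closest minimum cuts from the residual graph, and verify for each whether (i) both sides are connected, (ii) both sides contain at least $q$ non-marked edges, and (iii) the cut size is at most $k$. To handle cases where neither of these closest min cuts is $(k,q)$-balanced even though such a cut exists, augment the enumeration with color coding: since a $(k,q)$-cut has at most $2k$ boundary vertices, applying \Cref{thm:splitter} with $a = b = 2k$ on $V$ yields a family of at most $2^{O(k \log k)} \log n$ two-colorings which is guaranteed to contain one that correctly separates the boundary of any fixed cut. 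For each such coloring, contract monochromatic sets into two super-vertices, compute the min cut in the contracted graph, and check the $(k,q)$-balance conditions. Return any candidate that passes verification; otherwise report that no $(k,q)$-cut exists.

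The main obstacle I anticipate is the correctness argument that some combination of a coloring and a representative pair $(e_L, e_R)$ actually recovers a $(k,q)$-balanced cut. The key structural ingredient is that the $\geq q$ non-marked edges on each side of $(L, R)$ guarantee enough ``interior'' edges to choose representatives that lie strictly inside $G(L)$ and $G(R)$, away from the at most $2k$ boundary vertices; once the boundary has been correctly colored, the min cut between the contracted color classes is forced to be a cut of size $\leq |E(L,R)| \leq k$ that agrees with $(L, R)$ on boundary vertices, and since interior rearrangements of non-boundary vertices cannot decrease either side below $q$ non-marked edges, the result remains $(k,q)$-balanced. The connectedness requirement should follow from standard properties of source/sink-closest min cuts. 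Combining the $2^{O(k \log k)}$ colorings, the $O(n^2)$ pairs of representative edges, and the polynomial-time min-cut computation per step yields a running time of $2^{O(k \log q)} n^{O(1)} = 2^{O(k^2)} n^{O(1)}$, matching the bound in the claim.
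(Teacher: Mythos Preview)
Your proposal has a genuine gap in both the color-coding step and the correctness argument that follows it.

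First, the step ``contract monochromatic sets into two super-vertices'' does not do what you need. The splitter family with $a=b=2k$ applied to $V$ guarantees only that some coloring separates the at most $2k$ boundary vertices of $(L,R)$ into the two color classes; every other vertex of $V$ receives an arbitrary color. Contracting each color class therefore merges vertices of $L$ and of $R$ into the same super-vertex, and the min cut between the two resulting super-vertices has no relation to $E(L,R)$ in general.

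Second, even if you had correctly isolated the boundary, the assertion that ``interior rearrangements of non-boundary vertices cannot decrease either side below $q$ non-marked edges'' is not justified and is false in general. Anchoring the min-cut computation to a single edge $e_L$ forces only the two endpoints of $e_L$ onto one side; nothing prevents the min cut from stripping away most of $L$, leaving the $e_L$-side with far fewer than $q$ non-marked edges. For a concrete picture, if $G(L)$ is a long path with $e_L$ near one end and the boundary at the other, every edge of the path is a min cut separating $e_L$ from $e_R$, and the source-closest such cut leaves essentially only $e_L$ on its side.

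The paper's proof avoids both issues by coloring \emph{edges} rather than vertices, with parameters $a=O(q)$ and $b=k$: it seeks a coloring that assigns $0$ to all of $E(L,R)$ and $1$ to fixed connected subgraphs $E_L\subseteq E(G(L))$ and $E_R\subseteq E(G(R))$, each already containing $q$ non-marked edges. After deleting color-$0$ edges, $E_L$ and $E_R$ lie in distinct connected components, and the min cut in $G$ between those two components has size at most $k$ while each side is forced to contain the respective $q$ non-marked edges. Anchoring to a $q$-edge connected subgraph, rather than to a single representative edge, is exactly what enforces the balance condition; this is the idea missing from your plan.
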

\begin{proof}
Suppose that $L \subseteq V$ is a $(k, q)$-cut and $R = V \setminus L$. We first show that there exists $E_L \subseteq E(L)$ that induces a connected subgraph and has at most $O(q)$ edges but at least $q$ non-marked edges.
Consider a procedure where we start from $V_L = \{ v \}$ for an arbitrary vertex $v \in L$, and 
iteratively add one arbitrary vertex $u \in L \cap N(V_L)$ to $V_L$ until $E(G(V_L))$ contains at least $q$ non-marked edges. Since $E(G(V_L))$ is connected and increased by at least one while marked edges form a matching, $|V_L| \leq 2q + 2$. Then we choose $E_L \subseteq E(G(V_L))$ that contains at least $q$ non-marked edges while $(V_L, E_L)$ is connected and $|E_L| \leq O(q)$. Define $V_R$ and $E_R$ similarly. 

Use~\Cref{thm:splitter} with $a = O(q)$ and $b = k$ to try at most $2^{O(k \log q)} \log n$ colorings of edges with $2$ colors. One coloring $\chi$ colors $E(L, R)$ with $0$ and $E_L \cup E_R$ with $1$. After deleting all edges of color $0$, there are two connected components $V'_L$ and $V'_R$ such that $E_L \subseteq E(G(V'_L))$ and $E_R \subseteq E(G(V'_R))$. Note that $V'_L \neq V'_R$ because $L$ and $R$ are separated by removing edges of color $0$. 

Then, trying every pair $(U', V')$ of connected components after deleting $0$-colored edges and computing the minimum cut separating $U'$ and $V'$ in the original graph will yield a cut of size at most $k$ where each side is connected has at least $q$ non-marked edges. We use the fact that in a connected graph, any minimum $s$-$t$ cut has both of its sides connected. 
\end{proof}

Finally, we prove that the existence of a $(k, q)$-cut, combined with a recursion, allows us to make progress by reducing the number of unmarked edges. 
Let $T(m)$ be the running time of our algorithm to solve the terminal version with $m$ unmarked edges. 

\begin{claims}
Suppose there exists a $(k,q)$-cut, in time $T(\ell) + n^{O(1)}$, one can reduce the current instance to another \CUTTERMINAL\ instance with at least $\ell - q/2$ fewer unmarked edges for some $\ell \in [q, m - q]$. 
\label{claim:recurse}
\end{claims}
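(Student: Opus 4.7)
The plan is to apply the Kawarabayashi--Thorup unbreakability framework, adapted to \CUTTERMINAL. First, invoke \Cref{claim:finding-balanced-cut} to locate a $(k,q)$-cut $(L, R)$ in $2^{O(k\log q)} n^{O(1)}$ time. Let $\ell_L$ and $\ell_R$ denote the number of unmarked edges inside $G(L)$ and $G(R)$; by the $(k,q)$-cut condition both are at least $q$, so after assuming WLOG $\ell_L \le \ell_R$ we obtain $\ell := \ell_L \in [q, m-q]$ (using $\ell_L + \ell_R \le m$). The boundary $E(L, R)$ has at most $k$ edges, hence at most $k$ boundary vertices $B_L := N(R) \cap L$ on the $L$-side.

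Next, I construct a subinstance $I_L$ of \CUTTERMINAL\ on the smaller side: take graph $G(L)$ (connected by the $(k,q)$-cut definition), reference partition $(A \cap L, B \cap L)$ inducing $\mathcal{P}_L = C(A \cap L, B \cap L)$, marked edges $M \cap E(G(L))$, edge-type function inherited from $t$, parameter $k'$, and new terminal set $T_L := (T \cap L) \cup B_L$. Any good cut of the outer instance, restricted to $L$, is a good cut of $I_L$ that agrees with $M \cap E(G(L))$ on the boundary side, so the promise transfers. A recursive call on $I_L$ takes time $T(\ell)$ and, because the terminal problem outputs a candidate partition of $L$ for every pair $(f_L, k''_L)$, in particular it returns a good cut $(X_L^*, Y_L^*)$ for the pair corresponding to $(X,Y)|_L$ and the associated $k''$.

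Using the recursive output, I form the reduced \CUTTERMINAL\ instance as follows: enumerate each guess $g : B_L \to \{X, Y\}$ together with each $k''_L \le k'$ (at most $2^k \cdot (k{+}1)$ choices) and, for the corresponding sub-solution $(X_L, Y_L)$, declare every edge of $E(A \cap L, B \cap L) \cap E(X_L, Y_L)$ as a new marked edge in the original graph. For the guess matching $(X,Y)|_L$, all cut-edges of the good cut that lie inside $L$ become marked, so the only unmarked edges inside $L$ that \emph{fail} to be newly marked are those cut by $(A,B)$ but not by $(X,Y)$ (or vice versa); such edges lie in $C(A,B) \Delta C(X,Y)$, so there are at most $O(k) \le q/2$ of them for $q$ large enough. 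This gives a reduction of at least $\ell - q/2$ unmarked edges in the new instance, as required. The new instance preserves the promise because the correct guess locks only genuine cut-edges of $(X,Y)$.

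The two main obstacles are: (i) maintaining the invariant $|T_L| \le 2k$ after adding $|B_L| \le k$ vertices to $T \cap L$, which I expect to handle by exploiting that the marked matching $M$ already constrains the number of ``active'' terminals, together with the fact that $|T| + |B_L|$ can be absorbed into the $2k$ budget when the recursion is arranged so that boundary additions are charged to disjoint portions of the cut budget; and (ii) the enumeration over the boundary guess $g$ produces a branching of $2^{O(k)}$ per recursion level, so the $T(\ell) + n^{O(1)}$ accounting in the claim should be read as the time to produce \emph{each} reduced instance, with the outer routine absorbing the branching factor into its own recursion (ultimately yielding the $2^{O(k^2)}$ total via the recursion $T(m) \le T(\ell) + 2^{O(k)} T(m-\ell+q/2) + 2^{O(k\log q)} n^{O(1)}$). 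Correctness of the reduction follows by induction on $m$: the recursive call's output for the correct $(f_L, k''_L)$ is a good cut of $I_L$, and the marked edges we add are verified to lie in $E(X,Y) \cap E(A,B)$ for some good cut $(X,Y)$ of the outer instance.
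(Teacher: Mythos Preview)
There are genuine gaps in your proposal, and the central idea of the paper's argument is missing.

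First, the claim asks you to produce a \emph{single} reduced instance; the paper does this with no branching on the boundary guess. Your proposed recursion $T(m) \le T(\ell) + 2^{O(k)} T(m-\ell+q/2) + 2^{O(k\log q)} n^{O(1)}$ does not yield an FPT bound: on the larger-side branch each step reduces $m$ by only $\ell - q/2 \ge q/2$, so the depth can be $\Theta(m/q)$, and the accumulated branching is $(2^{O(k)})^{\Theta(m/q)}$, which is exponential in $n$ once $m/q$ is polynomial in $n$.

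The key idea you are missing is this: after recursing on $G(L)$ you obtain a partition $(X_{f',k''},Y_{f',k''})$ for \emph{every} pair $(f',k'')$, and the paper takes those edges of $E(G(L))$ that lie \emph{outside} $C(A\cap L,B\cap L)\,\Delta\,C(X_{f',k''},Y_{f',k''})$ for \emph{all} pairs simultaneously. There are at most $2^{2k}(k+1)$ pairs and each contributes at most $k$ edges to the symmetric difference, so at least $\ell - 2^{2k}(k+1)k \ge \ell - q/2$ unmarked edges survive this intersection. Each surviving edge is cut by $(A,B)$ iff it is cut by every sub-solution, in particular by the (unknown) correct one; hence, after redefining the good cut inside $L$ to be that correct sub-solution, the edge's status relative to the good cut is determined. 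These edges can therefore be unambiguously marked (if in $E(A,B)$) or contracted (if not) in one reduced instance, with no branching.

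There is also a counting error in your argument even on the ``correct'' branch. You mark only edges in $E(A\cap L,B\cap L)\cap E(X_L,Y_L)$ and then assert that the remaining unmarked edges in $L$ are exactly those in $E(A,B)\,\Delta\,E(X_L,Y_L)$. This forgets all edges lying in neither $E(A,B)$ nor $E(X_L,Y_L)$; your rule does not mark them, and there is no bound on their number. The paper disposes of them by \emph{contracting} them, a step you omit.

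Finally, obstacle~(i) has a simple resolution you overlooked: do not choose $L$ by edge count. Instead pick $L$ to be the side of the $(k,q)$-cut containing at most $k$ of the current terminals (one side must), so that $|T'| = |T\cap L| + |B_L| \le k + k = 2k$ directly. Since both sides of a $(k,q)$-cut have at least $q$ unmarked edges, the requirement $\ell\in[q,m-q]$ is still met.
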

\begin{proof}
Let $(L, R)$ be a $(k, q)$-cut. Without loss of generality, suppose that $L$ has at most $k$ terminals from $T$, and let $T_L \subseteq L$ be the set of vertices that have an edge to $R$. Letting $T' = (T \cap L) \cup T_L \subseteq L$ be the new set of terminals for $L$, we know that $|T'| \leq 2k$. Let $\ell$ be the number of unmarked edges in $L$ and $m$ be the total number of unmarked edges. 

Recursively solve the terminal version with input $(G(L), (L \cap A, L \cap B), k', T', M \cap E(G(L)))$. For each $f' : T' \to \{ X, Y \}$ and $0 \le k'' \le k'$, it returns a partition $(X_{f', k''}, Y_{f', k''})$ of $L$ that is (1) consistent with $f'$, (2) consistent with $M \cap E(G(L))$, and (3) $k''$-close to 
$(L \cap A, L \cap B)$ (or nothing).

Let $f^* : T' \to \{ X, Y \}$ be the  correct guessing with respect to the good partition $(X, Y)$ and $k^*$ be the correct closeness parameter $|C(A \cap L, B \cap L) \Delta C(X \cap L, Y \cap L)|$ in $L$. We update $(X, Y)$ with $X \leftarrow (X \cap R) \cup X_{f^*, k^*}$ and $Y \leftarrow (Y \cap R) \cup Y_{f^*, k^*}$. Note that the new $(X, Y)$ is still a good cut since $(X_{f^*, k^*}, Y_{f^*, k^*})$ is good within $L$ and the interaction between $L$ and $R$ only depends on the terminals $T'$, where the previous and new solutions agree. By the guarantee (2) and (3) in the above paragraph, the new $(X, Y)$ is still $k'$-close to $(A, B)$ and consistent with $M$. 

Since $L$ has at least $q=k^2  2^{2k+2}$ non-marked edges and there are at most $2^{2k} (k+1)$ possible values for $f'$ and $k''$, we have $|E(L) \setminus (\cup_{f', k''} (C(A \cap L, B \cap L) \Delta C(X_{f', k''},Y_{f', k''}))| \geq \ell - q/2$. These edges do not belong to $C(A, B) \Delta C(X, Y) = E(A,B) \Delta E(X,Y)$ for sure. For each such edge $e$, perform the following operation. 

\begin{itemize}
    \item If $e \notin E(A, B)$, which means that $e$ is not cut in the current solution, this means that $e$ is not cut in the good solution too. Contract it. 
    
    \item If $e \in E(A, B)$, mark it. To ensure that the set of marked edges $M$ is a matching, if $e = (u, v), f = (v, w)$ are both in $M$ with $u \neq w$, then we know that $u, w$ will be in the same side in the good solution, so we can contract them.
\end{itemize}

By doing this, the number of unmarked edges is decreased by at least $\ell - q/2$. 
\end{proof}

With the three main claims, we finish the proof of~\Cref{thm:main-cut}. 

\begin{proof} [Proof of~\Cref{thm:main-cut}]
From the definition of the terminal version, given graph $G$ and the current solution $(A, B)$, running the terminal version with $T = M = \emptyset$ and $k' = k$ we can find a good cut given the promise $|C(A, B) \Delta C(X, Y)| \leq k$ for some (possibly different) good cut $(X, Y)$. 

Let us analyze the running time. As defined previously, let $T(m)$ be the running time of our algorithm with $m$ unmarked edges. Since the graph is connected and marked edges form a matching, we have $m \ge \frac{n}{2}$. Our algorithm first uses \Cref{claim:finding-balanced-cut} to find a $(k, q)$-cut in time $2^{O(k \log q)} n^{O(1)}$. If there is none, it uses \Cref{claim:color} to solve the terminal version in time $2^{O(k \log q)} n^{O(1)}$. Otherwise, it uses \Cref{claim:recurse} to reduce the number of unmarked edges by $\ell - q/2$ in time $T(\ell) + n^{O(1)}$, for some $\ell \in [q, m - q]$. Therefore, we have the following recurrence relation: 

\[
T(m) \leq 2^{O(k \log q)} n^{O(1)} + \max_{\ell \in [q, m - q]} \bigg( T(\ell) + T(m - (\ell - q/2)) + n^{O(1)} \bigg).
\]

Using $m \in \left[\frac{n}{2}, n^2\right]$, $T(m) \leq 2^{O(k \log q)} n^{O(1)}$ satisfies the above.
\end{proof}

\section{Hardness Proof for $r$AE for $r \ge 3$}\label{sec:ae}

In this section, we prove \Cref{characterization:3ae}. We start with a simple lemma addressing that hardness result for 3AE is enough.

\begin{lemma}\label{lem:3tor}
    If \MinCSPwP$(3\ALLEQ)$ is W[1]-hard, then for any integer $r \ge 3$, \MinCSPwP$(r\ALLEQ)$ is W[1]-hard.
\end{lemma}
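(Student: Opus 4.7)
My plan is to give a straightforward parameter-preserving reduction from \MinCSPwP$(3\ALLEQ)$ to \MinCSPwP$(r\ALLEQ)$ via padding. Given a \MinCSPwP$(3\ALLEQ)$ instance $(I(V,\calC),k,\calP)$, I construct an instance $(I'(V,\calC'),k,\calP')$ over the language $r\ALLEQ = \SymLang(r,\{0,r\})$ as follows: for every clause $C = (V_C, R_C) \in \calC$ with $V_C = (l_1, l_2, l_3)$ (where each $l_i$ is a literal, accounting for any negation pattern permitted in $3\ALLEQ$), introduce a padded clause $C' = (V_{C'}, R_{C'})$ where $V_{C'} = (l_1, l_1, \ldots, l_1, l_2, l_3)$ with $r-3$ repeated copies of $l_1$, and $R_{C'}$ is the corresponding negation pattern of $\SymRel(r,\{0,r\})$. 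Set $\calP' := \{C' : C \in \calP\}$. The variable set is untouched, so the parameter $k$ transfers verbatim.

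The correctness rests on a clause-by-clause invariance: for any assignment $\alpha : V \to \{0,1\}$, the padded clause $C'$ is satisfied iff all $r$ literals in $V_{C'}$ take the same value, which by collapsing duplicates is equivalent to $l_1 = l_2 = l_3$ under $\alpha$, which is exactly the satisfaction condition for $C$. Hence for every $\alpha$ the map $C \mapsto C'$ gives a bijection between $\calC_I(\alpha)$ and $\calC_{I'}(\alpha)$, and it restricts to a bijection between $\calP$ and $\calP'$. Consequently, for every $\alpha$ we have $|\calC_{I'}(\alpha) \Delta \calP'| = |\calC_I(\alpha) \Delta \calP|$ and $|\calC_{I'}(\alpha)| = |\calC_I(\alpha)|$.

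From these identities, the promise carries over (the same $\alpha_0$ works), and an assignment $\alpha$ is good for $(I',k,\calP')$ if and only if it is good for $(I,k,\calP)$. Thus any FPT algorithm for \MinCSPwP$(r\ALLEQ)$ solves \MinCSPwP$(3\ALLEQ)$ in the same parameterized running time on the padded instance, which can be built in polynomial time. Since the assumption is that \MinCSPwP$(3\ALLEQ)$ is W[1]-hard, the same hardness transfers to \MinCSPwP$(r\ALLEQ)$ for every $r \ge 3$.

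I do not expect any real obstacle here: the only thing that might deserve a sentence of care is checking that repeated variables are explicitly allowed in clause scopes (the paper defines $V_C$ as an $r$-tuple, so yes) and that arbitrary negation patterns on the padded literals are covered by $\SymLang(r,\{0,r\})$ (any negation of $l_1$ in a position is realized by the appropriate element of $R_{C'} = \SymRel(r,\{0,r\}) \oplus b$). With that observation out of the way, the reduction is purely syntactic and the proof amounts to the two displayed identities above.
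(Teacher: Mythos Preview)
Your proposal is correct. Repeating a literal in the scope is legitimate under the paper's definition of a clause (an $r$-tuple, not a set), and once that is granted the satisfaction equivalence $C \leftrightarrow C'$ is immediate, giving the two identities $|\calC_{I'}(\alpha)| = |\calC_I(\alpha)|$ and $|\calC_{I'}(\alpha)\Delta\calP'| = |\calC_I(\alpha)\Delta\calP|$ for every $\alpha$ over the \emph{same} variable set.

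The paper takes a slightly different route: it proves an inductive step $r\to r+1$ and pads each clause $C$ with a \emph{fresh} variable $v_C$ forced equal to some chosen variable $v$ in $C$, then argues that since $v_C$ occurs only in the new clause, any good assignment may be taken to satisfy $v_C=v$. Your direct $3\to r$ reduction via repeated literals is cleaner in that it keeps the variable set fixed, so ``good'' and the promise transfer literally without the extra sentence about dummy variables; the paper's version, in exchange, avoids any reliance on repeated scope entries. Both arguments are short and the difference is purely cosmetic.
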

\begin{proof}
    For each integer $r \ge 3$, we give a reduction from \MinCSPwP$(r\ALLEQ)$ to \MinCSPwP$((r+1)\ALLEQ)$, and the statement follows from induction. Given an \MinCSPwP$(r\ALLEQ)$ instance $(I,k,\mathcal{P})$, the \MinCSPwP$((r+1)\ALLEQ)$ instance $(I',k,\mathcal{P}')$ is constructed in the following way: \begin{itemize}
        \item Initially, $V' = V$.
        \item For each clause $C \in \mathcal{C}$, introduce a variable $v_C$ to $V'$. Choose any variable $v$ whose literal appears in $C$, and introduce a clause $C'$ to $\mathcal{C}'$, which is satisfied when $C$ is satisfied and additionally $v = v_C$. It is easy to see that $C'$ is a $(r+1)\ALLEQ$ clause. If $C \in \mathcal{P}$ then add $\mathcal{C}'$ to $\mathcal{P}'$.
    \end{itemize}

    Since variable $v_C$ is only incident to $C'$, $v = v_C$ in the good assignment for any clause $C$. Then the equivalence of two instances is straightforward.
\end{proof}

Using this approach of adding dummy variables, it is straightforward to further show the following.
\begin{claims}\label{claim:3aeto123ae}
    For all integer $r \ge 3$, if \MinCSPwP$(\cup_{r' \le r} r'\ALLEQ)$ is W[1]-hard, then \MinCSPwP$(r\ALLEQ)$ is W[1]-hard.
\end{claims}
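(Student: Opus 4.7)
The plan is to generalize the padding argument of \Cref{lem:3tor}, reducing $\MinCSPwP(\cup_{r' \le r} r'\ALLEQ)$ to $\MinCSPwP(r\ALLEQ)$ by padding every under-arity clause with fresh, private dummy variables. Given an instance $(I(V,\mathcal{C}),k,\mathcal{P})$, I build $(I'(V',\mathcal{C}'),k,\mathcal{P}')$ as follows: initialise $V' \leftarrow V$; for each $C \in \mathcal{C}$ of arity $r' < r$ with literals $l_{C,1},\ldots,l_{C,r'}$, introduce $r-r'$ fresh variables $y_{C,1},\ldots,y_{C,r-r'}$ into $V'$ and replace $C$ by the $r$-ary $\ALLEQ$ clause $C'$ whose literals are $l_{C,1},\ldots,l_{C,r'},y_{C,1},\ldots,y_{C,r-r'}$ (dummies positive); clauses already of arity $r$ are kept unchanged. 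Set $\mathcal{P}' = \{C' : C \in \mathcal{P}\}$ and keep the parameter at $k$. The reduction is polynomial-time and parameter-preserving.

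I would then verify the two directions. Forward (promise preservation): given any assignment $\alpha_0 : V \to \{0,1\}$, extend it to $\alpha_0'$ on $V'$ by assigning each dummy $y_{C,j}$ the common value of $l_{C,1},\ldots,l_{C,r'}$ under $\alpha_0$ when $C$ is satisfied, and arbitrarily otherwise; then $C'$ is satisfied by $\alpha_0'$ iff $C$ is satisfied by $\alpha_0$, so $\mathcal{C}_{I'}(\alpha_0') = \{C' : C \in \mathcal{C}_I(\alpha_0)\}$ and $|\mathcal{C}_{I'}(\alpha_0') \Delta \mathcal{P}'| = |\mathcal{C}_I(\alpha_0) \Delta \mathcal{P}|$. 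Hence the promise transfers and the maximum number of satisfied clauses among $k$-close assignments matches on both sides. Backward (recovery): given any good output $\beta'$ for $(I',k,\mathcal{P}')$, let $\beta := \beta'|_V$. Because satisfying $C'$ forces $l_{C,1}=\cdots=l_{C,r'}$ in $\beta$, every $C$ whose padded version is satisfied by $\beta'$ is satisfied by $\beta$, so $|\mathcal{C}_I(\beta)| \ge |\mathcal{C}_{I'}(\beta')| \ge \max_{\alpha_0 : |\mathcal{C}_I(\alpha_0) \Delta \mathcal{P}| \le k} |\mathcal{C}_I(\alpha_0)|$, and $\beta$ is good for $I$.

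Combining these, any FPT algorithm for $\MinCSPwP(r\ALLEQ)$ yields one for $\MinCSPwP(\cup_{r' \le r} r'\ALLEQ)$, so the W[1]-hardness of the latter transfers to the former. I do not anticipate any significant obstacle; the one subtle point is keeping each dummy variable private to a single clause, which is what allows the forward and backward arguments to treat clauses independently and to line up cleanly.
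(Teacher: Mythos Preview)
Your proposal is correct and follows essentially the same approach as the paper, which simply remarks that the dummy-variable padding of \Cref{lem:3tor} extends straightforwardly. The only cosmetic difference is that you pad each clause from arity $r'$ directly to $r$ with $r-r'$ private dummies in one step, whereas the paper's \Cref{lem:3tor} adds one dummy at a time; both rely on the same key observation that private dummies can always be set to agree with the clause's existing literals without affecting any other clause.
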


Now we introduce the auxiliary W[1]-hard problem we will use in our hardness reduction, which originates from \cite{marx2009constant} and serves as the source problem of all hardness reductions in \cite{kim2023flow}.

\begin{framed}
\noindent\textbf{Problem:} $\PMC$.

\noindent\textbf{Input:} A directed acyclic graph $G = (V, E)$, vertices $s, t \in V$, an integer $l \in \mathbb{N}$, a partition $\mathcal{E}$ of $E$ into pairs (sets of size $2$), and a partition of $E$ into a set $\mathcal{F}$ of $2l$ arc-disjoint $st$-paths.

\noindent\textbf{Parameter:} $l$.

\noindent\textbf{Output:} An $st$-Cut $Z$ that touches (has non-empty intersection with) %
at most $l$ pairs from $\mathcal{E}$ if exists, otherwise report that such cut does not exist.
\end{framed}

\begin{lemma}[\cite{kim2023flow}]\label{lem:pmchardness}
$\PMC$ is W[1]-hard.
\end{lemma}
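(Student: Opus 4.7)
The plan is to establish W[1]-hardness of $\PMC$ by a parameterized reduction from Multicolored Clique (parameterized by the number $k$ of color classes, a canonical W[1]-hard source problem). Given a multicolored graph $H$ with color classes $V_1, \ldots, V_k$, I would build a $\PMC$ instance whose parameter $l = \Theta(k^2)$, so that a valid multicolored clique in $H$ yields an $st$-cut touching exactly $l$ pairs from $\mathcal{E}$, and conversely any cut touching $\le l$ pairs can be decoded into a clique.

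For the DAG I would introduce, for each unordered pair of color classes $\{i,j\}$, a \emph{consistency gadget}: a bundle of arc-disjoint $st$-paths indexed by pairs $(u,v)\in V_i\times V_j$, where the path labelled $(u,v)$ represents the claim ``the chosen representatives of classes $i$ and $j$ are $u$ and $v$''. The $\binom{k}{2}$ bundles are composed into a DAG so that their edges together partition into exactly $2l$ arc-disjoint $st$-paths, forming the required $\mathcal{F}$. The pairing $\mathcal{E}$ is then built with two kinds of pairs: \emph{consistency pairs} that couple, for each vertex $u \in V_i$, the ``$u$-is-chosen-in-class-$i$'' edges appearing in the bundle $\{i,j\}$ with those in the bundle $\{i,j'\}$ (forcing any $st$-cut to select the same $u$ across all bundles involving class $i$); and \emph{incidence pairs} that, for each non-edge $\{u,v\}$ of $H$ with $u\in V_i, v\in V_j$, couple the ``$(u,v)$-path'' edge in bundle $\{i,j\}$ to a sentinel edge outside any already-counted pair, so that realizing a non-edge forces the cut to touch an additional pair beyond the $l$ budget. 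One must padd the graph with dummy parallel paths to make $\mathcal{E}$ partition all of $E$ into pairs and $\mathcal{F}$ hit exactly $2l$ paths; these dummies are chosen so they do not contribute any cheap cut.

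The main obstacle I expect is the \emph{simultaneous} satisfaction of the three structural requirements of $\PMC$: $\mathcal{E}$ must partition $E$ into pairs, $\mathcal{F}$ must partition $E$ into precisely $2l$ arc-disjoint $st$-paths, and the minimum number of pairs touched by any $st$-cut must equal $l$ exactly on yes-instances of Multicolored Clique. Making the pair accounting tight, especially ensuring that the only cuts achieving the bound $l$ are those encoding consistent vertex selections whose induced edges all lie in $H$, requires careful gadget engineering. I would follow the blueprint of Marx's W[1]-hardness reduction for \textsc{Edge Multicut}~\cite{marx2009constant}, from which the $\PMC$ formulation was distilled in~\cite{kim2023flow}, adapting the pairing construction so that pair-based counting takes the role of Multicut's terminal-pair disconnection requirement. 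The soundness direction (a low-cost cut implies a clique) is the delicate one, as it must rule out ``hybrid'' cuts that mix edges from inconsistent bundles; this is handled by the density of the consistency pairs, which force any $l$-pair cut to be internally coherent.
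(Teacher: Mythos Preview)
The paper does not prove this lemma at all: it is stated with a citation to~\cite{kim2023flow} (and the construction traces back to Marx~\cite{marx2009constant}), and is used as a black-box source of hardness for the reductions in Sections~5. So there is no ``paper's own proof'' to compare against; the paper simply imports the result.

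Your sketch is a plausible high-level outline of how such a reduction is built --- starting from \textsc{Multicolored Clique}, building per-color-pair bundles of $st$-paths, and using the pairing $\mathcal{E}$ to enforce both cross-bundle consistency of vertex choices and adjacency in $H$ --- and it correctly identifies the genuinely hard part, namely making the three structural constraints of $\PMC$ (perfect pairing of $E$, partition into exactly $2l$ arc-disjoint $st$-paths, and a tight pair budget) hold simultaneously while keeping soundness. That said, what you have written is a plan rather than a proof: the actual gadget that achieves all of this at once (and the counting argument showing any $l$-pair cut decodes to a clique) is nontrivial and is precisely the content of the cited works. If your goal is to match the paper, a one-line citation suffices; if your goal is an independent proof, you would need to flesh out the gadget and the soundness argument in full, for which consulting~\cite{marx2009constant} directly is the right move.
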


We first present the proof for \MinCSPwP$(\cup_{r' \le 4} r'\ALLEQ)$, which is more intuitive.

\begin{theorem}\label{thm:4ae-hardness}
    \MinCSPwP$(\cup_{r' \le 4} r'\ALLEQ)$ is W[1]-hard.
\end{theorem}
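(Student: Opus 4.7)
The plan is to give a parameterized reduction from Paired Minimum $st$-Cut ($\PMC$), which is W[1]-hard by \Cref{lem:pmchardness}. Let the input be a $\PMC$ instance $(G=(V,E),s,t,l,\mathcal{E},\mathcal{F})$ with $|\mathcal{F}|=2l$. The key structural observation I would use is that any $st$-cut $Z$ touching at most $l$ pairs from $\mathcal{E}$ must satisfy $|Z|=2l$ exactly (since $|Z| \ge 2l$ by the $2l$ arc-disjoint $st$-paths while $|Z| \le 2 \cdot (\text{number of touched pairs}) \le 2l$ because $\mathcal{E}$ partitions $E$ into size-two sets) and, moreover, every touched pair lies entirely in $Z$. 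This rigid ``all-or-nothing'' structure of good $\PMC$ cuts is exactly what the reduction will exploit.

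From the $\PMC$ instance I would build a $\MinCSPwP(\cup_{r' \le 4} r'\ALLEQ)$ instance on variables $\{x_v : v \in V\}$ together with one auxiliary variable $p_i$ per pair. The constraints come in three groups. First, for each edge $(u,v)\in E$, add a $2\ALLEQ$ clause on $(x_u,x_v)$, which is violated iff the edge is cut. Second, add many copies of an ``$x_s \ne x_t$'' clause (a $2\ALLEQ$ with one literal negated) placed outside $\mathcal{P}$, so that any cheap assignment must separate $s$ from $t$. Third, for each pair $P_i = \{(a_i,b_i),(c_i,d_i)\}$, attach a small gadget built from $3\ALLEQ$ clauses on $(x_{a_i},x_{b_i},p_i)$ and $(x_{c_i},x_{d_i},p_i)$ together with a $4\ALLEQ$ clause on $(x_{a_i},x_{b_i},x_{c_i},x_{d_i})$, with $\mathcal{P}$-memberships chosen so that a fully-cut or fully-uncut pair contributes only $O(1)$ to the symmetric difference, while any partially-cut pair is penalized strictly more. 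The set $\mathcal{P}$ is taken as the set of clauses satisfied by the all-zero reference assignment (with each $p_i = 0$), and the budget is $k = O(l)$.

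For the forward direction, a $\PMC$ solution $Z$ with at most $l$ touched pairs induces the assignment $\alpha_Z$ that sets $x_v=0$ on the $s$-side and $x_v=1$ on the $t$-side, with the $p_i$'s chosen greedily. Using $|Z|=2l$ and the fact that every touched pair is \emph{entirely} inside $Z$, I would argue $|\mathcal{C}_I(\alpha_Z)\,\Delta\,\mathcal{P}| = O(l)$, so the promise is met with $k=O(l)$. For the backward direction, any good assignment $\alpha$ gives a bipartition $(X,Y)$ with $X=\{v:x_v=0\}$; the many copies of ``$x_s\ne x_t$'' force $s\in X$, $t\in Y$, so $Z':=E(X,Y)$ is an $st$-cut, and the pair-gadget case analysis together with $|\mathcal{C}_I(\alpha)|\ge|\mathcal{C}_I(\alpha_Z)|$ forces $Z'$ to touch at most $l$ pairs.

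I expect the main obstacle to be the design of the per-pair gadget. The naive choice of a single $4\ALLEQ(x_{a_i},x_{b_i},x_{c_i},x_{d_i})$ clause is violated not only when $P_i$ is touched but also in the \emph{untouched-but-split} case, where both edges of $P_i$ are uncut yet its four endpoints straddle the cut. Since such pairs can be arbitrarily many for a given cut, a naive gadget would make $k$ grow with $|E|$ rather than with $l$. To defeat this, my plan is to augment each pair with constant-size linking structures---auxiliary equality edges routed through new vertices, inserted so as not to disturb the arc-disjoint path decomposition $\mathcal{F}$ materially---that force the four endpoints of any uncut pair to lie on the same side of the cut. Alternatively one can use several $3\ALLEQ$ clauses with carefully chosen $\mathcal{P}$-memberships so that split-but-uncut contributions cancel to $O(1)$ per pair. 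Carrying out this analysis cleanly and landing $k=O(l)$ simultaneously on both sides of the reduction is the technically delicate step of the proof.
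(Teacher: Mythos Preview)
Your high-level plan (reduce from $\PMC$, introduce $x_v$ for $v\in V$, add a $2\ALLEQ$ edge clause $x_u=x_v$ for every $(u,v)\in E$, add many copies of $x_s\neq x_t$, and attach one gadget per pair) is exactly the paper's approach. The genuine gap is in the pair gadget: the paper does \emph{not} use the unnegated $4\ALLEQ(x_{a_i},x_{b_i},x_{c_i},x_{d_i})$ you analyze, nor any auxiliary variables $p_i$, nor any linking structure. It uses a \emph{single} $4\ALLEQ$ clause with a carefully chosen negation pattern,
\[
x_{u_1}=\overline{x_{v_1}}=x_{u_2}=\overline{x_{v_2}},
\]
placed \emph{outside} $\mathcal{P}$. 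This clause is satisfied iff both edges of the pair are cut with the same tail-to-head orientation; in particular it is \emph{never} satisfied when both edges are uncut, whether or not the four endpoints straddle the bipartition. Hence every untouched pair contributes exactly $0$ to $|\mathcal{C}_I(\alpha)\Delta\mathcal{P}|$, and the ``untouched-but-split'' obstacle you identify simply disappears. With $\mathcal{P}$ equal to the edge clauses alone and $(l+1)$ copies of $x_s\neq x_t$, one gets $|\mathcal{C}_I(\alpha)\Delta\mathcal{P}|=2l+l+(l+1)=4l+1$ and $|\mathcal{C}_I(\alpha)|=|\mathcal{P}|+1$ on the intended cut, and a short counting argument shows every $\beta$ with $|\mathcal{C}_I(\beta)|>|\mathcal{P}|$ encodes a $\PMC$ solution.

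Your proposed workarounds would not close the gap as stated. Forcing ``the four endpoints of any uncut pair to lie on the same side'' is not something the intended cut respects (an uncut pair can have one edge entirely in the $s$-side and the other entirely in the $t$-side), so adding equality links would break the forward direction. And allowing each fully-uncut pair to contribute ``$O(1)$'' to the symmetric difference, summed over $|E|/2$ pairs, gives $k=\Theta(|E|)$ rather than $k=O(l)$, destroying the parameterized reduction. The missing idea is precisely the negation trick above: once you use it, the construction is a one-liner and no further gadgetry is needed.
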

\begin{proof}
    We give a reduction from $\PMC$, whose hardness is by \Cref{lem:pmchardness}, to \MinCSPwP$(\cup_{r' \le 4} r'\ALLEQ)$. Given an $\PMC$ instance $(G,s,t,l,\mathcal{E},\mathcal{F})$, we construct the following \MinCSPwP$(\cup_{r' \le 4} r'\ALLEQ)$ instance $(I, k, \mathcal{P})$: 
    \begin{itemize}
        \item For each vertex $u \in V_G$, introduce a variable $x_u$ in $V_I$.
        \item For each edge $(u, v) \in E$, introduce a $2\ALLEQ$ clause $x_u = x_v$ in both $\mathcal{C}_I$ and $\mathcal{P}$. In the rest of this proof, we refer to such clauses as \emph{edge clauses}.%
        \item For each pair $\left((u_1,v_1), (u_2,v_2)\right) \in \mathcal{E}$, introduce a $4\ALLEQ$ clause $x_{u_1} = \overline{x_{v_1}} = x_{u_2} = \overline{x_{v_2}}$ in $\mathcal{C}_I$.
        \item Further introduce $(l+1)$ copies of the $2\ALLEQ$ clause $x_s \ne x_t$ to $\mathcal{C}_I$. Set $k = 4l+1$.
    \end{itemize}

    Now we prove the equivalence of the instances. Suppose the given $\PMC$ instance has an $st$-cut $Z$ touching $l$ pairs, we construct the following assignment $\alpha$ for the \MinCSPwP$(\cup_{r' \le 4} r'\ALLEQ)$ instance: Assign $x_u$ to $0$ for $u \in V_G$ if $u$ is in the $s$-side of $Z$, otherwise, when $u$ is in the $t$-side, assign $x_u$ to $1$. $\alpha$ satisfies the following clauses: \begin{itemize}
        \item All edge clauses except for the $2l$ clauses corresponding to edges in $Z$;
        \item $l$ $4\ALLEQ$ clauses corresponding to the $l$ edge pairs in $Z$ (Notice that the tails of all edges in $Z$ must both belong to $s$-side);
        \item $(l+1)$ copies of the $x_s \ne x_t$ clause.
    \end{itemize} So $|C_I(\alpha) \Delta \mathcal{P}|\le 4l+1$ and $|C_I(\alpha)| - |\mathcal{P}| = 1$. Now we only need to check $\alpha$ is good.
    
    Consider any assignment $\beta: V_I \to \{0,1\}$ with $|C_I(\beta)| > |\mathcal{P}|$. Define $Z_\beta = \{(u,v) \in E \mid x_u \ne x_v\}$ and $\mathcal{E}_{\beta} = \{(e_1,e_2) \in \mathcal{E} \mid e_1 \in Z_\beta, e_2 \in Z_\beta\}.$ We have the following:
    \begin{itemize}
        \item $|C_I(\beta)| \le |E| - |Z_\beta| + |\mathcal{E}_\beta| + (l+1)[x_s \ne x_t]$ (It is not equality since pairs in $\mathcal{E}_\beta$ may not satisfy the corresponding $4\ALLEQ$ clauses). Since we further have $|E| = |\mathcal{P}| < |C_I(\beta)|$, we have $|\mathcal{E}_\beta| + (l+1)[x_s \ne x_t] > |Z_\beta|$.
        \item $0 \le |\mathcal{E}_\beta| \le \lfloor \frac{|Z_\beta|}{2} \rfloor$, since $\mathcal{E}$ is a partition of $E$. Hence, $x_s \ne x_t$ must hold for such $\beta$. Further, we have $l+1 > |Z_\beta| - |\mathcal{E}_\beta| \geq \lceil\frac{|Z_\beta|}{2}\rceil$, so $|Z_\beta| \le 2l$.
        \item $x_s \ne x_t \Rightarrow |Z_\beta| \ge 2l$, since $Z_\beta$ corresponds to an $st$-cut and $\mathcal{F}$ implies that the minimum $st$-cut have size $2l$. Therefore, $\beta$ must satisfy $|Z_\beta| = 2l, |\mathcal{E}_\beta| = l$ and $x_s \ne x_t$, which corresponds to a solution for the $\PMC$ instance. Further, in this instance, the tails of the $2l$ edges in $Z_\beta$ must belong to $s$-side of $Z_\beta$, so that all the clauses corresponding to $\mathcal{E}_\beta$ are satisfied, and we have $|C_I(\beta)| - |\mathcal{P}| = 1$.
    \end{itemize}
    So any assignment $\beta$ with $|C_I(\beta)| > |\mathcal{P}|$, if it exists, corresponds to a feasible solution to the given $\PMC$ instance, and is a good assignment for the constructed \MinCSPwP$(\cup_{r' \le 4} r'\ALLEQ)$ instance. This shows the validity (satisfying the promise) of the \MinCSPwP$(\cup_{r' \le 4} r'\ALLEQ)$ instance.

    Finally, if the given $\PMC$ instance does not have a $st$-cut $Z$ touching $l$ pairs, following the previous argument, we cannot find $\beta$ with $|C_I(\beta)| > |\mathcal{P}|$. Notice that in this case, %
    if we set all variables to zero, the assignment exactly satisfies $\mathcal{P}$, which is a good solution now, so the instance is still valid.

    In summary, by distinguishing the number of satisfied clauses in the good assignment and $|\mathcal{P}|$, we can determine whether an $st$-cut $Z$ in the $\PMC$ instance satisfying the condition exists. If it exists, we can turn the good assignment of the \MinCSPwP$(\cup_{r' \le 4} r'\ALLEQ)$ instance to an $st$-cut satisfying the conditions for the $\PMC$ problem.
\end{proof}

To extend the proof in \Cref{thm:4ae-hardness} to $3\ALLEQ$, we need to mimic the $4\ALLEQ$ clause $x_{u_1} = \overline{x_{v_1}} = x_{u_2} = \overline{x_{v_2}}$ corresponding to edge pairs using $3\ALLEQ$ clauses. The most straightforward way with some clause duplication suffices to prove the main hardness result.

\begin{theorem}\label{thm:3ae-hardness}
    \MinCSPwP$(\cup_{r' \le 3} r'\ALLEQ)$ is W[1]-hard.
\end{theorem}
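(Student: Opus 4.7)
The plan is to reduce from $\PMC$ by adapting the construction used in the proof of Theorem~\ref{thm:4ae-hardness}. I would keep the same variables (one $x_u$ per vertex) and the edge clauses $x_u = x_v$ (kept in $\mathcal{P}$), and for each pair $((u_1,v_1),(u_2,v_2)) \in \mathcal{E}$ I would replace the $4\ALLEQ$ pair clause by a small set of $3\ALLEQ$ clauses. The most direct decomposition uses a fresh auxiliary variable $y_p$ per pair, together with the two $3\ALLEQ$ clauses $x_{u_1} = \overline{x_{v_1}} = y_p$ and $y_p = x_{u_2} = \overline{x_{v_2}}$, since these are jointly satisfiable exactly when $x_{u_1} = \overline{x_{v_1}} = x_{u_2} = \overline{x_{v_2}}$. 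Following the hint about clause duplication, I would include $M$ copies of each such clause outside $\mathcal{P}$, as well as $T$ copies of the $2\ALLEQ$ clause $x_s \ne x_t$ outside $\mathcal{P}$, with $M$ and $T$ small integers tuned by the analysis. The parameter $k$ is set to the symmetric difference attained by the PMC-induced assignment, which is $O(Ml + T)$ and hence preserves the FPT reduction.

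Next I would carry out the analog of the $4\ALLEQ$ counting argument. In the forward direction, a $\PMC$ solution $Z$ of size $l$ yields an assignment $\alpha_0$ by placing the $s$-side at $0$, the $t$-side at $1$, and choosing $y_p = x_{u_1}$ for each pair; this satisfies all $2M$ of the pair clauses for each touched pair and all $T$ copies of $x_s \ne x_t$. In the converse direction, I would expand $|\mathcal{C}_I(\beta)|$ in terms of the cut size $|Z_\beta|$, the number $a$ of fully-cut pairs, the number $\tau$ of singly-cut pairs, and $[x_s \ne x_t]$. Using that the $\PMC$ graph is a DAG (which rules out the ``both cut but unaligned'' case), the maximum pair contribution is $2M$ per fully-cut pair and $M$ per singly-cut pair; combined with $|Z_\beta| \ge 2l$ one would derive the structural conditions $x_s \ne x_t$, $|Z_\beta| = 2l$, $a = l$, and $\tau = 0$, which together correspond exactly to a $\PMC$ solution touching $l$ pairs.

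The main obstacle is that the two-clause $3\ALLEQ$ decomposition admits partial satisfaction: a pair with only one edge cut still satisfies the $M$ copies on the cut side by an appropriate choice of $y_p$. As a result, the analog of the critical $4\ALLEQ$ bound $|\mathcal{E}_\beta| \le |Z_\beta|/2$ (which forces $|Z_\beta|=2l$ and $|\mathcal{E}_\beta| = l$ in that proof) no longer holds automatically, and a naive choice $M=1$ collapses the pair contribution into the $|Z_\beta|$ term, leaving only the trivially unhelpful conclusion that $x_s \ne x_t$. The clause duplication together with a careful choice of $T$ is what restores enough asymmetry between the fully-cut and singly-cut regimes to pin down the $\PMC$ structure, and verifying the exact arithmetic that makes this work is the technical heart of the argument.
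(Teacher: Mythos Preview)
Your two-clause gadget with a fresh $y_p$ has a structural defect that no choice of multiplicities can repair. Once each $y_p$ is set optimally, the pair-clause contribution is exactly $M$ per cut edge: a singly-cut pair contributes $M$ (set $y_p$ to match the cut side) and an aligned fully-cut pair contributes $2M$. Hence, with $c$ copies of each edge clause,
\[
|C_I(\beta)|-|\mathcal{P}| \;\le\; M|Z_\beta| - c|Z_\beta| + T\,[x_s\neq x_t] \;=\; (M-c)\,|Z_\beta| + T\,[x_s\neq x_t],
\]
which depends only on $|Z_\beta|$ and $[x_s\neq x_t]$, not on $a$ versus $\tau$. Now take any minimum $st$-cut of size $2l$: since the edges of the PMC instance partition into $2l$ arc-disjoint $st$-paths, such a cut crosses each path exactly once and only in the forward direction, so all cut edges are aligned and the bound above is met with equality. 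The value $(M-c)\cdot 2l + T$ is therefore attained by \emph{every} minimum $st$-cut, regardless of how many pairs it touches, and your CSP instance cannot separate PMC yes-instances from no-instances that happen to have a size-$2l$ cut. Such no-instances must exist, since every yes-instance has minimum cut exactly $2l$, and if the answer were determined by min-cut size then PMC would be in $\mathbf{P}$.

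The paper avoids this by dispensing with the auxiliary variable. For each pair it introduces two copies of each of the four $3\ALLEQ$ clauses obtained by dropping one literal from $x_{u_1}=\overline{x_{v_1}}=x_{u_2}=\overline{x_{v_2}}$; a fully-cut aligned pair then satisfies $8$ of these clauses while a singly-cut pair satisfies exactly $2$, a per-edge ratio of $4$ versus $2$ rather than your $M$ versus $M$. Combined with $5$ copies of each edge clause and $2l{+}1$ copies of $x_s\neq x_t$, the condition $|C_I(\beta)|>|\mathcal P|$ yields $2a+3\tau\le 2l$, while $x_s\neq x_t$ gives $2a+\tau\ge 2l$; together these force $\tau=0$ and $a=l$. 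The asymmetry you correctly identified as necessary is built into the gadget itself, not recovered through duplication.
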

\begin{proof}
    We give a reduction from $\PMC$, whose hardness is by \Cref{lem:pmchardness}, to \MinCSPwP$(\cup_{r' \le 3} r'\ALLEQ)$. Given an $\PMC$ instance $(G,s,t,l,\mathcal{E},\mathcal{F})$, we construct a \MinCSPwP$(\cup_{r' \le 3} r'\ALLEQ)$ instance $(I, k, \mathcal{P})$ in the following way: 
    \begin{itemize}
        \item For each vertex $u \in V_G$ introduce a variable $x_u$ in $V_I$.
        \item For each edge $(u, v) \in E$ introduce $5$ copies of the $2\ALLEQ$ clause $x_u = x_v$ in both $\mathcal{C}_I$ and $\mathcal{P}$.  In the rest of this proof, we refer to such clauses as \emph{edge clauses}.%
        \item For each pair $\left((u_1,v_1), (u_2,v_2)\right) \in \mathcal{E}$ introduce $2$ copies of each of the following $4$ $3\ALLEQ$ clauses in $\mathcal{C}_I$: $x_{u_1} = x_{u_2} = \overline{x_{v_2}}, \overline{x_{v_1}} = x_{u_2} = \overline{x_{v_2}}, x_{u_1} = \overline{x_{v_1}} = x_{u_2}$ and $x_{u_1} = \overline{x_{v_1}} = \overline{x_{v_2}}$. These are all four 3AE clauses generated from removing one variable in the 4AE clause $x_{u_1} = x_{u_2} = \overline{x_{v_1}} = \overline{x_{v_2}}$.
        \item Further introduce $(2l+1)$ copies of the $2\ALLEQ$ clause $x_s \ne x_t$ to $\mathcal{C}_I$. Set $k = 20l+1$.
    \end{itemize}

    Now we prove the equivalence of the instances. Suppose the given $\PMC$ instance has an $st$-cut $Z$ touching $l$ pairs, we construct exactly the same assignment $\alpha$ for the \MinCSPwP$(\cup_{r' \le 3} r'\ALLEQ)$ instance as we do in \Cref{thm:4ae-hardness}. $\alpha$ satisfies the following clauses: \begin{itemize}
        \item All edge clauses except for $10l$ clauses corresponding to edges in $Z$;
        \item $8l$ $3\ALLEQ$ clauses corresponding to the $l$ pairs in $Z$;
        \item $(2l+1)$ copies of $x_s \ne x_t$ clause.
    \end{itemize}So $|C_I(\alpha) \Delta \mathcal{P}|\le 20l+1$ and $|C_I(\alpha)| - |\mathcal{P}| = 1$. Now we only need to check $\alpha$ is good.
    
    Consider any assignment $\beta: V_I \to \{0,1\}$ with $|C_I(\beta)| > |\mathcal{P}|$. Define $Z_\beta = \{(u,v) \in E \mid x_u \ne x_v\}, \mathcal{E}_{\beta} = \{(e_1,e_2) \in \mathcal{E} \mid e_1 \in Z_\beta, e_2 \in Z_\beta\}$ and $\mathcal{E}'_{\beta} = \{(e_1,e_2) \in \mathcal{E} \mid [e_1 \in Z_\beta] + [e_2 \in Z_\beta] = 1\}$. We have the following:
    \begin{itemize}
        \item $|C_I(\beta)| \le 5|E| - 5|Z_\beta| + 8|\mathcal{E}_\beta| + 2|\mathcal{E}'_\beta| + (2l+1)[x_s \ne x_t]$, 
        since one can easily check that exactly one of the 4 $3\ALLEQ$ clauses are satisfied for all pairs in $\mathcal{E}_\beta'$, and pairs in $\mathcal{E}_\beta$ can satisfy at most 4 $3\ALLEQ$ clauses. Since we further have $5|E| = |\mathcal{P}| < |C_I(\beta)|$, $\beta$ satisfies $8|\mathcal{E}_\beta| + 2|\mathcal{E}'_\beta| + (2l+1)[x_s \ne x_t] > 5|Z_\beta|$.
        \item $|Z_\beta| = 2|\mathcal{E}_\beta| + |\mathcal{E}'_\beta|$, since $\mathcal{E}$ is a partition of $E$. Hence, $x_s \ne x_t$ must hold for such $\beta$. Further, we have $2l+1 > 2|\mathcal{E}_\beta| + 3|\mathcal{E}'_\beta|$, so $|Z_\beta| = 2|\mathcal{E}_\beta| + |\mathcal{E}'_\beta| \le 2|\mathcal{E}_\beta| + 3|\mathcal{E}'_\beta| < 2l+1$.
        \item $x_s \ne x_t \Rightarrow |Z_\beta| \ge 2l$, since $Z_\beta$ corresponds to an $st$-cut and $\mathcal{F}$ implies that the minimum $st$-cut have size $2l$. Therefore, $|Z_\beta| = 2l$ and $|\mathcal{E}'_\beta| = 0$, which means $|\mathcal{E}_\beta| = l$. Along with $x_s \ne x_t$, the assignment corresponds to a feasible solution for the $\PMC$ instance.  Further, in such instance, the tails of the $2l$ edges in $Z_\beta$ must belong to $s$-side of $Z_\beta$, so all $8l$ clauses corresponding to $\mathcal{E}_\beta$ are satisfied, and we have $|C_I(\beta)| - |\mathcal{P}| = 1$.
    \end{itemize}
    So any assignment $\beta$ with $|C_I(\beta)| > |\mathcal{P}|$, if it exists, corresponds to a feasible solution to the given $\PMC$ instance, and is a good assignment for the constructed \MinCSPwP$(\cup_{r' \le 3} r'\ALLEQ)$ instance. This shows the validity (satisfying the promise) of the \MinCSPwP$(\cup_{r' \le 3} r'\ALLEQ)$ instance.

    Finally, if the given $\PMC$ instance does not have a $st$-cut $Z$ touching $l$ pairs, following the previous argument, we cannot find $\beta$ with $|C_I(\beta)| > |\mathcal{P}|$. Notice that in this case, if we set all variables to zero, the assignment exactly satisfies $\mathcal{P}$, which is a good solution now, so the \MinCSPwP$(\cup_{r' \le 3} r'\ALLEQ)$ is still valid.
    
    In summary, by distinguishing the number of satisfied clauses in the good  assignment and $|\mathcal{P}|$, we can determine whether an $st$-cut $Z$ in the $\PMC$ instance satisfying the condition exists. If it exists, we can turn the good assignment of the \MinCSPwP$(\cup_{r' \le 3} r'\ALLEQ)$ instance to an $st$-cut satisfying the conditions for the $\PMC$ problem.
\end{proof}

\begin{proof}[Proof of~\Cref{characterization:3ae}]
    From \Cref{thm:3ae-hardness}, \MinCSPwP$(\cup_{r' \le 3} r'\ALLEQ)$ is W[1]-hard. Along with \Cref{claim:3aeto123ae}, \MinCSPwP$(3\ALLEQ)$ is W[1]-hard, and the statement follows from \Cref{lem:3tor}.
\end{proof}

\section{Hardness Proof for $\LE$}\label{sec:le}

In this section we prove the W[1]-hardness of \textsc{ImproveMaxCSP}$(\LE)$ for all $r \ge 2$. We first prove the case for $r = 2$, which is \textsc{ImproveMaxCSP}$(2\SAT)$, from the following problem called \MIS.

\begin{framed}
\textbf{Problem:} \MIS.

\textbf{Input:} An undirected graph $G = (V, E)$, an integer $l \in \mathbb{N}^+$, and a partition $(V_1,\cdots,V_l)$ of $V$ into $l$ non-empty sets.

\textbf{Parameter:} $l$.

\textbf{Output:} An independent set of $G$ that contains exactly one vertex from each $V_i$ if exists, otherwise report that such an independent set does not exist.
\end{framed}

The reader can find the W[1]-hardness of \MIS in Corollary 13.8 of \cite{cygan2015parameterized}.

\begin{lemma}[\cite{cygan2015parameterized}]\label{lemma:mishardness}
    \MIS is W[1]-hard.
\end{lemma}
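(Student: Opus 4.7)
My plan is to prove the W[1]-hardness of \MIS by a two-step parameterized reduction starting from \textsc{Clique}, which is a classical W[1]-hard problem (Downey--Fellows). The intermediate problem will be \textsc{Multicolored Clique}, which I will first obtain as W[1]-hard from \textsc{Clique}, and then reduce to \MIS by taking bipartite complements between color classes. In both steps the parameter $l$ (the number of color classes, equal to the clique/independent-set size) is preserved exactly, so both reductions are valid parameterized reductions.

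For the first step, given a \textsc{Clique} instance $(G,k)$, I would build $G'$ on vertex set $V(G)\times[k]$, declare $V_i := V(G)\times\{i\}$ as the $i$-th color class, and for every edge $(u,v)\in E(G)$ and every pair $i\neq j$ in $[k]$, add the edge $\bigl((u,i),(v,j)\bigr)$ to $G'$. No edges are placed between $(u,i)$ and $(u,j)$ for $i\neq j$; this is the standard trick that forces a multicolored clique to use $k$ distinct vertices of $G$. A short forward/backward argument then shows that $G$ has a $k$-clique iff $(G',V_1,\dots,V_k)$ admits a multicolored clique of size $k$, with $l := k$ being the new parameter.

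For the second step, given a \textsc{Multicolored Clique} instance $(H,V_1,\dots,V_l)$, define $H'$ on the same vertex set and same partition, where for every pair $i\neq j$ and every $u\in V_i$, $v\in V_j$ we put $(u,v)\in E(H')$ iff $(u,v)\notin E(H)$ (edges inside a single $V_i$ are irrelevant since any solution picks one vertex per class, so I can set them arbitrarily). Any transversal $S=\{v_1,\dots,v_l\}$ with $v_i\in V_i$ is a clique in $H$ iff it is an independent set in $H'$, so multicolored cliques in $H$ are in bijection with multicolored independent sets in $H'$, with parameter preserved.

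The composition of the two reductions is polynomial-time (the blow-up is only by a factor of $l$ in vertices and $l^2$ in edges), and the parameter is exactly $l=k$ throughout. Since \textsc{Clique} is W[1]-hard, \MIS is W[1]-hard as well. I do not anticipate a real obstacle here; the only subtlety to get right is the non-edge convention in step one that enforces distinct vertices across the color classes, and the observation that intra-color edges can be ignored in step two because a solution is a transversal of the partition.
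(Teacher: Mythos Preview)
Your proposal is correct; the two-step reduction from \textsc{Clique} through \textsc{Multicolored Clique} to \MIS is the standard argument, and your handling of the non-edge convention in step one and the between-color complement in step two is accurate. The paper itself does not supply a proof of this lemma but simply cites Corollary~13.8 of~\cite{cygan2015parameterized}, so there is nothing to compare against beyond noting that your argument is essentially the one found in that reference.
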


\begin{theorem}\label{thm:2sat-hardness}
    \textsc{ImproveMaxCSP}(2SAT) is W[1]-hard.
\end{theorem}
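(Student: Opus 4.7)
The plan is to reduce from \MIS\ (W[1]-hard by~\Cref{lemma:mishardness}) to $\MinCSPwP(2\SAT)$, in the spirit of the slogan from the introduction that ``Vertex Cover and Independent Set are equivalent in our model''. I would encode ``one vertex per color class forming an independent set'' using three flavors of $2\SAT$ clauses, and pick $\mathcal{P}$ so that the assignments at symmetric-difference distance at most $k$ from $\mathcal{P}$ are precisely the candidate solutions; the reward for selecting a vertex is then balanced against the penalty for any same-class or same-edge conflict it creates.

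Given a \MIS\ instance $(G=(V,E),l,V_1,\ldots,V_l)$, I would introduce a variable $x_v$ for each $v \in V$ (intended meaning $x_v=1$ iff $v$ is selected), and take the clauses of $I$ to be (i) one clause $\overline{x_u}\vee\overline{x_v}$ for every edge $(u,v)\in E$, (ii) \emph{two} copies of $\overline{x_u}\vee\overline{x_w}$ for every pair $u,w$ inside a common color class, and (iii) one unit clause $x_v\vee x_v$ for each $v\in V$. I would set $\mathcal{P}:=\mathcal{C}_I(\alpha_{\mathbf{0}})$ where $\alpha_{\mathbf{0}}$ is the all-zeros assignment (so $\mathcal{P}$ consists of exactly the clauses of types (i) and (ii)), and $k:=l$. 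The all-zeros assignment witnesses the promise for free, and if the \MIS\ instance has a solution $S$, the corresponding $\alpha_S$ satisfies $|\mathcal{C}_I(\alpha_S)|=|\mathcal{P}|+l$ and $|\mathcal{C}_I(\alpha_S)\Delta\mathcal{P}|=l\le k$.

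The key step is to show that the maximum of $|\mathcal{C}_I(\alpha)|$ over all assignments $\alpha$ equals $|\mathcal{P}|+l$ exactly when the \MIS\ instance is solvable; once this is done, one decides \MIS\ by testing whether the algorithm's output attains this bound and reads off the solution $S_\alpha:=\{v:\alpha(x_v)=1\}$. Setting $s_i:=|S_\alpha\cap V_i|$ and letting $e(S_\alpha)$ denote the number of edges inside $S_\alpha$, a direct count gives
\[
|\mathcal{C}_I(\alpha)|-|\mathcal{P}|\;=\;|S_\alpha|-e(S_\alpha)-2\sum_i\binom{s_i}{2}.
\]
For $|S_\alpha|\le l$ this is at most $|S_\alpha|\le l$, with equality forcing $e(S_\alpha)=0$ and $s_i=1$ for every $i$ (a \MIS\ solution); for $|S_\alpha|=l+j$ with $j\ge 1$, a pigeonhole argument (after a trivial preprocessing that removes color classes of size one, whose unique vertex must be picked) yields $\sum_i\binom{s_i}{2}\ge j$, so the value is at most $l-j\le l-1$.

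The main obstacle, and the whole point of using \emph{two} copies of each color-pair clause, is calibrating the weights correctly. With only a single copy, an ``oversized'' assignment taking $l+1$ vertices with exactly one intra-class collision would lose only $1$ from the violated pair and gain $1$ from the extra unit positive clause, tying the intended \MIS\ target and destroying the decoding. The factor $2$ ensures that each additional vertex beyond $l$ strictly hurts the objective, which is exactly what the case analysis above formalizes; thus a polynomial-time algorithm for $\MinCSPwP(2\SAT)$ parameterized by $k$ would yield one for \MIS\ parameterized by $l$, contradicting W[1]-hardness.
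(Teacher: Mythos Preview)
Your proof is correct and follows essentially the same reduction from \MIS\ as the paper: boolean variables for vertices, ``don't pick both'' clauses for edges and same-class pairs, and unit clauses rewarding selection, with $\mathcal{P}$ chosen so that optimal assignments within distance $k=l$ encode multicolored independent sets. The differences are cosmetic: the paper uses the complementary encoding ($x_u=0$ means ``selected''), puts \emph{two} copies on edge clauses as well (which lets a clean local flipping argument show every optimum satisfies all of $\mathcal{P}$), whereas you use a single copy on edges and a direct count of $|\mathcal{C}_I(\alpha)|-|\mathcal{P}|=|S_\alpha|-e(S_\alpha)-2\sum_i\binom{s_i}{2}$; your pigeonhole bound $\sum_i\binom{s_i}{2}\ge j$ when $\sum_i s_i=l+j$ holds without the size-one preprocessing you mention (the constraint $s_i\le|V_i|$ only tightens the bound), so that step is harmless but unnecessary.
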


\begin{proof}
    We give a reduction from \MIS, whose hardness is given by \Cref{lemma:mishardness}, to \textsc{ImproveMaxCSP}$(2\SAT)$. Suppose that $G(V_1,\cdots,V_l; E)$ is the given \MIS instance. We assume there is no isolated vertex in $G$, otherwise we can simply include them in the independent set. We construct the following \textsc{ImproveMaxCSP}$(2\SAT)$ instance $(I,k,\mathcal{P})$:

    For each vertex $u \in V$, introduce a variable $x_u$ in $I$ and a clause $\overline{x_u}$ in $\mathcal{C}$. For each edge $(u,v) \in E$, add $2$ copies of the clause $x_u \lor x_v$ in $\mathcal{C}$ and $\mathcal{P}$. For each $i \in [l]$ and $u \neq v \in V_i$, add $2$ copies of the clause $x_u \lor x_v$ in $\mathcal{C}$ and $\mathcal{P}$. Set $k$ to be $l$.

    We first show that, any optimal assignment satisfies $\mathcal{P}$. Assume towards contradiction that some $\alpha$ with the smallest cost have $x_u = x_v = 0$ for some clause $x_u \lor x_v$ in $\mathcal{P}$. By flipping $\alpha(x_u)$, at least two clauses change from unsatisfied to satisfied (since we introduce 2 copies for each binary clause, and $x_u \lor x_v$ becomes satisfied), and one unary clause changes from satisfied to unsatisfied.
    So the new assignment would satisfy strictly more clauses than $\alpha$, contradicting with the selection of $\alpha$.

    Define $V_\alpha = \{u \mid x_u = 0\}$. Since the optimal $\alpha$ satisfies $\mathcal{P}$, from the construction of $\mathcal{P}$, $V_\alpha$ is a multicolored independent set, whose size is bounded by $l$. Since $|\mathcal{C}_I(\alpha) \Delta \mathcal{P}| = |\mathcal{C}_I(\alpha) \backslash \mathcal{P}| = |V_\alpha|$, the \textsc{ImproveMaxCSP}$(2\SAT)$ instance satisfies the promise, and the maximum multicolored independent set in $G$ is exactly $V_\alpha$ for the optimal assignment $\alpha$ in the \textsc{ImproveMaxCSP}$(2\SAT)$ instance. This finishes the proof.
\end{proof}

Finally, by introducing dummy variables, one can easily reduce $2\SAT$ to $\LE$, showing the main hardness result in this section.

\begin{proof}[Proof of~\Cref{characterization:le}]
    We give a reduction from \textsc{ImproveMaxCSP}$(2\SAT)$, whose hardness was proved in \Cref{thm:2sat-hardness}, to \textsc{ImproveMaxCSP}$(\LE)$ for $r \ge 3$. Given an instance $(I,k, \mathcal{P})$ of \textsc{ImproveMaxCSP}$(2\SAT)$, we would construct the instance $(I', k, \mathcal{P}')$ as follows: \begin{itemize}
        \item $V' = V \cup \{u_i \mid 1 \le i \le r-2\}$.
        \item For each binary clause $\mathcal{C} = (v_a \oplus p_{v_a}) \lor (v_b \oplus p_{v_b}) \in \mathcal{C}$ for $v_a \ne v_b \in V$ and $p_{v_a}, p_{v_b} \in \{0,1\}$, introduce a clause $\land_{x \ne y \in \{u_i \mid 1 \le i \le r-2\} \cup \{v_a,v_b\}} (x \oplus p_x) \lor (y \oplus p_y)$ in $\mathcal{C}'$ (notice this clause is exactly $\LE$ clause), where $p_{u_i} = 0$ for all $i \in [r-2]$. If $C \in \mathcal{P}$, add this clause to $\mathcal{P}'$. For unary clauses process similarly.
    \end{itemize}
    Since all $u_i$'s appear positively in $I'$, any good solution for $(I', k, \mathcal{P}')$ always sets $u_i$ to $1$. By assigning $u_i$'s to $1$ in instance $(I', k, \mathcal{P}')$ (see \Cref{def:assignvalue}), we get exactly the instance $(I,k ,\mathcal{P})$. This proves the equivalence of the two instances.
\end{proof}

\section{Hardness from MinCSPs}\label{sec:hardness-from-fpt}

In this section, we prove \Cref{characterization:main-hardfrommincsp}.

\subsection{Structural Characterization of Boolean Constraint Languages}

We give the following definitions characterizing boolean constraint languages to introduce \Cref{thm:fptdichotomy}. All of these definitions are as in \cite{kim2023flow}.

A boolean boolean relation $R$ is \emph{bijunctive} if it is expressible as a conjunction of 1- and 2-clauses (allowing negations), 
and $R$ is \emph{IHS-B-} (respectively \emph{IHS-B+}) if it is expressible as a conjunction of negative clauses $(\neg x_1 \lor \ldots \lor \neg x_r)$, positive 1-clauses $(x)$, and implications ($x \to y$, negation not allowed) (respectively positive clauses, negative 1-clauses, and implications). Here, IHS-B is an abbreviation for \emph{implicative hitting set, bounded}. A boolean constraint language $\Gamma$ is bijunctive if every boolean relation $R \in \Gamma$ is bijunctive, and is IHS-B+ (respectively IHS-B-) if every relation in $\Gamma$ is IHS-B+
(respectively IHS-B-). Finally, $\Gamma$ is IHS-B if it is either IHS-B+ or IHS-B-. (Note that this is distinct from every boolean relation $R \in \Gamma$ being either IHS-B+ or IHS-B-.)

We say that an undirected graph is \emph{$2K_2$-free} if it does not contain $2K_2$ as an induced subgraph. An directed graph is \emph{$2K_2$-free} if its underlying undirected graph is $2K_2$-free.

\begin{definition}[Gaifman Graph]
 The \emph{Gaifman graph} of a boolean relation $R$ of arity $r$ is the undirected graph $G_R$ with vertex set $V(G_R)=[r]$ and an edge $(i,j)$ for all $1\leq i<j\leq r$ for which there exist $a_i,a_j\in\{0,1\}$ such that $R$ contains no tuple $(t_1,\ldots,t_r)$ with $t_i=a_i$ and $t_j=a_j$. (In other words, there is an edge $(i,j)$ in $G_R$ if and only if constraint $R(x_1,\cdots,x_r)$ has no satisfying assignment $\alpha $ with $\alpha(x_i)=a_i$ and $\alpha(x_j)=a_j$ for some $a_i,a_j\in\{0,1\}$.)
\end{definition}

\begin{definition}[Arrow Graph]
 The \emph{arrow graph} of a boolean boolean relation $R$ of arity $r$ is the directed graph $H_R$ with vertex set $V(H_R)=[r]$ and an edge $i \to j$ for all $1\leq i,j\leq r$ with $i\neq j$ such that $R$ contains no tuple $(t_1,\ldots,t_r)$ with $t_i=1$ and $t_j=0$ but does contain two tuples $t'$ and $t''$ with $t'_i=t'_j=0$ and $t''_i=t''_j=1$. 
 (In other words, there is an edge $(i,j)$ in $H_R$ if and only if the constraint $R(x_1,\cdots,x_r)$ has no satisfying assignment $\alpha$ with $\alpha(x_i)=1$ and $\alpha(x_j)=0$ but has two satisfying assignments $\alpha_1$ and $\alpha_2$ with $\alpha_1(x_i)=\alpha_1(x_j)=0$ and $\alpha_2(x_i)=\alpha_2(x_j)=1$.)
\end{definition}

The following claims are straightforward from the definition and the symmetry of $\SymRel(r,S)$ relations.

\begin{claims}\label{claim:addbijunc}
    If a boolean relation $R$ of arity $r$ is bijunctive, then $R \oplus b$ is bijunctive for any $b \in \mathbb{F}_2^r$.
\end{claims}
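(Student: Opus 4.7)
The plan is to prove this directly from the defining expression of bijunctivity, by tracking how $\oplus b$ acts on clauses. Since $R$ is bijunctive, there exists a conjunction $\phi = \bigwedge_{j} C_j$ of 1- and 2-clauses over literals on $x_1, \ldots, x_r$ such that $R = \{(a_1, \ldots, a_r) \in \{0,1\}^r : \phi(a_1, \ldots, a_r) = 1\}$. The goal is to exhibit an analogous expression for $R \oplus b$.

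First, I would unfold the definition: $(y_1,\ldots,y_r) \in R \oplus b$ if and only if $(y_1 \oplus b_1, \ldots, y_r \oplus b_r) \in R$, which by hypothesis is equivalent to $\phi(y_1 \oplus b_1, \ldots, y_r \oplus b_r) = 1$. Thus the defining formula for $R \oplus b$ is obtained by substituting the variable $x_i$ by the literal $x_i \oplus b_i$ inside each clause $C_j$ of $\phi$.

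Next I would observe that this substitution acts clause-by-clause and preserves arity: a 1-clause $C_j$ built on a literal $\ell \in \{x_i, \neg x_i\}$ becomes a 1-clause built on the literal obtained by flipping the polarity of $\ell$ when $b_i = 1$ (and keeping it otherwise); similarly a 2-clause built on literals over variables $x_i, x_{i'}$ becomes another 2-clause whose literals have polarities adjusted according to $b_i$ and $b_{i'}$. In every case the result is again a 1- or 2-clause on literals, because negation of a literal is a literal.

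Therefore $R \oplus b$ is defined by a conjunction of 1- and 2-clauses over literals on $x_1, \ldots, x_r$, which is precisely the definition of bijunctive. No step is a serious obstacle here; the only care needed is to make explicit the semantic equivalence between $\oplus b$ on tuples and flipping literal polarities inside the clauses, so that the clausal representation is immediately transportable.
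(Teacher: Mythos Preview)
Your proposal is correct and is exactly the natural way to fill in the paper's omitted proof: the paper simply states the claim is ``straightforward from the definition'' without giving details, and your argument---substituting $x_i \mapsto x_i \oplus b_i$ in each clause of a bijunctive representation and observing that this only flips literal polarities---is precisely that straightforward verification.
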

\begin{claims}\label{claim:isomorphism}
    For any boolean relation $R$ of arity $r$ and $b \in \mathbb{F}_2^r$, $G_R \cong G_{R \oplus b}$ and $H_R \cong H_{R \oplus b}$, where $\cong$ stands for graph isomorphism relation.
\end{claims}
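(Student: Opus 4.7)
My plan is to prove both isomorphisms by exhibiting explicit vertex maps derived from the involution $\varphi_b : \mathbb{F}_2^r \to \mathbb{F}_2^r$, $\alpha \mapsto \alpha \oplus b$, which bijects $R$ with $R \oplus b$. The key observation is that at each coordinate pair $(i,j)$, $\varphi_b$ restricts to the bijection $\tau_{ij} : \{0,1\}^2 \to \{0,1\}^2$, $\tau_{ij}(x,y) = (x \oplus b_i,\, y \oplus b_j)$, that sends the $(i,j)$-projection of $R$ bijectively onto the $(i,j)$-projection of $R \oplus b$. All of the local structure defining Gaifman and arrow edges at $(i,j)$ is encoded in this projection, so I only need to track how $\tau_{ij}$ transforms that structure.

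For $G_R \cong G_{R \oplus b}$, I will take the identity vertex map on $[r]$ and in fact prove the stronger statement $E(G_R) = E(G_{R \oplus b})$. The Gaifman edge condition at $(i,j)$ is purely existential over $\{0,1\}^2$: some pair $(a_i,a_j)$ is missing from the $(i,j)$-projection of the relation. Since $\tau_{ij}$ is a bijection of $\{0,1\}^2$ matching up the two projections, a missing pair for $R$ gives $\tau_{ij}(a_i,a_j) = (a_i \oplus b_i,\, a_j \oplus b_j)$ as a missing pair for $R \oplus b$, and conversely. Ranging over all candidate missing pairs, the Gaifman edge sets coincide on the same vertex set.

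For $H_R \cong H_{R \oplus b}$, I will also propose the identity vertex map and establish the isomorphism by case analysis on the parity pattern $(b_i, b_j)$. When $b_i = b_j = 0$, $\tau_{ij}$ is the identity, so the arrow signature---forbidden $(1,0)$ together with present $(0,0)$ and $(1,1)$---transports verbatim, giving $i \to j \in H_R$ iff $i \to j \in H_{R \oplus b}$. When $b_i = b_j = 1$, $\tau_{ij}$ swaps $(0,0) \leftrightarrow (1,1)$ and $(0,1) \leftrightarrow (1,0)$, so the forbidden $(1,0)$ in $R$ becomes the forbidden $(0,1)$ in $R \oplus b$ while the set $\{(0,0),(1,1)\}$ is preserved; this matches the signature of the reversed arrow, yielding $i \to j \in H_R$ iff $j \to i \in H_{R \oplus b}$.

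The main obstacle lies in the mixed case $b_i \neq b_j$, where $\tau_{ij}$ swaps the diagonal $\{(0,0),(1,1)\}$ with the anti-diagonal $\{(0,1),(1,0)\}$ as whole sets. Under this swap, the forbidden off-diagonal pattern $(1,0)$ of an arrow edge in $H_R$ maps to a forbidden on-diagonal pair in $R \oplus b$, which is not the signature of any arrow edge in the standard sense. To overcome this, I plan to refine the identity vertex map by composing it with an involution on $[r]$ that tracks the parity pattern of $b$, thereby re-routing mixed-parity arrow edges of $H_R$ onto arrow edges in $H_{R \oplus b}$ of the appropriate orientation. Verifying that this composite is a valid graph isomorphism will require careful bookkeeping of how the arrow signature transforms under all four $(b_i, b_j)$ patterns, and this reconciliation will be the technical heart of the proof.
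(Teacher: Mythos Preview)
Your treatment of the Gaifman graph is correct and in fact establishes the stronger equality $G_R = G_{R \oplus b}$. The arrow-graph half, however, is \emph{false} as stated, so no proof can succeed. Take $r=2$, $R=\{(0,0),(1,1)\}$, and $b=(1,0)$. Here $\pi_{12}(R)=\{(0,0),(1,1)\}$, so $H_R$ carries both arrows $1\to 2$ and $2\to 1$. But $R\oplus b=\{(1,0),(0,1)\}$ contains neither $(0,0)$ nor $(1,1)$, hence $H_{R\oplus b}$ has no arrows at all; two directed edges versus zero rules out any isomorphism. (The same phenomenon with $r=4$, $R=\SymRel(4,\{0,4\})$, $b=(1,1,0,0)$ even produces an underlying undirected arrow graph equal to $2K_2$.)

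You correctly located the difficulty in the mixed-parity case $b_i\neq b_j$, but your proposed remedy of composing with an involution on $[r]$ cannot help, since any vertex bijection preserves the edge count. The obstruction is exactly what you identified and then set aside: under $\tau_{ij}$ with $b_i\neq b_j$, the diagonal $\{(0,0),(1,1)\}$---whose presence is required for \emph{any} arrow between $i$ and $j$, in either direction---is swapped with the anti-diagonal $\{(0,1),(1,0)\}$, so arrows at mixed-parity pairs can simply vanish and no relabeling of $[r]$ restores them. The paper asserts the claim as ``straightforward'' without proof; in fact only the Gaifman half holds in general (your equal-parity analysis does correctly give $H_R\cong H_{R\oplus b}$ when $b\in\{0^r,1^r\}$). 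Fortunately the paper's downstream hardness argument only uses that the relevant languages are neither bijunctive nor IHS-B, so the false half of this claim is not actually needed there.
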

\begin{claims}\label{claim:graphcomplete}
    For any $r$ and $S \subseteq \{0,1,\cdots,r\}$, $G_{\SymRel(r,S)}$ and the underlying undirected graph of $H_{\SymRel(r,S)}$ is either the complete graph or the empty graph.
\end{claims}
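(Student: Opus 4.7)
The plan is to prove both halves of the claim by a direct symmetry argument, exploiting the invariance of $\SymRel(r,S)$ under coordinate permutations. The key observation is that if any single edge is present in $G_R$ (resp.\ $H_R$), then permutation symmetry forces every other pair to carry an edge as well, and so the graph must be complete; otherwise it is empty.

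For the Gaifman graph, I would take an arbitrary edge $(i,j) \in G_R$ witnessed by values $a_i, a_j \in \{0,1\}$, meaning no tuple $(t_1,\ldots,t_r) \in R$ has $t_i = a_i$ and $t_j = a_j$. Fix any other pair $(i',j')$ with $i' \neq j'$, and choose a permutation $p : [r] \to [r]$ with $p(i) = i'$ and $p(j) = j'$. Suppose for contradiction some $(s_1,\ldots,s_r) \in R$ satisfies $s_{i'} = a_i$ and $s_{j'} = a_j$. Setting $t_m := s_{p(m)}$, symmetry of $R$ gives $(t_1,\ldots,t_r) \in R$, but $t_i = s_{i'} = a_i$ and $t_j = s_{j'} = a_j$, contradicting the witness at $(i,j)$. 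Hence $(i',j')$ is also an edge with the same witness, so $G_R$ is complete.

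For the arrow graph, I would argue identically. Suppose $i \to j \in H_R$, so (a) no tuple in $R$ has $t_i = 1, t_j = 0$, and (b) there exist tuples with $t_i = t_j = 0$ and with $t_i = t_j = 1$. For any $i' \neq j'$, pick a permutation $p$ with $p(i) = i', p(j) = j'$: property (a) transfers to $(i',j')$ exactly as above, and property (b) transfers because applying the permutation to the witnessing tuples yields tuples in $R$ with the required values at coordinates $i',j'$. Thus $i' \to j'$ is a directed edge for every ordered pair of distinct indices, so $H_R$ is the complete directed graph; in particular its underlying undirected graph is $K_r$. If no directed edge exists, the underlying undirected graph is empty.

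I do not anticipate a serious obstacle: the whole statement is an immediate consequence of the definition of symmetric relations, and the only mild care needed is to verify that the witnesses for $G_R$ and $H_R$ really do transfer coordinate-wise under an arbitrary permutation sending $(i,j) \mapsto (i',j')$. The two short transference arguments above (one using a single forbidden pattern, the other using both a forbidden pattern and two satisfying patterns) cover all cases.
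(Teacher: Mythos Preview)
Your proposal is correct and is exactly the symmetry argument the paper has in mind; the paper does not even spell out a proof for this claim, merely stating that it is ``straightforward from the definition and the symmetry of $\SymRel(r,S)$ relations.'' Your permutation-transfer argument is the natural way to make this precise.
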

Combining \Cref{claim:isomorphism} and \Cref{claim:graphcomplete}, we immediately get the following corollary.
\begin{corollary}\label{lem:2k2free}
    For any integer $r \in \mathbb{N}^+$, $S \subseteq\{0,1, \cdots, r\}$ and $b \in \mathbb{F}_2^r$, $G_{\SymRel(r,S)\oplus b}$ and the underlying undirected graph of $H_{\SymRel(r,S)\oplus b}$ is either the complete graph or the empty graph.
\end{corollary}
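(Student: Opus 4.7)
The plan is to prove the corollary by directly combining the two immediately preceding claims, since its content is genuinely just their conjunction. Given arbitrary $r$, $S \subseteq \{0, 1, \ldots, r\}$, and $b \in \mathbb{F}_2^r$, I would first invoke \Cref{claim:isomorphism} to obtain the graph isomorphisms $G_{\SymRel(r,S) \oplus b} \cong G_{\SymRel(r,S)}$ and $H_{\SymRel(r,S) \oplus b} \cong H_{\SymRel(r,S)}$.

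The bridging observation is that being ``the complete graph'' or ``the empty graph'' is an isomorphism-invariant property: each depends only on whether the edge count (respectively arc count, together with the existence of every reverse arc) matches $\binom{r}{2}$ or equals $0$. Consequently, once the isomorphism $G_{\SymRel(r,S) \oplus b} \cong G_{\SymRel(r,S)}$ is in hand, whichever of the two alternatives holds on one side must hold on the other. The same reasoning applies to the underlying undirected graph of $H$, noting that ``underlying undirected'' is itself a functor on graph isomorphisms and so transfers completeness and emptiness across the isomorphism $H_{\SymRel(r,S) \oplus b} \cong H_{\SymRel(r,S)}$.

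With these two ingredients, I would then apply \Cref{claim:graphcomplete} to conclude that $G_{\SymRel(r,S)}$ is either complete or empty and that the underlying undirected graph of $H_{\SymRel(r,S)}$ is either complete or empty. Transporting these conclusions across the isomorphisms established in the first step yields exactly the statement of the corollary. There is no real obstacle to overcome here: the substantive work has already been absorbed into \Cref{claim:isomorphism} (symmetry of Gaifman and arrow graphs under XOR by a fixed vector) and \Cref{claim:graphcomplete} (the extremal structure of symmetric predicates). The corollary is stated separately only so that subsequent arguments, which need to reason about arbitrary negation patterns $\SymRel(r,S) \oplus b$ arising as clauses of $\SymLang(r,S)$, can cite a single uniform fact rather than unroll the combination each time.
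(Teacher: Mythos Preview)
Your proposal is correct and matches the paper's approach exactly: the paper states the corollary as an immediate consequence of combining \Cref{claim:isomorphism} and \Cref{claim:graphcomplete}, with no further argument given. Your elaboration on isomorphism-invariance of completeness/emptiness is simply spelling out what the paper leaves implicit in the word ``combining.''
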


\subsection{Main Theorem on MinCSP Hardness}

We start with a simple claim that allows us to use hardness results of $\MinCSP$ to prove most hardness results for \MinCSPwP.

\begin{claims}\label{clm:prediction2fpt}
    If $\MinCSP(\Gamma)$ is W[1]-hard, then \MinCSPwP$(\Gamma)$ is W[1]-hard.
\end{claims}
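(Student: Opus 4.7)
The plan is to exhibit a trivial FPT reduction from $\MinCSP(\Gamma)$ to $\MinCSPwP(\Gamma)$ by taking $\mathcal{P} = \mathcal{C}$. Concretely, given a $\MinCSP(\Gamma)$ instance $(I(V,\mathcal{C}), k)$, I would form the $\MinCSPwP(\Gamma)$ instance $(I, k, \mathcal{P} := \mathcal{C})$ with the same parameter $k$. The key observation is that with this choice of $\mathcal{P}$, for any assignment $\beta$ we have $|\mathcal{C}_I(\beta) \Delta \mathcal{P}| = |\mathcal{C} \setminus \mathcal{C}_I(\beta)| = \cost_I(\beta)$, so the \MinCSPwP promise ``there exists $\alpha_0$ with $|\mathcal{C}_I(\alpha_0)\Delta \mathcal{P}|\le k$'' coincides exactly with ``there exists an assignment of cost at most $k$ in the \MinCSP instance.''

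Next I would argue that the output of any \MinCSPwP algorithm on $(I,k,\mathcal{C})$ decides the \MinCSP instance. By Remark~\ref{remarkalgo-boundary}, an algorithm $A$ for \MinCSPwP$(\Gamma)$ runs in $f(k)\cdot n^{O(1)}$ time on every input and produces a correct (``good'') assignment whenever the promise is met. I would run $A$ and inspect the returned assignment $\alpha$: if $\cost_I(\alpha)\le k$, answer YES; otherwise answer NO. Correctness follows in two cases. If the \MinCSP instance has a solution $\alpha_0$ with $\cost_I(\alpha_0)\le k$, the promise holds, so $A$ returns a good $\alpha$ with $|\mathcal{C}_I(\alpha)|\ge|\mathcal{C}_I(\alpha_0)|$, i.e.\ $\cost_I(\alpha)\le\cost_I(\alpha_0)\le k$, and we answer YES. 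If no such $\alpha_0$ exists, then every assignment has cost $>k$, so in particular $\cost_I(\alpha)>k$ regardless of what $A$ returns, and we answer NO.

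This is a polynomial-time, parameter-preserving (many-one) reduction from $\MinCSP(\Gamma)$ to \MinCSPwP$(\Gamma)$, so W[1]-hardness transfers from the former to the latter. There is no real obstacle here: the entire content of the claim is that padding the promise problem with $\mathcal{P}=\mathcal{C}$ recovers the original \MinCSP exactly, and the ``good assignment'' output guarantee is strong enough to decide the cost threshold $k$ without any additional work.
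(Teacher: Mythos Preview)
Your proposal is correct and follows essentially the same approach as the paper: set $\mathcal{P}=\mathcal{C}$, use that $\cost_I(\alpha)=|\mathcal{C}_I(\alpha)\Delta\mathcal{P}|$, and decide the $\MinCSP$ instance by checking whether the returned assignment has cost at most $k$. The paper's proof is the same reduction with the same case analysis.
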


\begin{proof}
    We present a reduction from $\MinCSP(\Gamma)$ to \MinCSPwP$(\Gamma)$. For a given $\MinCSP(\Gamma)$ instance $(I,k)$, we simply define $\mathcal{P} := \mathcal{C}_I$, and this would result in a \MinCSPwP$(\Gamma)$ instance $(I,k,\mathcal{P})$. Notice that the instance $(I,k,\mathcal{P})$ may not satisfy the promise, and remind that we specify in \Cref{remarkalgo-boundary} the definition of the \text{ImproveMaxCSP} problem that some assignment should be provided for such instances.

    Since $\cost(\alpha) = |\mathcal{C}_I(\alpha) \Delta \mathcal{C}_I| = |\mathcal{C}_I(\alpha) \Delta \mathcal{P}|$, when $(I,k)$ has an assignment of cost $\le k$, $(I,k,\mathcal{P})$ satisfies the promise and solving $(I,k,\mathcal{P})$ would result in an optimal assignment. Otherwise, solving $(I,k,\mathcal{P})$ would result in an arbitrary assignment whose cost is larger than $k$, and we can conclude by checking its cost that $I$ has no assignment of cost $\le k$.

\end{proof}

According to \cite{kim2023flow}, the complexity of $\MinCSP$ for boolean constraint languages can be fully characterized.

\begin{lemma}[\cite{kim2023flow}]\label{thm:fptdichotomy}
    For a finite boolean constraint language $\Gamma$, $\MinCSP(\Gamma)$ is FPT if it satisfies one of the following conditions, and is W[1]-hard if none of the following holds. \begin{itemize}
        \item For all $R \in \Gamma$, $R$ is bijunctive and $G_R$ is $2K_2$-free.
        \item For all $R \in \Gamma$, $R$ is IHS-B and $H_R$ is $2K_2$-free.
    \end{itemize}
\end{lemma}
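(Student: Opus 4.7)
The plan is to establish the dichotomy in two halves: give FPT algorithms under each of the two sufficient conditions, and prove W[1]-hardness whenever both conditions fail. The algorithmic core will be the \emph{flow augmentation} framework, and the hardness core will be a case analysis over Post's lattice of boolean clones, extracting a reduction-enabling gadget in every case where neither FPT condition holds.

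For the FPT side, I would first normalize the constraints. In the bijunctive + $2K_2$-free-Gaifman case, every $R \in \Gamma$ decomposes as a conjunction of unary atoms and 2-clauses (\Cref{claim:addbijunc}), so an instance of $\MinCSP(\Gamma)$ becomes an Almost-$2\SAT$-style deletion problem where one deletes $\le k$ of the original constraints (not clauses). Standard Almost-$2\SAT$ machinery (iterative compression + LP-guided branching, as in \cite{narayanaswamy2012lp}) is the starting point, but $2K_2$-freeness of each $G_R$ is used to bound how two disjoint binary obstructions within a single constraint can simultaneously force deletions; combining this with flow augmentation one solves the compression problem in $f(k) \cdot n^{O(1)}$ time. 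The IHS-B + $2K_2$-free-arrow case is handled symmetrically: each relation normalizes into implications, positive (or negative) clauses, and unary constraints, reducing $\MinCSP(\Gamma)$ to a weighted Directed-Feedback-Arc-Set-like problem with implication hitting, again FPT via flow augmentation, with the $2K_2$-free arrow graph controlling the ``interaction depth'' of parallel implication chains.

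For the W[1]-hardness direction, I would show that failure of both conditions yields some $R \in \Gamma$ of one of three types. Type (i): $R$ is neither bijunctive nor IHS-B; then by pp-definability over Post's lattice, $R$ expresses a relation whose $\MinCSP$ is already W[1]-hard (for example NAE-3SAT, $1$-in-$3$ SAT, or an All-Equal of arity $\ge 3$), giving the reduction essentially for free. Type (ii): every $R \in \Gamma$ is bijunctive, but some $G_R$ contains an induced $2K_2$; the four coordinates and two disjoint binary obstructions form a ``paired'' gadget that reduces from Paired Min $st$-Cut along lines analogous to the $r\ALLEQ$ proofs of \Cref{sec:ae}. Type (iii): every $R \in \Gamma$ is IHS-B but some $H_R$ contains an induced $2K_2$; the two disjoint implication arcs give a paired directed-cut gadget, once again reducing from Paired Min $st$-Cut. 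A case split over Post's lattice, together with the fact (\Cref{lem:pmchardness}) that Paired Min $st$-Cut is W[1]-hard, guarantees that every $\Gamma$ failing both FPT conditions contains at least one $R$ of one of these kinds and hence that $\MinCSP(\Gamma)$ is W[1]-hard.

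The principal obstacle is the algorithmic side: flow augmentation is a heavy framework whose correctness and FPT running time in the precise setting needed here require substantial work, in particular showing that the $2K_2$-free condition on $G_R$ (or on $H_R$) is precisely the structural property that bounds the number of augmenting cuts one must sample and glue back together. The hardness side, by contrast, is largely a careful bookkeeping argument on top of existing Paired Min $st$-Cut reductions and standard Post-lattice pp-expressibility; the subtle part there is ensuring that mixed languages (where different $R \in \Gamma$ lie in \emph{different} tractable subclasses) genuinely contain an $R$ of one of the three hardness-enabling types above.
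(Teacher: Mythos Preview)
This lemma is not proved in the paper at all: it is imported verbatim from \cite{kim2023flow} and used as a black box. So there is no ``paper's own proof'' to compare your proposal against; the paper simply cites the result and moves on.

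As a sketch of how \cite{kim2023flow} actually establishes the dichotomy, your outline is in the right neighborhood on the algorithmic side (flow augmentation is indeed the engine, and the $2K_2$-free hypotheses on $G_R$ and $H_R$ are exactly what make the augmentation/branching terminate), though of course the real content is in the flow-augmentation machinery itself, which you correctly flag as the heavy part. On the hardness side there is a genuine gap in your Type~(i) reasoning: pp-definability over Post's lattice is the right tool for decision $\CSP$, but it does \emph{not} transfer W[1]-hardness for $\MinCSP$ ``for free.'' In the deletion setting one needs cost-preserving gadgets (variously called strict or proportional implementations), because a pp-definition may blow up the number of auxiliary constraints and hence the deletion budget in an uncontrolled way. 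The actual proof in \cite{kim2023flow} builds such implementations carefully rather than appealing to generic pp-expressibility, and this is also where the ``mixed language'' subtlety you mention is resolved. Your Types~(ii) and~(iii), reducing from Paired Min $st$-Cut via the induced $2K_2$, are essentially the right picture.
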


\subsection{Case Distinction for Our Setting}

Since complete and empty graphs are all $2K_2$-free, from \Cref{lem:2k2free}, $2K_2$-freeness is always satisfied for $\SymLang(r,S)$. So we only need to check the bijunctive and IHS-B properties for them. Recall two constraint languages are equivalent if they contain the same set of relations.

\begin{theorem}\label{thm:ihsb}
    When $\Gamma = \SymLang(r,S)$ is nontrivial and IHS-B, $\Gamma$ is equivalent to $r\AND$.
\end{theorem}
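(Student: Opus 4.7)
The plan is to establish both directions of the equivalence. The easy direction is that $r\AND$ is indeed IHS-B: every relation in $r\AND$ is $\SymRel(r,\{0\}) \oplus b = \{b\}$, a singleton, which is expressed as the conjunction of unary clauses $(x_i)$ where $b_i=1$ and $(\neg x_i)$ where $b_i=0$. These clauses are all of the forms allowed in IHS-B- (positive 1-clause and degenerate negative clause), so $r\AND$ is IHS-B-; symmetrically it is IHS-B+.

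For the harder direction, suppose $\Gamma = \SymLang(r,S)$ is nontrivial and IHS-B. By definition, either every relation in $\Gamma$ is IHS-B- or every one is IHS-B+; the two cases are symmetric, so I would treat IHS-B- and derive the IHS-B+ case in parallel. A key observation is that any IHS-B- relation is $\wedge$-closed (each of the three clause types — negative clause, positive unit, implication — is preserved under coordinate-wise AND). Since $\Gamma$ contains $R := \SymRel(r,S)$ and also $R \oplus (1,1,\ldots,1) = \SymRel(r, \{r - s : s \in S\})$, both must be $\wedge$-closed. But $\wedge$-closure of $R \oplus (1,\ldots,1)$ is the same as $\vee$-closure of $R$ (flipping all variables conjugates AND with OR). So $R$ is simultaneously $\wedge$-closed and $\vee$-closed.

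The central combinatorial step is: the only $S \subseteq \{0, \ldots, r\}$ for which $\SymRel(r,S)$ is both $\wedge$- and $\vee$-closed are $\emptyset$, $\{0\}$, $\{r\}$, $\{0,r\}$, and $\{0,1,\ldots,r\}$. To prove this, I would pick $s \in S$ with $0 < s < r$ (if one exists) and take $u_1, u_2$ of weight $s$ with 1's in $\{1,\ldots,s\}$ and $\{2,\ldots,s+1\}$ respectively; then $u_1 \wedge u_2$ has weight $s-1$ so $\wedge$-closure forces $s-1 \in S$, and iterating yields $\{0,1,\ldots,s\} \subseteq S$. A dual construction using $\vee$ gives $\{s, s+1, \ldots, r\} \subseteq S$, forcing $S = \{0,\ldots,r\}$ which is trivial. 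Hence in the nontrivial case $S \subseteq \{0, r\}$.

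It remains to rule out $S = \{0,r\}$ for $r \ge 2$ (the case $r=1$ is trivial). Take $b = (1,0,\ldots,0)$; then $R \oplus b = \{(1,0,\ldots,0),\, (0,1,\ldots,1)\}$, whose coordinate-wise AND is $(0,\ldots,0)$, not in the relation. So $R \oplus b$ is not $\wedge$-closed and hence not IHS-B-, contradicting $\Gamma \in$ IHS-B-. In the parallel IHS-B+ case, the same pair of tuples has OR equal to $(1,\ldots,1)$, not in $R \oplus b$, giving the analogous contradiction. The only remaining nontrivial options are $S = \{0\}$ and $S = \{r\}$, both of which yield $\Gamma = r\AND$. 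The main obstacle I anticipate is just keeping the combinatorial lemma clean — ensuring that the explicit $u_1, u_2$ constructions are well-defined for the full range $0 < s < r$ and that the resulting case analysis on $S$ is exhaustive.
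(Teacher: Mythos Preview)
Your proof is correct but takes a genuinely different route from the paper's. The paper argues directly from the syntactic form of an IHS-B- definition: since $\Gamma$ is closed under all negation patterns, some $R\in\Gamma$ rejects $0^r$; but negative clauses and implications are satisfied by $0^r$, so the definition of $R$ must contain a positive unit $(x_i)$, forcing coordinate $i$ to be constant across $R$. Translating this back to the symmetric relation $R_0=\SymRel(r,S)\in\Gamma$ via $R_0=R\oplus b$, one coordinate of $R_0$ is constant, and by symmetry every coordinate is constant, so $R_0$ is a singleton and $\Gamma=r\AND$.

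Your approach is instead semantic/polymorphism-style: IHS-B- implies $\wedge$-closed, and since $\SymRel(r,S)$ and its global flip are both in $\Gamma$, the base relation is simultaneously $\wedge$- and $\vee$-closed; a short Hamming-weight argument then pins down $S\subseteq\{0,r\}$, and a single asymmetric shift $b=(1,0,\dots,0)$ kills $S=\{0,r\}$. This is a bit longer and requires the closing case analysis, but it is more structural and would transplant easily to other settings where closure under Boolean polymorphisms is the natural invariant. The paper's argument is shorter and exploits the specific fact that the only IHS-B- clause falsified by $0^r$ is a positive unit, avoiding any case analysis on $S$ altogether. Both routes are clean; note also that the theorem only claims one implication, so your ``easy direction'' paragraph is not strictly required (though it is harmless).
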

\begin{proof}

    We show here proof for IHS-B- case, IHS-B+ case is similar. Since $\Gamma$ is nontrivial and allows all negation patterns, there exists a relation $R \in \Gamma$ with IHS-B- definition $\psi$, say with scope $x_1,\cdots,x_r$, rejecting assignment $0^r$. Since $0^r$ satisfies all negative clauses and all implications, there must be a positive 1-clause $x_i$ in $\psi$ for some variable $x_i$. Therefore, all elements $\alpha \in R$ have $\alpha_i = 1$. %

    Now choose any symmetric relation $R_0 \in \Gamma$. %
    Such $R_0$ exists from the definition of $\Gamma$. Since $R_0 = R \oplus b$ for some $b \in \mathbb{F}_2^r$, all elements $\alpha_0 \in R_0$ have the same value in the $i$-th index. Without loss of generality suppose $\alpha_{0,i} = 0$. From the  symmetry of $R_0$, we have $\alpha_{0,j} = 0$ for any $\alpha_0 \in R_0$ and $j \in [r]$. Therefore, $R_0 = \{0^r\} = \SymRel(r,\{0\})$ and $\Gamma$ is $r\AND$.%
\end{proof}

\begin{theorem}\label{thm:bijunctive}
    When $\Gamma = \SymLang(r,S)$ is nontrivial and bijunctive, $\Gamma$ is either $\LE$, $r\ALLEQ$ or $r\AND$.
\end{theorem}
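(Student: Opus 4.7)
The plan is to reduce the statement to a finite case analysis via the folklore equivalence: a boolean relation $R$ is bijunctive if and only if it is $2$-decomposable, meaning
\[
R \;=\; \left\{\alpha \in \{0,1\}^r : (\alpha_i, \alpha_j) \in \pi_{\{i,j\}}(R) \text{ for all } 1 \le i < j \le r \right\}.
\]
By~\Cref{claim:addbijunc}, the whole language $\Gamma = \SymLang(r, S)$ is bijunctive as soon as $R := \SymRel(r, S)$ is, so it suffices to characterize the symmetric sets $S$ for which $R$ is bijunctive.

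Next I would exploit the coordinate symmetry of $R$. Since $R$ is invariant under every permutation of its $r$ coordinates, the pairwise projection $\pi_{\{i,j\}}(R)$ does not depend on the pair $(i,j)$; let $R_2 \subseteq \{0,1\}^2$ denote this common projection. A further application of coordinate symmetry (swap $i$ and $j$) shows that $R_2$ is itself invariant under swapping its two coordinates. Only eight such swap-invariant binary relations exist, obtained by freely including or excluding the three blocks $\{00\}$, $\{11\}$, and $\{01, 10\}$.

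Then I would enumerate these eight candidates and read off the corresponding symmetric $S$. The options $R_2 \in \{\varnothing, \{0,1\}^2\}$ give trivial $\Gamma$ and are excluded. The singletons $R_2 = \{00\}$ and $R_2 = \{11\}$ force $S = \{0\}$ and $S = \{r\}$, so $\Gamma = r\AND$. The ``equal'' pair $R_2 = \{00, 11\}$ forces all coordinates equal, yielding $S = \{0, r\}$ and $\Gamma = r\ALLEQ$. The ``at most one $1$'' relation $R_2 = \{00, 01, 10\}$ (and its dual $\{01, 10, 11\}$) gives $S = \{0, 1\}$ (resp.\ $\{r-1, r\}$), both of which are $\LE$. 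The last case $R_2 = \{01, 10\}$ demands that every pair of coordinates differ: for $r \ge 3$ this is unsatisfiable and leaves $S = \varnothing$ (trivial), while for $r = 2$ it produces $S = \{1\}$, and the identity $\SymRel(2, \{1\}) \oplus (0,1) = \{00,11\} = \SymRel(2, \{0, 2\})$ shows $\SymLang(2, \{1\}) \equiv 2\ALLEQ$.

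The most delicate ingredient is the ``bijunctive $\iff$ $2$-decomposable'' equivalence used in the first step. It is classical but can also be derived directly here: one direction is immediate, since any defining conjunction of $1$- and $2$-clauses can be grouped by the pair of variables each clause touches into the pairwise constraints; the other direction follows by extracting a satisfying assignment from the pairwise projections via standard $2$-SAT propagation. Once that equivalence is in hand, the remainder of the proof is a short case check carried out above.
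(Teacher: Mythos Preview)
Your proposal is correct and follows essentially the same route as the paper: both reduce to the common pairwise projection $\pi_{12}(R)$ via symmetry and the bijunctive $\iff$ $2$-decomposable equivalence, then enumerate the eight swap-invariant binary relations and read off $S$. The only omission is the degenerate $r=1$ case (where there are no pairs to project onto), which the paper handles in a single line and which your argument would need to mention separately.
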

\begin{proof}
    When $r = 1$, $\Gamma$ is clearly bijunctive, 
    and the only nontrivial one is $1$AND. So in the following we consider $r \ge 2$.

    Choose any symmetric relation $R \in \Gamma$. Such $R$ exists from the definition of $\Gamma$. For a term $\psi$ with scope $X$, let us say that $R$ implies $\psi$ if all elements in $R$ satisfies $\psi$. Define $\pi_{ij}(R) = \pi_{\{i,j\}}(R) = \{\alpha_{\{i,j\}} \mid \alpha \in R\}$ for $i,j \in [r]$. %
    By symmetry, $\pi_{ij}(R)$ are all the same for any $i \ne j$. %
    Further, since $R$ is bijunctive, $\land_{ij} \pi_{ij}(R)$ must defines $R$. So $R$ is exactly determined by $\pi_{12}(R)$.
    
    We have $\pi_{12}(R) = \pi_{21}(R)$, so $\pi_{12}(R)$ is symmetric. Write $\pi_{12}(R) = \SymRel(2, S')$. Let us enumerate the set of all possible $S'$: \begin{itemize}
        \item $\{0,1,2\}$: $R = \{0,1\}^r$ which is trivial.
        \item $\{0,1\}$: $R = \SymRel(r, \{0,1\})$ whose corresponding language is $\LE$.
        \item $\{1,2\}$: $R = \SymRel(r, \{r-1, r\})$ whose corresponding language is $\LE$.
        \item $\{0,2\}$: $R = \SymRel(r, \{0,r\})$ whose corresponding language is $r\ALLEQ$.
        \item $\{0\}$: $R = \SymRel(r, \{0\})$ whose corresponding language is $r\AND$.
        \item $\{1\}$: No such $R$ exists for $r \ge 3$. For $r=2, R = \SymRel(2,\{1\})$ corresponds to $2\ALLEQ$.
        \item $\{2\}$: $R = \SymRel(r, \{r\})$ whose corresponding language is $r\AND$.
        \item $\varnothing$: $R = \varnothing$ which is trivial.
    \end{itemize}
    It is easy to check  $\LE$, $r\ALLEQ$ and $r\AND$ are bijunctive.
\end{proof}

Now we prove \Cref{characterization:main-hardfrommincsp}.
\begin{proof} [Proof of~\Cref{characterization:main-hardfrommincsp}]
    For nontrivial $\Gamma = \SymLang(r,S)$ for some $r \in \mathbb{N}^+$ and $S \subseteq \{0,1,\cdots,r\}$ that is not equivalent to $r\AND, r\ALLEQ$ or $\LE$, from \Cref{thm:ihsb}, $\Gamma$ is not IHS-B; from \Cref{thm:bijunctive}, $\Gamma$ is not bijunctive. So from \Cref{thm:fptdichotomy}, $\MinCSP(\Gamma)$ is W[1]-hard. Finally, according to \Cref{clm:prediction2fpt}, \MinCSPwP$(\Gamma)$ is W[1]-hard.
\end{proof}

\bibliography{reference}

@inproceedings{bousquet2011multicut,
  title={Multicut is FPT},
  author={Bousquet, Nicolas and Daligault, Jean and Thomass{\'e}, St{\'e}phan},
  booktitle={Proceedings of the forty-third annual ACM symposium on Theory of computing},
  pages={459--468},
  year={2011}
}

@inproceedings{marx2011fixed,
  title={Fixed-parameter tractability of multicut parameterized by the size of the cutset},
  author={Marx, D{\'a}niel and Razgon, Igor},
  booktitle={Proceedings of the forty-third annual ACM symposium on Theory of computing},
  pages={469--478},
  year={2011}
}

@article{robertson1995graph,
  title={Graph minors. XIII. The disjoint paths problem},
  author={Robertson, Neil and Seymour, Paul D},
  journal={Journal of combinatorial theory, Series B},
  volume={63},
  number={1},
  pages={65--110},
  year={1995},
  publisher={Elsevier}
}

@inproceedings{anand2025complete,
  title={Min-{CSP}s on Complete Instances},
  author={Anand, Aditya and Lee, Euiwoong and Sharma, Amatya},
  booktitle={Proceedings of the 2025 ACM-SIAM Symposium on Discrete Algorithms (SODA)},
  year={2025},
  organization={SIAM}
}

@inproceedings{razgon2008almost,
  title={Almost 2-sat is fixed-parameter tractable},
  author={Razgon, Igor and O’Sullivan, Barry},
  booktitle={Automata, Languages and Programming: 35th International Colloquium, ICALP 2008, Reykjavik, Iceland, July 7-11, 2008, Proceedings, Part I 35},
  pages={551--562},
  year={2008},
  organization={Springer}
}

@inproceedings{raman2011paths,
  title={Paths, flowers and vertex cover},
  author={Raman, Venkatesh and Ramanujan, MS and Saurabh, Saket},
  booktitle={Algorithms--ESA 2011: 19th Annual European Symposium, Saarbr{\"u}cken, Germany, September 5-9, 2011. Proceedings 19},
  pages={382--393},
  year={2011},
  organization={Springer}
}

@article{cygan2013multiway,
  title={On multiway cut parameterized above lower bounds},
  author={Cygan, Marek and Pilipczuk, Marcin and Pilipczuk, Micha{\l} and Wojtaszczyk, Jakub Onufry},
  journal={ACM Transactions on Computation Theory (TOCT)},
  volume={5},
  number={1},
  pages={1--11},
  year={2013},
  publisher={ACM New York, NY, USA}
}

@inproceedings{narayanaswamy2012lp,
  title={LP can be a cure for parameterized problems},
  author={Narayanaswamy, NS and Raman, Venkatesh and Ramanujan, MS and Saurabh, Saket},
  booktitle={29th International Symposium on Theoretical Aspects of Computer Science (STACS 2012)},
  year={2012},
  organization={Schloss-Dagstuhl-Leibniz Zentrum f{\"u}r Informatik}
}

@article{lokshtanov2022parameterized,
  title={A parameterized approximation scheme for min k-cut},
  author={Lokshtanov, Daniel and Saurabh, Saket and Surianarayanan, Vaishali},
  journal={SIAM Journal on Computing},
  number={0},
  pages={FOCS20--205},
  year={2022},
  publisher={SIAM}
}

@inproceedings{cygan2014minimum,
  title={Minimum bisection is fixed parameter tractable},
  author={Cygan, Marek and Lokshtanov, Daniel and Pilipczuk, Marcin and Pilipczuk, Micha{\l} and Saurabh, Saket},
  booktitle={Proceedings of the forty-sixth annual ACM symposium on Theory of computing},
  pages={323--332},
  year={2014}
}

@inproceedings{bonnet2016fixed,
  title={Fixed-parameter Approximability of Boolean MinCSPs},
  author={Bonnet, Edouard and Egri, L{\'a}szl{\'o} and Marx, D{\'a}niel},
  booktitle={24th European Symposium on Algorithms (ESA 2016)},
  pages={246},
  year={2016}
}

@inproceedings{kim2021solving,
  title={Solving hard cut problems via flow-augmentation},
  author={Kim, Eun Jung and Kratsch, Stefan and Pilipczuk, Marcin and Wahlstr{\"o}m, Magnus},
  booktitle={Proceedings of the 2021 ACM-SIAM Symposium on Discrete Algorithms (SODA)},
  pages={149--168},
  year={2021},
  organization={SIAM}
}

@inproceedings{kim2022directed,
  title={Directed flow-augmentation},
  author={Kim, Eun Jung and Kratsch, Stefan and Pilipczuk, Marcin and Wahlstr{\"o}m, Magnus},
  booktitle={Proceedings of the 54th Annual ACM SIGACT Symposium on Theory of Computing},
  pages={938--947},
  year={2022}
}

@inproceedings{kim2023flow,
  title={Flow-augmentation III: Complexity dichotomy for Boolean CSPs parameterized by the number of unsatisfied constraints},
  author={Kim, Eun Jung and Kratsch, Stefan and Pilipczuk, Marcin and Wahlstr{\"o}m, Magnus},
  booktitle={Proceedings of the 2023 Annual ACM-SIAM Symposium on Discrete Algorithms (SODA)},
  pages={3218--3228},
  year={2023},
  organization={SIAM}
}

@article{chitnis2016designing,
  title={Designing FPT algorithms for cut problems using randomized contractions},
  author={Chitnis, Rajesh and Cygan, Marek and Hajiaghayi, MohammadTaghi and Pilipczuk, Marcin and Pilipczuk, Micha{\l}},
  journal={SIAM Journal on Computing},
  volume={45},
  number={4},
  pages={1171--1229},
  year={2016},
  publisher={SIAM}
}

@inproceedings{naor1995splitters,
  title={Splitters and near-optimal derandomization},
  author={Naor, Moni and Schulman, Leonard J and Srinivasan, Aravind},
  booktitle={Proceedings of IEEE 36th Annual Foundations of Computer Science},
  pages={182--191},
  year={1995},
  organization={IEEE}
}

@inproceedings{kawarabayashi2011minimum,
  title={The minimum k-way cut of bounded size is fixed-parameter tractable},
  author={Kawarabayashi, Ken-ichi and Thorup, Mikkel},
  booktitle={2011 IEEE 52nd Annual Symposium on Foundations of Computer Science},
  pages={160--169},
  year={2011},
  organization={IEEE}
}

@article{marx2009constant,
  title={Constant ratio fixed-parameter approximation of the edge multicut problem},
  author={Marx, D{\'a}niel and Razgon, Igor},
  journal={Information Processing Letters},
  volume={109},
  number={20},
  pages={1161--1166},
  year={2009},
  publisher={Elsevier}
}

@article{lanciano2024survey,
  title={A survey on the densest subgraph problem and its variants},
  author={Lanciano, Tommaso and Miyauchi, Atsushi and Fazzone, Adriano and Bonchi, Francesco},
  journal={ACM Computing Surveys},
  volume={56},
  number={8},
  pages={1--40},
  year={2024},
  publisher={ACM New York, NY}
}

@book{cygan2015parameterized,
  title={Parameterized algorithms},
  author={Cygan, Marek and Fomin, Fedor V and Kowalik, {\L}ukasz and Lokshtanov, Daniel and Marx, D{\'a}niel and Pilipczuk, Marcin and Pilipczuk, Micha{\l} and Saurabh, Saket},
  volume={5},
  number={4},
  year={2015},
  publisher={Springer}
}

@inproceedings{dabrowski2023almost,
  title={Almost consistent systems of linear equations},
  author={Dabrowski, Konrad K and Jonsson, Peter and Ordyniak, Sebastian and Osipov, George and Wahlstr{\"o}m, Magnus},
  booktitle={Proceedings of the 2023 Annual ACM-SIAM Symposium on Discrete Algorithms (SODA)},
  pages={3179--3217},
  year={2023},
  organization={SIAM}
}

@article{fellows2012,
title = {Local search: Is brute-force avoidable?},
journal = {Journal of Computer and System Sciences},
volume = {78},
number = {3},
pages = {707-719},
year = {2012},
note = {In Commemoration of Amir Pnueli},
issn = {0022-0000},
doi = {https://doi.org/10.1016/j.jcss.2011.10.003},
url = {https://www.sciencedirect.com/science/article/pii/S0022000011001097},
author = {Michael R. Fellows and Fedor V. Fomin and Daniel Lokshtanov and Frances Rosamond and Saket Saurabh and Yngve Villanger},
keywords = {Parameterized complexity, Local search, Local treewidth},
abstract = {Many local search algorithms are based on searching in the k-exchange neighborhood. This is the set of solutions that can be obtained from the current solution by exchanging at most k elements. As a rule of thumb, the larger k is, the better are the chances of finding an improved solution. However, for inputs of size n, a naïve brute-force search of the k-exchange neighborhood requires nO(k) time, which is not practical even for very small values of k. We show that for several classes of sparse graphs, including planar graphs, graphs of bounded vertex degree and graphs excluding some fixed graph as a minor, an improved solution in the k-exchange neighborhood for many problems can be found much more efficiently. Our algorithms run in time O(τ(k)⋅nc), where τ is a function depending only on k and c is a constant independent of k and n. We demonstrate the applicability of this approach on a variety of problems including r-Center, Vertex Cover, Odd Cycle Transversal, Max-Cut, and Min-Bisection. In particular, on planar graphs, all our algorithms searching for a k-local improvement run in time O(2O(k)⋅n2), which is polynomial for k=O(logn). We complement these fixed-parameter tractable algorithms for k-local search with parameterized intractability results indicating that brute-force search is unavoidable in more general classes of graphs.}
}

@inproceedings{gaspers2012,
  title={Don't be strict in local search!},
  author={Gaspers, Serge and Kim, Eun Jung and Ordyniak, Sebastian and Saurabh, Saket and Szeider, Stefan},
  booktitle={Proceedings of the AAAI Conference on Artificial Intelligence},
  volume={26},
  number={1},
  pages={486--492},
  year={2012}
}

@article{marx2011s,
title = {Stable assignment with couples: Parameterized complexity and local search},
journal = {Discrete Optimization},
volume = {8},
number = {1},
pages = {25-40},
year = {2011},
note = {Parameterized Complexity of Discrete Optimization},
issn = {1572-5286},
doi = {https://doi.org/10.1016/j.disopt.2010.07.004},
url = {https://www.sciencedirect.com/science/article/pii/S1572528610000538},
author = {Dániel Marx and Ildikó Schlotter},
keywords = {Stable matching, Couples, Local search, Parameterized complexity, Scheduling with job restrictions},
abstract = {We study the Hospitals/Residents with Couples problem, a variant of the classical Stable Marriage problem. This is the extension of the Hospitals/Residents problem where residents are allowed to form pairs and submit joint rankings over hospitals. We use the framework of parameterized complexity, considering the number of couples as a parameter. We also apply a local search approach, and examine the possibilities for giving FPT algorithms applicable in this context. Furthermore, we also investigate the matching problem containing couples that is the simplified version of the Hospitals/Residents with Couples problem modeling the case when no preferences are given.}
}

@article{krokhin2012hardness,
  title={On the hardness of losing weight},
  author={Krokhin, Andrei and Marx, D{\'a}niel},
  journal={ACM Transactions on Algorithms (TALG)},
  volume={8},
  number={2},
  pages={1--18},
  year={2012},
  publisher={ACM New York, NY, USA}
}

@article{szeider2011parameterized,
  title={The parameterized complexity of k-flip local search for SAT and MAX SAT},
  author={Szeider, Stefan},
  journal={Discrete Optimization},
  volume={8},
  number={1},
  pages={139--145},
  year={2011},
  publisher={Elsevier}
}

@article{marx2005parameterized,
  title={Parameterized complexity of constraint satisfaction problems},
  author={Marx, D{\'a}niel},
  journal={computational complexity},
  volume={14},
  pages={153--183},
  year={2005},
  publisher={Springer}
}

\end{document}